\pdfoutput=1
\documentclass[draftcls,onecolumn, 11pt]{IEEEtran} 
\usepackage{amsmath,amssymb,graphicx,amsfonts,subfigure,amsthm}
\usepackage{epstopdf}
\usepackage[T1]{fontenc}
\usepackage[latin9]{inputenc}
\usepackage{times,verbatim,algorithm,algorithmic,cite}

\usepackage{etoolbox}  
\patchcmd{\thebibliography}{\leftmargin\labelwidth}
     {\itemsep 0pt \leftmargin\labelwidth}{}{}

\DeclareMathOperator*{\argmin}{arg\,min}

\newtheorem{theorem}{Theorem}
\newtheorem{lemma}{Lemma}

\begin{document}
\title{Optimum Design for Coexistence Between Matrix Completion Based MIMO Radars and a MIMO Communication System}
%
\author{Bo~Li, ~Athina~P.~Petropulu,  and Wade Trappe,  \thanks{This work is supported by NSF under Grant ECCS-1408437. Parts of this work have been presented at the IEEE International Conference on Acoustics, Speech, and Signal
Processing (ICASSP'15) \cite{LiBo15}.

The authors are with Department of Electrical and Computer Engineering, Rutgers, The State University of New Jersey, Piscataway NJ 08854, USA. (E-mails:  \{paul.bo.li,athinap\}@rutgers.edu; trappe@winlab.rutgers.edu)}}

\maketitle
\begin{abstract}
Recently proposed multiple input multiple output radars based on matrix completion (MIMO-MC)  employ sparse sampling  to reduce the amount of data that need to be forwarded to the radar fusion center, and as such enable savings in communication power and bandwidth.
This paper proposes designs that optimize the sharing of spectrum between a MIMO-MC radar and a
communication system, so that the latter interferes minimally with the former.
First, the communication system transmit covariance matrix is designed to minimize the effective interference power (EIP) to the radar receiver, while maintaining certain average capacity and transmit power for the communication system.
Two approaches are proposed, namely a noncooperative and a cooperative approach, with the latter being applicable when the radar sampling scheme is known at the communication system.
Second, a joint design of the communication transmit covariance matrix and the MIMO-MC radar sampling scheme is proposed, which achieves even further EIP reduction.
\end{abstract}
\begin{keywords}
Collocated MIMO radar, matrix completion, spectrum sharing
\end{keywords}

\section{Introduction}
The operating frequency bands of communication  and radar systems often overlap,  causing one system to exert interference to the other. For example, the high UHF radar systems overlap with GSM communication systems, and the S-band radar systems partially overlap with Long Term Evolution (LTE), and WiMax systems \cite{Web13,Sanders12,Lackpour11,Sodagari12}.
Spectrum sharing is a  new line of work that targets at enabling radar and communication systems to share the spectrum efficiently by minimizing interference effects \cite{Lackpour11,Sodagari12,Babaei13,Deng13,Amuru13,Khawar14}.

This paper investigates the  problem of spectrum sharing between a MIMO communication system and a matrix completion (MC) based colocated MIMO radar (MIMO-MC) system \cite{Sun13,Denis14,Sun14}. MIMO radars transmit orthogonal waveforms from their multiple transmit (TX) antennas, and their receive (RX) antennas forward their measurements  to a fusion center for further processing.
The RX antenna measurements could be samples of the target returns, or could be the outputs of matched filters.
Based on the forwarded data, the fusion center populates a matrix, referred to as the ``data matrix", which is then used  by standard array processing schemes for  target estimation. When the target returns are sampled at the Nyquist rate, and for a relatively small number of targets, the data matrix is low-rank \cite{Sun13}, thus, under certain conditions it can be reconstructed based on a small, uniformly sampled set of its entries.
This observation is the basis of MIMO-MC radars \cite{Sun13,Denis14,Sun14}, in which the RX antennas forward to the fusion center a small number of pseudo-randomly obtained samples of the target returns, or the result of matched filtering with a set of randomly selected transmit waveforms, along with information on the sampling scheme, with each RX antenna partially filling a row of the  data matrix. Subsequently, the full data matrix is recovered using MC techniques. MIMO-MC radars maintain the high resolution of MIMO radars, while requiring significantly fewer data to be communicated to the fusion center, thus enabling savings in communication power and bandwidth. These savings are especially important when the RX antennas are on battery operated nodes, and/or the communication to the fusion center occurs in a wireless fashion. Compared to compressive sensing (CS) based MIMO radars, MIMO-MC radars achieve data reduction while avoiding the basis mismatch issues which were inherent in CS-based approaches \cite{Yu10}.

In this paper, the MIMO-MC radar system is considered as the primary user of the channel, while the MIMO communication system is the secondary user.
{\it First}, for a fixed uniformly random radar sub-sampling scheme, the communication system optimally designs its transmit covariance matrix so that its effective interference power (EIP) exerted to the radar RX node is minimized, while its own average capacity and transmit power are kept at a prescribed level.
In doing so, two approaches are proposed, namely, a cooperative and a noncooperative approach, with the latter being applicable when the communication system has knowledge of the MIMO-MC radar sampling instances. It is shown that when the MIMO-MC radar sampling scheme is known to the communication system, the EIP can be greatly reduced, especially at low sub-sampling rates.
{\it Second}, a joint-design of the radar sampling scheme and the communication system transmit covariance matrix  is proposed, targeting at minimizing the EIP at the radar RX node.
Alternating optimization is employed to solve the optimization problem.
The candidate sampling scheme needs to be such that the resulting data matrix can be completed. Recent work
\cite{Srinadh14} showed that for matrix completion, the sampling locations should correspond to a binary matrix with large spectral gap. Since the spectral gap of a matrix is not affected by column and row permutations, we propose to search for the optimum sampling matrix among matrices which are row and column permutations of an initial  sampling matrix with large spectrum gap.
Even before any design is implemented, the MIMO-MC radar system is expected to be less susceptible to interference than a plain MIMO radar; this is because the interference affects only some entries of the data matrix. As it is shown in the paper, by appropriately designing the communication TX waveforms and/or the radar sampling scheme, the interference can be further reduced.

The paper is organized as follows. Section \ref{sec:sigmodel} introduces the signal model when the MIMO-MC radar and communication systems coexist. The problem of a MIMO communication system sharing the spectrum with a MIMO-MC radar system is studied in Section \ref{sec:schemeI} and \ref{sec:schemeII}. Numerical results, discussions and conclusions are provided in Section \ref{sec:discussions}-\ref{sec:conclusion}.\\
{\it Notation:} $\mathcal{CN}(\mathbf{\mu},\mathbf{\Sigma})$ denotes the circularly symmetric complex Gaussian distribution with mean $\mathbf{\mu}$ and covariance matrix $\mathbf{\Sigma}$. $|\cdot|$ and $\text{Tr}(\cdot) $ denotes the matrix determinant and trace, respectively. The set $\mathbb{N}_{L}^+$ is defined as $\{1,\dots, L\}$. $\mathcal{N}(\mathbf{A})$ and $\mathcal{R}(\mathbf{A})$ denote the null and row spaces of matrix $\mathbf{A}$, respectively. $\mathbf{A}_{i\cdot }$ and $\mathbf{A}_{\cdot j}$ respectively denote the $i$-th row and $j$-th column of matrix $\mathbf{A}$. $[\mathbf{A}]_{i,j}$ denotes the element on the $i$-th row and $j$-th column of matrix $\mathbf{A}$. $x^+$ is defined as $\max(0,x)$.

\section{Background on MIMO-MC Radars}
\label{sec:background}
Consider a colocated MIMO radar system with $M_{t,R}$ TX antennas and $M_{r,R}$ RX antennas, targeting at the estimation of far-field targets. The radar operates in two phases; in the first phase the TX antennas transmit waveforms and the RX antennas receive target returns, while in the second phase, the RX antennas forward their measurements to a fusion center.
The $m$-th, $m\in\mathbb{N}_{M_{t,R}}^+$ antenna transmits a coded waveform containing $L$ symbols $\{s_m(1),\cdots,s_m(L)\}$ of duration $T_R$ each. Suppose that each RX antenna samples the target returns with sampling interval $T_R$, {\it i.e.}, each symbol in the waveform is sampled exactly once. The sampling time instances are given as $\{T_R,\cdots,LT_R\}$.
Following the model in \cite{Sun13,Sun14,Denis14}, the data matrix received at the RX antennas is formulated as
\begin{equation}\label{eqn:sigmodelMIMO}
\mathbf{Y}_R={\gamma\rho}\mathbf{DS}+\mathbf{W}_R,
\end{equation}
where $\gamma$ and $\rho$ respectively denote the path loss corresponding to the range bin of interest, and the radar transmit power; $\mathbf{D}\in\mathbb{C}^{M_{r,R}\times M_{t,R}}$ denotes the target response matrix, which depends on the target reflectivity, angle of arrival and target speed (details can be found in \cite{Sun14}); $\mathbf{S}=[\mathbf{s}(1),\cdots,\mathbf{s}(L)]$, with $\mathbf{s}(l)=[s_1(l),\cdots,s_{M_{t,R}}(l)]^T$ being the sampled waveform matrix. The transmit waveforms are typically orthogonal, thus it holds $\mathbf{SS}^H=\mathbf{I}$ \cite{Sun14}. $\mathbf{Y}_R\triangleq [\mathbf{y}_R(1),\dots,\mathbf{y}_R(L)]$; $\mathbf{W}_R\triangleq [\mathbf{w}_{R}(1),\dots,\mathbf{w}_{R}(L)]$ is the additive noise matrix.

Matrix $\mathbf{D}$ has rank equal to the number of targets thus, it is low-rank if the number of targets is much smaller than $M_{r,R}$ and $M_{t,R}$. Similarly, matrix $\mathbf{DS}$ is low-rank if the number of targets is much smaller than $M_{r,R}$ and $L$.
The RX antennas of the matrix completion based MIMO (MIMO-MC) radar \cite{Sun13,Sun14,Denis14} subsample the target returns and forward the samples, along with the corresponding sampling times to the fusion center, thus partially populating the data matrix. The full data matrix is then completed with matrix completion techniques, and target estimation can be implemented based on the completed matrix via standard array processing schemes \cite{Krim96}.

The partially filled data matrix can be mathematically expressed as follows \cite{Sun13,Sun14}
\begin{equation*}\label{eqn:sigmodelMCI}
\mathbf{\Omega}_I\circ \mathbf{Y}_R=\mathbf{\Omega}_I\circ ({\gamma\rho}\mathbf{DS}+\mathbf{W}_R), \quad\quad \text{(Scheme I)}
\end{equation*}
where $\circ$ denotes Hadamard product and $\mathbf{\Omega}_I$ is a matrix with ``$0$"s or ``$1$"s, with the "$1$"s corresponding to the sampling instances. In the physical implementation, only the entries of $\mathbf{Y}_R$ corresponding to ``$1$"s in $\mathbf{\Omega}_I$ represent obtained samples.
The sub-sampling rate, $p_{I}$, equals $\lVert\mathbf{\Omega}_I\rVert_0/LM_{r,R}$. The above MIMO-MC scheme is referred to in \cite{Sun13,Sun14} as Scheme I.

Alternatively, a random matched filter bank (RMFB) at each RX antenna generates a data matrix which can be expressed as \cite{Denis14}
\begin{equation*}\label{eqn:sigmodelMCII}
\mathbf{\Omega}_{II}\circ (\mathbf{Y}_R\mathbf{S}^H)=\mathbf{\Omega}_{II}\circ ({\gamma\rho}\mathbf{D}+\mathbf{W}_R\mathbf{S}^H), \quad\quad \text{(Scheme II)}
\end{equation*}
where $\mathbf{\Omega}_{II}$ is a sampling matrix with binary entries and dimension $M_{r,R}\times M_{t,R}$. The locations of ``$1$"s at the $m$-th row are the indices of the matched filters that were used at the $m$-th RX antenna, {\it i.e.}, $\xi_m\subset \mathbb{N}_{M_{t,R}}^+$.
The sub-sampling rate $p_{II}$ is defined as $\lVert\mathbf{\Omega}_{II}\rVert_0/M_{t,R}M_{r,R}$. This MIMO-MC radar scheme is referred to in \cite{Denis14} as Scheme II.

Early studies on matrix completion theory suggested that the low-rank matrix reconstruction from partial entries succeeds with high probability if the low-rank matrix satisfies the incoherence property \cite{Candes10}, and the entries are sampled uniformly at random.
However, recent works \cite{Srinadh14} showed that, regarding the sampling of elements, it is sufficient that the sampling matrix has large spectral gap ({\it i.e.}, large gap between the largest and second largest singular values).
In \cite{Sun13,Denis14,Sun14} that the matrix $\mathbf{DS}$ exhibits low coherence while the sampling of its elements was a result of uniformly random sampling at the RX antennas.

\section{System Model}
\label{sec:sigmodel}
Consider a MIMO communication system  which coexists with a MIMO-MC radar system
as shown in Fig. \ref{fig:diagram}, sharing the same carrier frequency.
The MIMO-MC radar operates in two phases, {\it i.e.}, in Phase 1 the RX antennas obtain measurements of the  target returns, and in Phase 2, the RX antennas forward the obtained samples to a fusion center. The communication system interferes with the radar system during both phases. In the following, we will address spectrum sharing during the first phase only.
The interference during the second phase can be viewed as the interference between two communication systems, and addressing this problem has been covered in the literature \cite{Zhang08,Zhang10}.

In the following, Scheme I is used to illustrate the system model. Suppose that the two systems have the same symbol rate and are synchronized in sampling time (see Section \ref{sec:discussions} for the mismatched case). We do not assume perfect carrier phase synchronization between the two systems. The data matrix corresponding to the radar system and the received matrix at the communication RX antennas during $L$ symbol durations can be respectively expressed as
\begin{equation}\label{eqn:sigmodelmatR} 
\begin{aligned}
\mathbf{\Omega}_{I}\circ \mathbf{Y}_R={\mathbf{\Omega}_{I}}\circ ({\gamma\rho}\mathbf{D}\mathbf{S}+\mathbf{G}_2\mathbf{X}\mathbf{\Lambda}_2+\mathbf{W}_R),
\end{aligned}
\end{equation}
\begin{equation}
\begin{aligned}
\mathbf{Y}_C=\mathbf{H}\mathbf{X}+{\rho}\mathbf{G}_1\mathbf{S}\mathbf{\Lambda}_1+\mathbf{W}_C,
\end{aligned}
\end{equation}
where
\begin{itemize} \itemsep -1pt
\item $\mathbf{Y}_R$, $\rho,\mathbf{D}$, $\mathbf{S}$, $\mathbf{W}_R$, and $\mathbf{\Omega}_I$ are defined in Section \ref{sec:background}.
\item $\mathbf{X}\triangleq [\mathbf{x}(1),\dots,\mathbf{x}(L)]$; $\mathbf{Y}_C\triangleq [\mathbf{y}_C(1),\dots,\mathbf{y}_C(L)]$; $\mathbf{W}_{C}\triangleq [\mathbf{w}_{C}(1),\dots,\mathbf{w}_{C}(L)]$.
\item $\mathbf{y}_{C}(l)$ and $\mathbf{w}_{C}(l)$ respectively denote the signal and the additive noise at the radar/communication RX antennas sampled at the $l$-th sampling time. It is assumed that $\mathbf{w}_{C}(l)\sim \mathcal{CN}(0,\sigma_{C}^2\mathbf{I})$ and $\mathbf{w}_{R}(l)\sim \mathcal{CN}(0,\sigma_{R}^2\mathbf{I})$.
\item $\mathbf{H}\in\mathbb{C}^{M_{r,C}\times M_{t,C}}$ denotes the communication channel, where $M_{r,C}$ and $M_{t,C}$ denote respectively  the number of RX and TX antennas of the communication system \cite{Zhang08}; $\mathbf{G}_1\in\mathbb{C}^{M_{r,C}\times M_{t,R}}$ denotes the interference channel from the radar TX antennas to the communication system RX antennas \cite{Sodagari12,Babaei13,Khawar14}; $\mathbf{G}_2\in\mathbb{C}^{M_{r,R}\times M_{t,C}}$ denotes the interference channel from the communication TX antennas to the radar RX antennas. It is assumed that the channels remain the same over $L$ symbol durations.
\item $\mathbf{s}(l)$ and $\mathbf{x}(l)$ respectively denote the transmit vector at the radar and the communication TX antennas during the $l$-th symbol duration. The rows of $\mathbf{X}$ are codewords from the code-book of the communication system.
\item $\mathbf{\Lambda}_1$ and $\mathbf{\Lambda}_2$ are diagonal matrices. The $l$-th diagonal entry of $\mathbf{\Lambda}_1$, {\it i.e.}, $e^{j\alpha_{1l}}$, denotes the random phase offset between the MIMO-MC radar carrier and the communication receiver reference carrier at the $l$-th sampling time. The $l$-th diagonal entry of $\mathbf{\Lambda}_2$, {\it i.e.}, $e^{j\alpha_{2l}}$, denotes the random phase offset between the communication transmitter carrier and the MIMO-MC radar reference carrier at the $l$-th sampling time. The phase offsets result from the random phase jitters of the radar oscillator and the oscillator at the communication receiver Phase-Locked Loops. In the literature \cite{Gardner05,Poore01,Mudumbai07}, the phase jitter $\alpha(t)$ is modeled as a zero-mean Gaussian process. In this paper, we model $\{\alpha_{1l}\}_{l=1}^L$ as a sequence of zero-mean Gaussian random variables with variance $\sigma_{\alpha}^2$. Modern CMOS oscillators exhibit very low phase noise, e.g., $-94$ dB below the carrier power per Hz ({\it i.e.}, $-94$dBc/Hz) at an offset of $2\pi\times 1$ MHz, which yields phase jitter variance $\sigma_{\alpha}^2\approx 2.5\times10^{-3}$ \cite{Razavi96}.

\end{itemize}

It is assumed that the MIMO channels $\mathbf{H}$, $\mathbf{G}_1$ and $\mathbf{G}_2$ are perfectly known at the communication TX antennas.
In practice, the channel state information can be obtained through the transmission of pilot signals \cite{Sodagari12,cpc09}.
Based on knowledge of radar waveforms and $\mathbf{G}_1$, the communication system can reject some interference due to the radar via subtraction. However, due to the high power of the radar \cite{Sanders12} and the unknown phase offset, there will still be interference in the communication received signal, {\it i.e.},
\begin{equation*}
{\rho}\mathbf{G}_1\mathbf{S}(\mathbf{\Lambda}_1-\mathbf{I})\approx {\rho}\mathbf{G}_1\mathbf{S}\mathbf{\Lambda}_{\alpha},
\end{equation*}
where $\mathbf{\Lambda}_{\alpha}=\text{diag}(j\alpha_{11},\cdots,j\alpha_{1L})$, and the approximation is based on the fact that $\{\alpha_{1l}\}_{l=1}^L$ are small. The signal at the communication receiver after interference cancellation equals
\begin{equation} \label{eqn:sigmodelmatC}
\tilde{\mathbf{Y}}_C = \mathbf{H}\mathbf{X}+{\rho}\mathbf{G}_1\mathbf{S}\mathbf{\Lambda}_{\alpha}+\mathbf{W}_C.
\end{equation}
We observe that the residual interference is not circularly symmetric. The communication channel capacity is achieved by non-circularly symmetric Gaussian codewords, whose covariance and complementary covariance matrix are required to be designed simultaneously \cite{Taubock12}.
Here we consider the circularly symmetric complex Gaussian codewords $\mathbf{x}(l)\sim \mathcal{CN}(0,\mathbf{R}_{xl})$, which achieve a lower bound of the channel capacity. The design complexity is reduced since we only need to design the transmit covariance matrix $\mathbf{R}_{xl}$.

The communication system aims at minimizing its interference to the MIMO-MC radar, while maintaining its average capacity over $L$ symbol durations, by adapting its transmit resources in both time and spatial domain. In the following two sections, the spectrum sharing problem is formulated for both Schemes I and II.

\section{Spectrum Sharing with Scheme I Radars} \label{sec:schemeI}
In this section, we design the communication transmit waveforms, and in particular their covariance matrix, so that we minimize the interference power at the Scheme I radar RX node, while satisfying the communication rate and power constraints of the communication system.
The total transmit power of the communication TX antennas equals
$$
\begin{aligned}
\mathbb{E}\{\text{Tr}(\mathbf{XX}^H)\}=\mathbb{E}\left\{\text{Tr}\left(\sum_{l=1}^L \mathbf{x}(l)\mathbf{x}^H(l)\right)\right\} =\sum_{l=1}^L\text{Tr}(\mathbf{R}_{xl}),
\end{aligned}
$$
where $\mathbf{R}_{xl}\triangleq \mathbb{E}\{\mathbf{x}(l)\mathbf{x}^H(l)\}$.

\begin{figure}
  \centering
  \includegraphics[width=8cm]{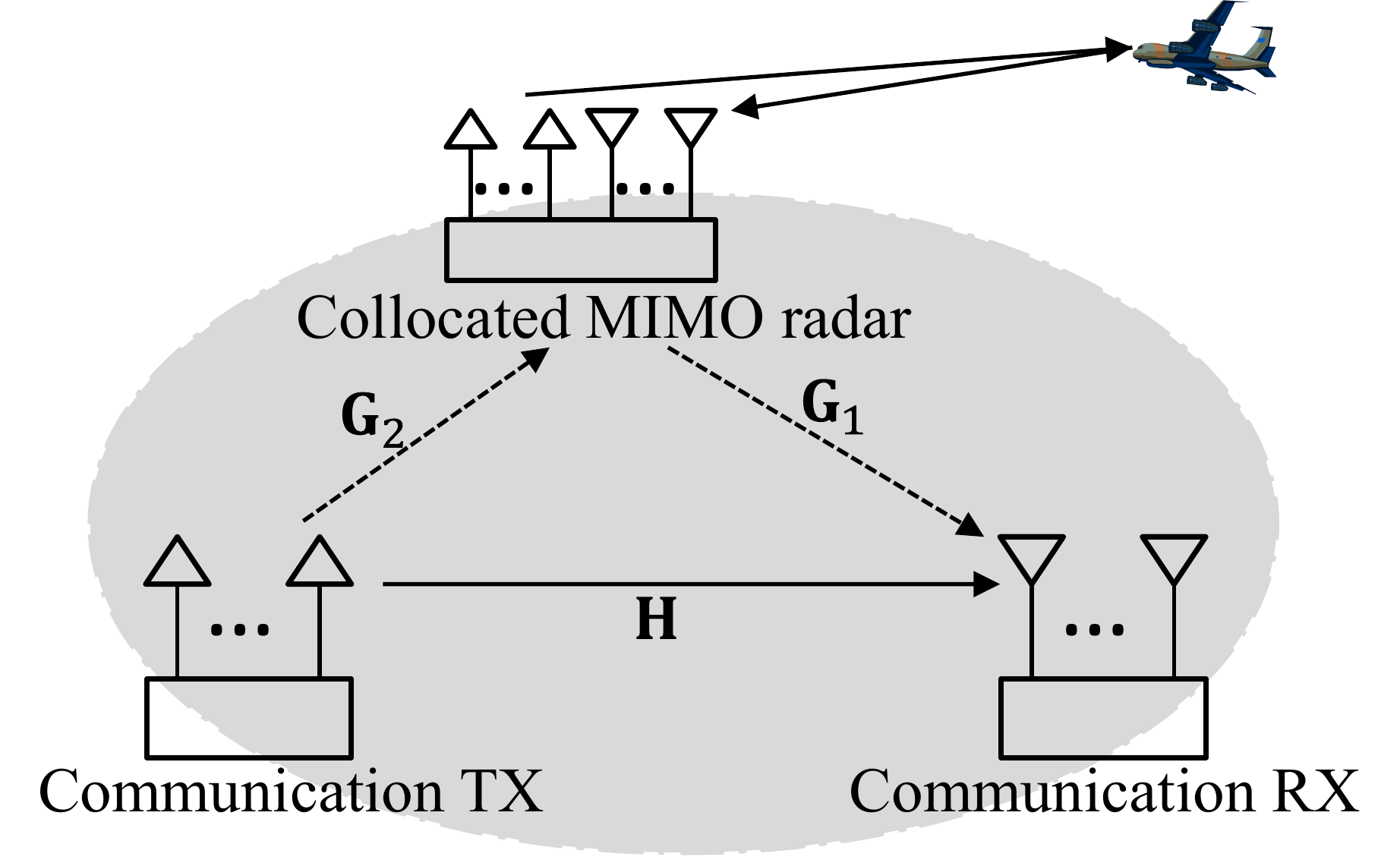}\\
  \vspace{-2mm}
  \caption{A MIMO communication system sharing spectrum with a colocated MIMO radar system}\label{fig:diagram}
\vspace{-2mm}
\end{figure}

According to (\ref{eqn:sigmodelmatR}), the total interference power (TIP) exerted at the radar RX antennas equals
\begin{equation}\label{eqn:interfarrive}
\begin{aligned}
\text{TIP}& \triangleq \mathbb{E}\{\text{Tr}(\mathbf{G}_2\mathbf{X}\mathbf{\Lambda}_2\mathbf{\Lambda}_2^H\mathbf{X}^H\mathbf{G}^H_2)\} \\
&=\sum\nolimits_{l=1}^L \text{Tr}\left( \mathbf{G}_2\mathbf{R}_{xl}\mathbf{G}_2^H \right).
\end{aligned}
\end{equation}
Since the radar only forwards part of $\mathbf{Y}_R$  to the fusion center, only the term ${\bf \Omega}_{I}\circ(\mathbf{G}_2\mathbf{X}\mathbf{\Lambda}_2)$
 represents effective interference to the radar system.
Based on this observation, we define the {\em effective interference power} (EIP) at the radar RX node as
\begin{small}
\begin{equation}\label{eqn:interfI}
\begin{aligned}
&\text{EIP}_{I} \triangleq \mathbb{E}\left\{\text{Tr}\left(\mathbf{\Omega}_{I}\circ(\mathbf{G}_2\mathbf{X}\mathbf{\Lambda}_2) \left(\mathbf{\Omega}_{I}\circ( \mathbf{G}_2\mathbf{X}\mathbf{\Lambda}_2)\right)^H \right)\right\}\\
=&\mathbb{E}\left\{\text{Tr}\left([\mathbf{G}_{21}\mathbf{x}(1)\dots\mathbf{G}_{2L}\mathbf{x}(L)]\mathbf{\Lambda}_2\mathbf{\Lambda}_2^H [\mathbf{G}_{21}\mathbf{x}(1)\dots \mathbf{G}_{2L}\mathbf{x}(L)]^H \right)\right\}\\
=&\mathbb{E}\left\{\text{Tr}\left(\sum_{l=1}^L \mathbf{G}_{2l}\mathbf{x}(l)\mathbf{x}^H(l)\mathbf{G}^H_{2l}]  \right)\right\} \\
=&\sum_{l=1}^L \text{Tr}\left( \mathbf{G}_{2l}\mathbf{R}_{xl}\mathbf{G}^H_{2l} \right)
=\sum_{l=1}^L \text{Tr}\left( \mathbf{\Delta}_l\mathbf{G}_{2}\mathbf{R}_{xl}\mathbf{G}^H_{2} \right),
\end{aligned}
\end{equation}
\end{small}
where $\mathbf{G}_{2l}\triangleq \mathbf{\Delta}_l\mathbf{G}_2$, with  $\mathbf{\Delta}_l$ being a diagonal matrix whose diagonal is $\mathbf{\Omega}_{\cdot l}$, i.e., $\mathbf{\Delta}_l=\text{diag}(\mathbf{\Omega}_{\cdot l})$.
We note that the EIP at sampling time $l$ contains the interference corresponding to ``$1$"s in $\mathbf{\Omega}_{\cdot l}$ only. It is equivalent to say that the effective interference channel during the $l$-th symbol duration is $\mathbf{G}_{2l}$.

In the coexistence model of (\ref{eqn:sigmodelmatR}) and (\ref{eqn:sigmodelmatC}), both the effective interference channel $\mathbf{G}_{2l}$ and interference power at the communication receiver $\mathbf{R}_{\text{int}l}\triangleq \rho^2\sigma_{\alpha}^2\mathbf{G}_1\mathbf{s}(l)\mathbf{s}^H(l)\mathbf{G}^H_1$ vary between sampling times. The communication system needs to use different covariance matrices for each symbol, {\it i.e.}, $\mathbf{R}_{xl}$, in order to match the variation of $\mathbf{G}_{2l}$ and $\mathbf{R}_{\text{int}l}$ and minimize the effective interference to the radar system while maintaining the capacity. The channel can be equivalently viewed as a fast fading channel with perfect channel state information at both the transmitter and receiver \cite{Goldsmith03,Tse05}. Similar to the definition of ergodic capacity \cite{Goldsmith03}, the achieved capacity is the average over $L$ symbols, {\it i.e.},
\begin{equation}\label{eqn:avgcapacity}
\text{AC}(\{\mathbf{R}_{xl}\})\triangleq \frac{1}{L}\sum\nolimits_{l=1}^L \log_2 \left|\mathbf{I} + \mathbf{R}_{wl}^{-1}\mathbf{HR}_{xl}\mathbf{H}^H \right|,
\end{equation}
where $\{\mathbf{R}_{xl}\}$ denotes the set of all $\mathbf{R}_{xl}$'s and $\mathbf{R}_{wl}\triangleq \mathbf{R}_{\text{int}l} + \sigma_C^2\mathbf{I}$ for all $l\in\mathbb{N}_L^+$.

In the following we will consider three spectrum sharing approaches between the communication and Scheme I radar, namely, a noncooperative, a cooperative and a joint design approach.
In the cooperative and joint design approaches, the communication system knows the radar sampling scheme.
The performance improvement is expected to be higher under higher level of cooperation at the cost of reduced
security and increased coordination complexity.
\subsection{Noncooperative Spectrum Sharing} \label{sec:noncooperI}
In the {\em noncooperative approach}, the communication system has no knowledge of $\mathbf{\Omega}_{I}$. Therefore, it cannot obtain the expression of $\text{EIP}_{I}$ of (\ref{eqn:interfI}). In this case, the communication system will design its covariance matrix to minimize the TIP in (\ref{eqn:interfarrive}) as follows:
\begin{subequations}
\begin{align}
({\bf P}_0)\quad \min_{\{\mathbf{R}_{xl}\}\succeq 0} \; \text{TIP} (\{\mathbf{R}_{xl}\}) 
\;\text{s.t. }\, & \sum\nolimits_{l=1}^L \text{Tr}\left(\mathbf{R}_{xl}\right) \le P_t \label{eqn:constrPt} \\
& \text{AC}(\{\mathbf{R}_{xl}\}) \ge C, \label{eqn:constrSR}
\end{align}
\end{subequations}
where the constraint of (\ref{eqn:constrPt}) restricts the total  transmit power at the communication TX antennas to be no larger than $P_t$. The constraint of (\ref{eqn:constrSR}) restricts the communication average capacity during $L$ symbol durations to be at least $C$, in order to provide reliable communication and avoid service outage. $\{\mathbf{R}_{xl}\}\succeq 0$ imposes the positive semi-definiteness on the solution. Let us denote by $\mathbb{X}_0$ the feasible set determined by the above three constraints.
Problem (${\bf P}_0$) is convex.

The power constraints of (\ref{eqn:constrPt}) and (\ref{eqn:constrSR}) are jointly applied for all $L$ symbol durations. The extension to constraints individually applied for each symbol duration is straightforward because the convexity of the problem is preserved.
Problem (${\bf P}_0$) is a variant of the Problem ($\mathbf{P}_6$) in \cite{Zhang08} for multichannel spectrum sharing in cognitive radio network.
\subsection{Cooperative Spectrum Sharing} \label{sec:cooperI}
In the {\em cooperative approach}, the MIMO-MC radar shares its sampling scheme $\mathbf{\Omega}_{I}$ with the communication system. Now, the spectrum sharing problem can be formulated as
\begin{equation}
\begin{aligned}
({\bf P}_1)\quad \min \; \text{EIP}_{I} (\{\mathbf{R}_{xl}\}) 
\;\text{s.t. }\, & \{\mathbf{R}_{xl}\} \in \mathbb{X}_0.
\end{aligned}
\end{equation}
Problem ($\mathbf{P}_1$) has exactly the same constraints as (${\bf P}_0$).

The Lagrangian of ($\mathbf{P}_1$) can be written as
\begin{equation*}
\begin{aligned}
\mathcal{L}(\{\mathbf{R}_{xl}\},\lambda_1,\lambda_2)= &\text{EIP}_{I} (\{\mathbf{R}_{xl}\})  + \lambda_2\left(C-\text{AC}(\{\mathbf{R}_{xl}\})\right)\\
+& \lambda_1\left(\sum\nolimits_{l=1}^L \text{Tr}\left(\mathbf{R}_{xl}\right) - P_t\right),
\end{aligned}
\end{equation*}
where $\lambda_1\ge 0$ is the dual variable associated with the transmit power constraint, and $\lambda_2 \ge 0$ is the average capacity constraint.
The dual problem of ($\mathbf{P}_1$) is given as
\begin{equation*}
({\bf P}_1\text{-}{\bf D})\; \max_{\lambda_1,\lambda_2 \ge 0} g(\lambda_1,\lambda_2),
\end{equation*}
where $g(\lambda_1,\lambda_2)$ is the dual function defined as
\begin{equation*}
g(\lambda_1,\lambda_2)=\inf_{\{\mathbf{R}_{xl}\}\succeq 0} \mathcal{L}(\{\mathbf{R}_{xl}\},\lambda_1,\lambda_2).
\end{equation*}
The domain of the dual function, {\it i.e.}, $\mathrm{dom}\,g$, is $\lambda_1,\lambda_2 \ge 0$ such that $g(\lambda_1,\lambda_2)> -\infty$. It is also called {\it dual feasible} if $(\lambda_1,\lambda_2)\in\mathrm{dom}\,g$.
It is interesting to note that $g(\lambda_1,\lambda_2)$ can be obtained by solving $L$ independent subproblems, each of which can be written as follows
\begin{equation}
\begin{aligned}
({\bf P}_1\text{-}\mathrm{sub})\; \min_{\mathbf{R}_{xl}\succeq 0} &\text{Tr}\left(\left(\mathbf{G}_2^H\mathbf{\Delta}_l\mathbf{G}_2 +\lambda_1 \mathbf{I}\right)\mathbf{R}_{xl}\right) \\
&- \lambda_2 \log_2 \left|\mathbf{I} + \mathbf{R}_{wl}^{-1}\mathbf{HR}_{xl}\mathbf{H}^H \right|.
\end{aligned}
\end{equation}
Before giving the solution of (${\bf P}_1\text{-}\mathrm{sub}$), let us first state some observations.

\noindent\emph{Observation 1)} If there is an optimal point (and it has to be unique), the average capacity constraint is active at the optimal point. This means that the achieved capacity is always $C$ and $\lambda_2>0$. To show this, let us assume that the optimal point $\{\mathbf{R}_{xl}^*\}$ achieves $\text{AC}(\{\mathbf{R}_{xl}^*\})>C$. Then we can always shrink $\{\mathbf{R}_{xl}^*\}$ until the average capacity reduces to $C$ while the objective will also be reduced. Thus, we end up with a contradiction.

\noindent\emph{Observation 2)} $\left(\mathbf{G}_2^H\mathbf{\Delta}_l\mathbf{G}_2 +\lambda_1 \mathbf{I}\right)$ is positive definite for all $l\in\mathbb{N}_L^+$. This can be shown via contradiction. Suppose that there exists $l$ such that $\mathbf{G}_2^H\mathbf{\Delta}_l\mathbf{G}_2 +\lambda_1 \mathbf{I}$ is singular. Then it must hold that $\mathbf{G}_2^H\mathbf{\Delta}_l\mathbf{G}_2$ is singular and $\lambda_1=0$. Therefore, we can always find a nonzero vector $\mathbf{v}$ lying in the null space of $\mathbf{G}_2^H\mathbf{\Delta}_l\mathbf{G}_2$. At the same time, it holds that $\mathbf{R}_{wl}^{-1/2}\mathbf{Hv}\neq 0$ with very high probability, because $\mathbf{H}$ is a realization of the random channel. If we choose $\mathbf{R}_{xl}=\alpha\mathbf{vv}^H$ and $\alpha \rightarrow \infty$, the Lagrangian $ \mathcal{L}(\{\mathbf{R}_{xl}\},0,\lambda_2)$ will be unbounded from below, which indicates that $\lambda_1=0$ is not dual feasible. This means that $\lambda_1$ is strictly larger than $0$ if $\mathbf{G}_2^H\mathbf{\Delta}_l\mathbf{G}_2$ is singular for any $l$. The claim is proved.

Based on the above observations, we have the following lemma.
\begin{lemma}[\cite{Zhang10,Kim11}] \label{lemma:closedform}
For given feasible dual variables $\lambda_1,\lambda_2 \ge 0$, the optimal solution of $({\bf P}_1\text{-}\mathrm{sub})$ is given by
\begin{equation} \label{eqn:closedform}
\mathbf{R}_{xl}^*(\lambda_1,\lambda_2) = \mathbf{\Phi}_l^{-1/2}\mathbf{U}_l\mathbf{\Sigma}_l \mathbf{U}_l^H \mathbf{\Phi}_l^{-1/2},
\end{equation}
where $\mathbf{\Phi}_l\triangleq \mathbf{G}_2^H\mathbf{\Delta}_l\mathbf{G}_2 +\lambda_1 \mathbf{I}$; $\mathbf{U}_l$ is the right singular matrix of $\tilde{\mathbf{H}}_l\triangleq \mathbf{R}_{wl}^{-1/2}\mathbf{H}\mathbf{\Phi}_l^{-1/2}$;  $\mathbf{\Sigma}_l=\mathrm{diag}(\beta_{l1},\dots,\beta_{lr})$ with $\beta_{li}=(\lambda_2-1/\sigma_{li}^2)^+$, $r$ and $\sigma_{li},i=1,\dots,r$, respectively being the rank and the positive singular vales of $\tilde{\mathbf{H}}_l$. It also holds that
\begin{equation}
\log_2 \left|\mathbf{I} + \mathbf{R}_{wl}^{-1}\mathbf{HR}_{xl}^*\mathbf{H}^H \right| =\sum_{i=1}^r\left(\log(\lambda_2\sigma_{li}^2)\right)^+ .
\end{equation}
\end{lemma}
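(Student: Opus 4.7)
\textbf{Proof proposal for Lemma~\ref{lemma:closedform}.} The plan is to reduce $({\bf P}_1\text{-}\mathrm{sub})$ to a standard MIMO capacity-type problem via a whitening change of variables, and then invoke the classical water-filling argument. By Observation~2, $\mathbf{\Phi}_l=\mathbf{G}_2^H\mathbf{\Delta}_l\mathbf{G}_2+\lambda_1\mathbf{I}$ is positive definite for every $l\in\mathbb{N}_L^+$ on the dual feasible set, so $\mathbf{\Phi}_l^{1/2}$ and $\mathbf{\Phi}_l^{-1/2}$ are well defined. I would introduce $\tilde{\mathbf{R}}_{xl}\triangleq\mathbf{\Phi}_l^{1/2}\mathbf{R}_{xl}\mathbf{\Phi}_l^{1/2}$, which is a bijection on the positive semi-definite cone. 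Under this substitution the linear term becomes $\mathrm{Tr}(\mathbf{\Phi}_l\mathbf{R}_{xl})=\mathrm{Tr}(\tilde{\mathbf{R}}_{xl})$ and, using Sylvester's determinant identity, the log-det term becomes
\begin{equation*}
\log_2\bigl|\mathbf{I}+\mathbf{R}_{wl}^{-1}\mathbf{H}\mathbf{R}_{xl}\mathbf{H}^H\bigr|
=\log_2\bigl|\mathbf{I}+\tilde{\mathbf{H}}_l\,\tilde{\mathbf{R}}_{xl}\,\tilde{\mathbf{H}}_l^H\bigr|,
\end{equation*}
with $\tilde{\mathbf{H}}_l=\mathbf{R}_{wl}^{-1/2}\mathbf{H}\mathbf{\Phi}_l^{-1/2}$ exactly as in the lemma statement.

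Next I would diagonalize using the SVD $\tilde{\mathbf{H}}_l=\mathbf{V}_l\,\mathbf{S}_l\,\mathbf{U}_l^H$, where $\mathbf{S}_l$ carries the singular values $\sigma_{li}$. Writing $\tilde{\mathbf{R}}_{xl}=\mathbf{U}_l\mathbf{\Sigma}_l\mathbf{U}_l^H$ in the basis of $\mathbf{U}_l$ (with $\mathbf{\Sigma}_l$ a general Hermitian PSD matrix, not yet diagonal), the transformed problem becomes
\begin{equation*}
\min_{\mathbf{\Sigma}_l\succeq 0}\ \mathrm{Tr}(\mathbf{\Sigma}_l)-\lambda_2\log_2\bigl|\mathbf{I}+\mathbf{S}_l\mathbf{\Sigma}_l\mathbf{S}_l^H\bigr|.
\end{equation*}
The key step is to argue that the optimal $\mathbf{\Sigma}_l$ is diagonal; this follows from Hadamard's inequality applied to $\mathbf{I}+\mathbf{S}_l\mathbf{\Sigma}_l\mathbf{S}_l^H$, since replacing $\mathbf{\Sigma}_l$ by $\mathrm{diag}(\mathbf{\Sigma}_l)$ leaves $\mathrm{Tr}(\mathbf{\Sigma}_l)$ unchanged while not decreasing the determinant. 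This alignment is the main obstacle, but it is the standard argument that justifies choosing $\mathbf{U}_l$ as the eigenbasis of the optimal covariance.

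Once diagonality is established, the problem decouples into $r$ scalar convex problems $\min_{\beta_{li}\ge 0}\beta_{li}-\lambda_2\log_2(1+\sigma_{li}^2\beta_{li})$. Setting the derivative to zero and projecting onto $[0,\infty)$ yields the water-filling solution $\beta_{li}=\bigl(\lambda_2-1/\sigma_{li}^2\bigr)^+$, and modes with $\beta_{li}=0$ contribute nothing to the rate, giving the stated closed-form expression for the achieved log-det value. Undoing the change of variables, $\mathbf{R}_{xl}^*=\mathbf{\Phi}_l^{-1/2}\mathbf{U}_l\mathbf{\Sigma}_l\mathbf{U}_l^H\mathbf{\Phi}_l^{-1/2}$, which is precisely \eqref{eqn:closedform}. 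The only delicate points are verifying that $\mathbf{\Phi}_l\succ 0$ throughout (handled by Observation~2) and the Hadamard-based eigenvector alignment; the remainder is routine scalar water-filling.
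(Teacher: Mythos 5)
The paper does not actually prove this lemma --- it is imported directly from \cite{Zhang10,Kim11} --- so there is no in-paper argument to compare against. Your derivation (whitening by $\mathbf{\Phi}_l^{1/2}$, which is legitimate because Observation~2 gives $\mathbf{\Phi}_l\succ 0$ on the dual-feasible set; Sylvester's identity to pass to $\tilde{\mathbf{H}}_l$; Hadamard's inequality to show the optimal covariance is diagonal in the right-singular basis; then scalar water-filling) is exactly the standard argument behind the cited results, and it is correct. Two small points if you wanted to make it airtight: (i) since $\mathbf{U}_l$ is the economy-size right singular matrix, you should add one line observing that any component of $\tilde{\mathbf{R}}_{xl}$ supported on the null space of $\tilde{\mathbf{H}}_l$ increases the trace without increasing the log-det and is therefore zero at the optimum, so the parametrization $\tilde{\mathbf{R}}_{xl}=\mathbf{U}_l\mathbf{\Sigma}_l\mathbf{U}_l^H$ is without loss of optimality; (ii) stationarity of $\beta-\lambda_2\log_2(1+\sigma^2\beta)$ literally gives the water level $\lambda_2/\ln 2$, so the stated $\beta_{li}=(\lambda_2-1/\sigma_{li}^2)^+$ implicitly absorbs the $\ln 2$ into $\lambda_2$ --- a base-of-logarithm looseness already present in the lemma as written, whose second display switches from $\log_2$ on the left to $\log$ on the right.
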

Based on Lemma \ref{lemma:closedform}, the solution of $({\bf P}_1)$ can be obtained by finding the optimal dual variables $\lambda_1^*,\lambda_2^*$. The cooperative spectrum sharing problem $({\bf P}_1)$ can be solved via the procedure outlined in Algorithm \ref{alg:P1}.
\begin{algorithm}
\caption{Cooperative Spectrum Sharing $(\mathbf{P}_1)$}
\label{alg:P1}
\begin{algorithmic}[1]
\STATE \textbf{Input:} $\mathbf{H},\mathbf{G}_1,\mathbf{G}_2,\mathbf{\Omega}_I,P_t,C,\sigma_C^2$, $\lambda_1\in[0,\hat{\lambda}_1]$
\STATE \textbf{Initialization:} $\lambda_{l}=0,\lambda_{u}=\hat{\lambda}_1$
\REPEAT
\STATE $\lambda_1\leftarrow (\lambda_{l}+\lambda_{u})/2$
\STATE \begin{flushleft}Find the minimum~$\lambda_2~\ge~0$~such~that $\sum_{l=1}^L\sum_{i=1}^r \bigl(\log(\lambda_2\sigma_{li}^2)\bigr)^+ \ge LC$. Use the obtained $\lambda_2$ to calculate $\mathbf{R}_{xl}^*(\lambda_1,\lambda_2)$ according to (\ref{eqn:closedform}). \end{flushleft}
\IF {$\sum_{l=1}^L\text{Tr}(\mathbf{R}_{xl}^*(\lambda_1,\lambda2))< P_t$}
\STATE $\lambda_u=\lambda_1$
\ELSE
\STATE $\lambda_l=\lambda_1$
\ENDIF
\UNTIL{$\lambda_u-\lambda_l \le \delta_{\lambda}$, where $\delta_{\lambda}$ is a predefined threshold.}
\STATE $\lambda_1^*=\lambda_1,\lambda_2^*=\lambda_2$;
\STATE \textbf{Output:} $\mathbf{R}_{xl}^*=\mathbf{R}_{xl}^*(\lambda_1^*,\lambda_2^*)$
\end{algorithmic}
\end{algorithm}

Based on Lemma \ref{lemma:closedform}, the coexistence model can be equivalently viewed as a fast fading MIMO channel $\tilde{\mathbf{H}}_l$. The covariance of the waveforms transmitted on $\tilde{\mathbf{H}}_l$ is $\tilde{\mathbf{R}}_{xl}\triangleq \mathbf{\Phi}_l^{1/2}\mathbf{R}_{xl} \mathbf{\Phi}_l^{1/2}$. It is well-known that the optimum $\tilde{\mathbf{R}}_{xl}$ equals $\mathbf{U}_l\mathbf{\Sigma}_l \mathbf{U}_l^H $ with power allocation obtained by the water-filling algorithm \cite{Goldsmith03}. The achieved capacity is the average over all realization of the channel, {\it i.e.}, $\{\tilde{\mathbf{H}}_l\}_{l=1}^{L}$. This justifies the definition of average capacity in (\ref{eqn:avgcapacity}). Lemma \ref{lemma:closedform} shows that the communication transmitter will allocate more power to directions determined by the left singular vectors of $\mathbf{H}$ corresponding to larger eigenvalues and by the eigenvectors of $\mathbf{\Phi}_l$ corresponding to smaller eigenvalues. In other words, the communication will transmit more power in directions that convey larger signal at the communication receivers and smaller interferences to the MIMO-MC radars.

The following theorem compares the minimum EIP achieved by the noncooperative and cooperative approaches under the same communication constraints.
\begin{theorem} \label{thm:MIMO-MCI}
For any $P_t$ and $C$, the EIP$_{I}$ achieved by the cooperative
approaches in ($\mathbf{P}_1$) is less or equal than that of  the noncooperative approach via ($\mathbf{P}_0$).
\end{theorem}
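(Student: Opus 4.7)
The plan is to observe that both ($\mathbf{P}_0$) and ($\mathbf{P}_1$) are optimization problems over the \emph{same} feasible set $\mathbb{X}_0$ (identical power and average-capacity constraints). They differ only in the objective: ($\mathbf{P}_0$) minimizes $\text{TIP}$, while ($\mathbf{P}_1$) minimizes $\text{EIP}_I$ directly. Once this is isolated, the result reduces to a one-line optimality argument.

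First, I would let $\{\mathbf{R}_{xl}^{(0)}\}$ denote an optimal solution of ($\mathbf{P}_0$) and $\{\mathbf{R}_{xl}^{(1)}\}$ denote an optimal solution of ($\mathbf{P}_1$). By construction, both points lie in $\mathbb{X}_0$. Since $\{\mathbf{R}_{xl}^{(1)}\}$ minimizes $\text{EIP}_I$ over $\mathbb{X}_0$ and $\{\mathbf{R}_{xl}^{(0)}\}\in\mathbb{X}_0$, we immediately obtain
\begin{equation*}
\text{EIP}_I(\{\mathbf{R}_{xl}^{(1)}\}) \;\le\; \text{EIP}_I(\{\mathbf{R}_{xl}^{(0)}\}),
\end{equation*}
which is exactly the claim of the theorem.

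For completeness, I would also record the structural reason why the noncooperative approach is reasonable at all: for any feasible covariance set, since $\mathbf{\Delta}_l=\mathrm{diag}(\mathbf{\Omega}_{\cdot l})$ has $0/1$ entries and $\mathbf{G}_2\mathbf{R}_{xl}\mathbf{G}_2^H\succeq 0$ has nonnegative diagonal, one has
\begin{equation*}
\text{EIP}_I(\{\mathbf{R}_{xl}\})=\sum_{l=1}^L\text{Tr}\!\left(\mathbf{\Delta}_l\mathbf{G}_2\mathbf{R}_{xl}\mathbf{G}_2^H\right)\;\le\;\sum_{l=1}^L\text{Tr}\!\left(\mathbf{G}_2\mathbf{R}_{xl}\mathbf{G}_2^H\right)=\text{TIP}(\{\mathbf{R}_{xl}\}),
\end{equation*}
so minimizing $\text{TIP}$ is minimizing an upper bound on $\text{EIP}_I$. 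This inequality is not actually needed for the theorem itself, but it explains why the cooperative design can only help.

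There is essentially no technical obstacle here, precisely because the two problems share the feasible set $\mathbb{X}_0$ and the cooperative objective is the quantity being compared; the result is a direct consequence of optimality, and no Lagrangian or KKT analysis (such as that in Lemma~\ref{lemma:closedform}) is required. The only caveat to handle carefully is the ``optimal solution exists'' issue: ($\mathbf{P}_1$) is convex with a continuous objective on a closed set, and under the standing assumption that $\mathbb{X}_0$ is nonempty (i.e.\ the constraints are jointly feasible for the given $P_t,C$), a minimizer exists, so the comparison is well defined.
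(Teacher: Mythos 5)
Your proof is correct and follows essentially the same argument as the paper: since ($\mathbf{P}_0$) and ($\mathbf{P}_1$) share the feasible set $\mathbb{X}_0$, the optimizer of ($\mathbf{P}_0$) is feasible for ($\mathbf{P}_1$), and optimality of the latter gives $\text{EIP}_{I}(\{\mathbf{R}_{xl}^{(1)}\}) \le \text{EIP}_{I}(\{\mathbf{R}_{xl}^{(0)}\})$. The additional remarks on the $\text{EIP}_I \le \text{TIP}$ bound and on existence of a minimizer are not in the paper's proof but are harmless and sensible.
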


\begin{proof}
Let $\{\mathbf{R}^{*0}_{xl}\}$  and $\{\mathbf{R}^{*1}_{xl}\}$ denote the solution
of ($\mathbf{P}_0$) and ($\mathbf{P}_1$), respectively.
We know that $\{\mathbf{R}^{*0}_{xl}\}$ satisfies the constraints in (${\bf P}_1$), which means that $\{\mathbf{R}^{*0}_{xl}\}$ is a feasible point of ($\mathbf{P}_1$). The optimal $\{\mathbf{R}^{*1}_{xl}\}$ achieves an objective value no larger than any feasible point, including $\{\mathbf{R}^{*0}_{xl}\}$, does. It holds that $\text{EIP}_{I}(\{\mathbf{R}^{*1}_{xl}\}) \le \text{EIP}_{I}(\{\mathbf{R}^{*0}_{xl}\})$, which
 proves the claim.
\end{proof}

There are certain scenarios in which the cooperative approach outperforms significantly the noncooperative one in terms of EIP. Let us denote by $\phi_1$
the intersection of $\mathcal{N}(\mathbf{G}_{2l})$ and $\mathcal{R}(\mathbf{R}_{wl}^{1/2}\mathbf{H})$, and by
$\phi_2$ the intersection of $\mathcal{N}(\mathbf{G}_{2})$ and $\mathcal{R}(\mathbf{R}_{wl}^{1/2}\mathbf{H})$. We know that $\phi_2 \subseteq \phi_1$. Consider the case where $\phi_1$ is nonempty while $\phi_2$ is empty. This happens with high probability when $M_{r,R}\ge M_{t,C}$ but $pM_{r,R}$ is much smaller than $M_{t,C}$. Problem ($\mathbf{P}_1$) will guide the communication system to focus its transmission power along the directions in $\phi_1$ to satisfy both communication system constraints, while introducing zero EIP to the radar system. On the other hand, since $\phi_2$ is empty, Problem ($\mathbf{P}_0$) will guide the communication system transmit power along directions that introduce nonzero EIP.
In other words, the sub-sampling procedure in the MIMO-MC radar may reduce the dimension of the interference channel $\mathbf{G}_2$ row space. This further increases the design flexibility of the communication waveforms. Therefore, it is more possible to find communication waveforms that satisfy the communication constraints and meanwhile introduce smaller EIP.

\subsection{Joint Communication and Radar System Design for Spectrum Sharing}\label{sec:jointdesignI}
In the above described spectrum sharing strategies, the MIMO-MC radar operates with a predetermined pseudo random sampling scheme. However, in this section, we consider a joint design of the communication system transmit covariance matrices and the MIMO-MC radar random sampling scheme, i.e., $\mathbf{\Omega}_I$. The candidate sampling scheme needs to ensure that the resulting data matrix can be completed. This means that $\mathbf{\Omega}_{I}$ is either a uniformly random sub-sampling matrix \cite{Candes10}, or a matrix with a large spectral gap \cite{Srinadh14}.

Recall that
$
\text{EIP}_{I}
=\sum\nolimits_{l=1}^L \text{Tr}\left( \mathbf{\Delta}_l\mathbf{G}_{2}\mathbf{R}_{xl}\mathbf{G}^H_{2}\right).
$
The joint design scheme is formulated as
\begin{equation*}
\begin{aligned}
({\bf P}_2)\; &\left\{\{\mathbf{R}_{xl}\},\mathbf{\Omega}_I \right\} = \argmin_{\{\mathbf{R}_{xl}\},\mathbf{\Omega}} \; \sum_{l=1}^L \text{Tr}\left( \mathbf{\Delta}_l\mathbf{G}_{2}\mathbf{R}_{xl}\mathbf{G}^H_{2}\right) \\
&\text{s.t. } \,  \{\mathbf{R}_{xl}\} \in \mathbb{X}_0,\mathbf{\Delta}_{l}=\text{diag}(\mathbf{\Omega}_{\cdot l}), \mathbf{\Omega} \text{ is proper}.
\end{aligned}
\end{equation*}
The above problem is not convex. A solution can be obtained via alternating optimization. Let $(\{\mathbf{R}_{xl}^n\},\mathbf{\Omega}^n)$ be the variables at the $n$-th iteration. We alternatively solve the following two problems:
\begin{subequations}
\begin{align}
\{\mathbf{R}_{xl}^n\} = \argmin_{\{\mathbf{R}_{xl}\}\in\mathbb{X}_0}& \sum\nolimits_{l=1}^L \text{Tr}\left( \mathbf{\Delta}_l^{n-1} \mathbf{G}_{2}\mathbf{R}_{xl}\mathbf{G}^H_{2}\right), \label{eqn:jdao1}\\
\mathbf{\Omega}^n = \argmin_{\mathbf{\Omega}}& \sum\nolimits_{l=1}^L \text{Tr}\left( \mathbf{\Delta}_l \mathbf{G}_{2}\mathbf{R}_{xl}^n \mathbf{G}^H_{2}\right), \label{eqn:jdao2}\\
\text{s.t.}\; & \mathbf{\Delta}_{l}=\text{diag}(\mathbf{\Omega}_{\cdot l}), \mathbf{\Omega} \text{ is proper}. \nonumber
\end{align}
\end{subequations}
The problem of (\ref{eqn:jdao1}) is convex and can be solved efficiently.
To avoid the intermediate variable $\{\mathbf{\Delta}_l\}$, we can reformulate (\ref{eqn:jdao2}) as
\begin{equation} \label{eqn:optomega1}
\begin{aligned}
\mathbf{\Omega} ^n = &\argmin_{\mathbf{\Omega}}\; \text{Tr}(\mathbf{\Omega}^T\mathbf{Q}^n) \; \text{s.t.} \; \mathbf{\Omega} \text{ is proper},
\end{aligned}
\end{equation}
where the $l$-th column of $\mathbf{Q}^n$ contains the diagonal entries of $\mathbf{G}_{2}\mathbf{R}_{xl}^n \mathbf{G}^H_{2}$.
Recall that the sampling matrix $\mathbf{\Omega}$ is proper either if it is a uniformly random sampling matrix, or it has large spectral gap. However, it is difficult to incorporate such conditions in the above optimization problem.

Noticing  that row and column permutation of the sampling matrix would not affect its singular values and thus the spectral gap, we propose to optimize the sampling scheme by permuting the rows and columns of an initial sampling matrix $\mathbf{\Omega}^0$:
\begin{equation} \label{eqn:optomega1omega0}
\mathbf{\Omega}^n=\argmin_{\mathbf{\Omega}}\; \text{Tr}(\mathbf{\Omega}^T \mathbf{Q}^n) \; \text{s.t.} \; \mathbf{\Omega} \in \wp(\mathbf{\Omega}^0),
\end{equation}
where $\wp(\mathbf{\Omega}^0)$ denotes the set of matrices obtained by arbitrary row and/or column permutations. The $\mathbf{\Omega}^0$ is generated with binary entries and $\lfloor p_{I}LM_{r,R} \rfloor$ ones. Meanwhile, $\mathbf{\Omega}^0$ has large spectral gap. One of the matrices that exhibit large spectral gap with high probability is the uniformly random  sampling matrix  \cite{Srinadh14}.
Brute-force search can be used to find the optimal $\mathbf{\Omega}$. However, the complexity is very high since $|\wp(\mathbf{\Omega}^0)|=\Theta(M_{r,R}!L!)$. By alternately optimizing w.r.t. row permutation and column permutation on $\mathbf{\Omega}^0$, we can solve (\ref{eqn:optomega1omega0}) using a sequence of linear assignment problems \cite{Hungarian}.

To optimize w.r.t. column permutation, we need to find the best one-to-one match between the columns of $\mathbf{\Omega}^0$ and the columns of $\mathbf{Q}^n$. We construct a cost matrix $\mathbf{C}^c\in\mathbb{R}^{L\times L}$ with $[\mathbf{C}^c]_{ml}\triangleq (\mathbf{\Omega}^0_{\cdot m})^T \mathbf{Q}^n_{\cdot l}$. The problem turns out to be a linear assignment problem with cost matrix  $\mathbf{C}^c$,  which can be solved in polynomial time using the Hungarian algorithm \cite{Hungarian}. Let  $\mathbf{\Omega}^c$ denote the column-permutated sampling matrix after the above step.
Then, we permute the rows of $\mathbf{\Omega}^c$ to optimally match the rows of $\mathbf{Q}^n$. Similarly, we construct a cost matrix $\mathbf{C}^r\in\mathbb{R}^{M_{r,R}\times M_{r,R}}$ with $[\mathbf{C}^r]_{ml}\triangleq \mathbf{\Omega}^c_{m\cdot} (\mathbf{Q}^n_{l\cdot})^T$. Again, the Hungarian algorithm can be used to solve the row assignment problem.
The above column and row permutation steps are alternately repeated until $\text{Tr}(\mathbf{\Omega}^T\mathbf{Q}^n)$ becomes smaller than a certain predefined threshold $\delta_1$.

It is easy to show that the value of $\text{EIP}_{I}$ decreases during the alternating iterations between (\ref{eqn:jdao1}) and (\ref{eqn:jdao2}). The proposed algorithm stops when $\text{EIP}_{I}$ decreases with value smaller than a certain predefined threshold $\delta_2$. The proposed joint-design spectrum sharing strategy is expected to further reduce the EIP at the Scheme I radar RX node compared to the methods in Section \ref{sec:noncooperI} and \ref{sec:cooperI}.
The complete joint-design spectrum share algorithm proposed in this section is summarized in Algorithm \ref{alg:jd1}.
\begin{algorithm}
\caption{Joint design based spectrum sharing between Scheme I radar and a MIMO comm. system}
\label{alg:jd1}
\begin{algorithmic}[1]
\STATE \textbf{Input:} $\mathbf{H},\mathbf{G}_1,\mathbf{G}_2,P_t,C,\sigma_C^2, \delta_1,\delta_2$
\STATE \textbf{Initialization:} $\mathbf{\Omega}^0$ is a uniformly random sampling matrix
\REPEAT
\STATE $\{\mathbf{R}_{xl}^n\}\leftarrow$ Solve problem (\ref{eqn:jdao1}) using {\bf Algorithm} \ref{alg:P1} while fixing $\mathbf{\Omega}^{n-1}$
\STATE $\mathbf{\Omega}^{prev}\leftarrow \mathbf{\Omega}^{n-1}$
\LOOP
\STATE $\mathbf{\Omega}^c \leftarrow$ Find the best column permutation of $\mathbf{\Omega}^{prev}$ by solving the linear assignment problem with cost matrix $\mathbf{C}^c$
\STATE $\mathbf{\Omega}^r \leftarrow$ Find the best row permutation of $\mathbf{\Omega}^{c}$ by solving the linear assignment problem with cost matrix $\mathbf{C}^r$
\IF {$|\text{Tr}((\mathbf{\Omega}^r)^T\mathbf{Q}^n) - \text{Tr}((\mathbf{\Omega}^{prev})^T\mathbf{Q}^n)| < \delta_1$}
\STATE $\textbf{Break}$
\ENDIF
\STATE $\mathbf{\Omega}^{prev}\leftarrow \mathbf{\Omega}^{r}$
\ENDLOOP
\STATE $\mathbf{\Omega}^n \leftarrow \mathbf{\Omega}^r$
\STATE $n \leftarrow n+1$
\UNTIL{$|\text{EIP}_{I}^n-\text{EIP}_{I}^{n-1}| < \delta_2$ }
\STATE \textbf{Output:} $\{\mathbf{R}_{xl}\} = \{\mathbf{R}_{xl}^n\},\mathbf{\Omega}_{I}=\mathbf{\Omega}^n$
\end{algorithmic}
\end{algorithm}

\section{Spectrum Sharing with Scheme II MIMO-MC Radars} \label{sec:schemeII}
When the Scheme II radar is considered, the signal model of the random matched filter can be expressed as follows:
\begin{equation*}
\mathbf{\Omega}_{II}\circ (\mathbf{Y}_R\mathbf{S}^H)={\mathbf{\Omega}_{II}}\circ (\mathbf{D}\mathbf{S}\mathbf{S}^H+\mathbf{G}_2\mathbf{X}\mathbf{\Lambda}_2\mathbf{S}^H+\mathbf{W}_R\mathbf{S}^H).
\end{equation*}
\newcounter{MYtempeqncnt}
\begin{figure*}[!t]
\normalsize
\begin{equation}\label{eqn:interfII}
\begin{aligned}
\text{EIP}_{II}\triangleq \mathbb{E}\left\{\text{Tr}\left(\mathbf{\Omega}_{II}\circ (\mathbf{G}_2\mathbf{X}\mathbf{\Lambda}_2\mathbf{S}^H) \left(\mathbf{\Omega}_{II} \circ ( \mathbf{G}_2\mathbf{X}\mathbf{\Lambda}_2\mathbf{S}^H) \right)^H\right)\right\}
=\mathbb{E}\left\{\sum_{m=1}^{M_{r,R}} \mathbf{g}_m^H\mathbf{X}\mathbf{\Lambda}_2\mathbf{S}_m^H \mathbf{S}_m\mathbf{\Lambda}_2^H\mathbf{X}^H \mathbf{g}_m \right\}
\end{aligned}
\end{equation}
\hrulefill
\vspace*{4pt}
\end{figure*}
The effective interference power to the Scheme II radar is given by (\ref{eqn:interfII}) on top of next page, where $\mathbf{g}_m^H$ denotes the $m$-th row of $\mathbf{G}_2$; $\mathbf{S}_m$ is composed by rows selected from $\mathbf{S}$ according to set $\xi_m$ as defined in Section \ref{sec:background}. Each sum term on the right hand side (RHS) of (\ref{eqn:interfII}) is the interference power at one radar receive antenna.
To minimize the interference power with respect to the spatial spectrum $\{\mathbf{R}_{xl}\}$, we have the following lemma to express (\ref{eqn:interfII}) in terms of $\{\mathbf{R}_{xl}\}$.
\begin{lemma} \label{lemma:EIPII1}
For the effective interference power $\text{EIP}_{II}$, it holds that
\begin{equation} \label{eqn:EIPII1}
\text{EIP}_{II} = \sum_{l=1}^L \text{Tr}\left( \mathbf{\Delta}_{l\xi}  \mathbf{G}_2 \mathbf{R}_{xl} \mathbf{G}^H_2 \right),
\end{equation}
where $ \mathbf{\Delta}_{l\xi} \triangleq \text{diag}(a_{l \xi_1 },\dots,a_{l \xi_{M_{r,R}} } )$; $a_{l \xi_m} =\mathbf{s}_m^H(l)\mathbf{s}_m(l)$ with $\mathbf{s}_m(l)$ containing entries of $\mathbf{s}(l)$ indexed by set $\xi_m$.
\end{lemma}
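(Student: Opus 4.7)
The plan is to start from the per-antenna expression already given in (\ref{eqn:interfII}), $\text{EIP}_{II} = \mathbb{E}\{\sum_m \mathbf{g}_m^H \mathbf{X} \mathbf{\Lambda}_2 \mathbf{S}_m^H \mathbf{S}_m \mathbf{\Lambda}_2^H \mathbf{X}^H \mathbf{g}_m\}$, and collapse each summand using the statistical properties of the codewords. First I would expand the inner quadratic form as a double sum over symbol indices $l, l'$: reading off the $(l,l')$-entry of the middle matrix gives $(\mathbf{\Lambda}_2 \mathbf{S}_m^H \mathbf{S}_m \mathbf{\Lambda}_2^H)_{l,l'} = e^{j(\alpha_{2l} - \alpha_{2l'})} \mathbf{s}_m^H(l) \mathbf{s}_m(l')$, so the $m$-th summand becomes $\sum_{l,l'} e^{j(\alpha_{2l} - \alpha_{2l'})}\, \mathbf{s}_m^H(l) \mathbf{s}_m(l')\, \mathbf{x}^H(l) \mathbf{g}_m \mathbf{g}_m^H \mathbf{x}(l')$.

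The decisive step is evaluating the expectation. Since the communication codewords are independent zero-mean $\mathcal{CN}(0, \mathbf{R}_{xl})$ across $l$ and independent of the phase jitter, $\mathbb{E}[\mathbf{x}(l') \mathbf{x}^H(l)] = \mathbf{R}_{xl}\,\delta_{l l'}$, so every off-diagonal ($l \neq l'$) term vanishes. On the surviving diagonal the phase factor $e^{j(\alpha_{2l} - \alpha_{2l})} = 1$, so the random phase offsets drop out entirely, leaving $\sum_m \sum_l \mathbf{s}_m^H(l) \mathbf{s}_m(l)\, \mathbf{g}_m^H \mathbf{R}_{xl} \mathbf{g}_m = \sum_m \sum_l a_{l\xi_m}\, \mathbf{g}_m^H \mathbf{R}_{xl} \mathbf{g}_m$.

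The last step is a trace rearrangement: because $\mathbf{g}_m^H$ is the $m$-th row of $\mathbf{G}_2$, the scalar $\mathbf{g}_m^H \mathbf{R}_{xl} \mathbf{g}_m$ is exactly the $(m,m)$-th diagonal entry of $\mathbf{G}_2 \mathbf{R}_{xl} \mathbf{G}_2^H$, so the inner sum over $m$ equals $\text{Tr}(\mathbf{\Delta}_{l\xi}\, \mathbf{G}_2 \mathbf{R}_{xl} \mathbf{G}_2^H)$ with $\mathbf{\Delta}_{l\xi}$ as defined in the lemma. Swapping the order of summation over $l$ and $m$ yields (\ref{eqn:EIPII1}). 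There is no genuine obstacle: the argument is essentially bookkeeping, and the only conceptual point worth noting is that temporal independence of the codewords is what simultaneously kills the cross-time terms and removes any dependence on $\mathbf{\Lambda}_2$, so the final expression is insensitive to the phase-noise statistics.
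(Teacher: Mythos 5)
Your proposal is correct and follows essentially the same route as the paper's Appendix~A proof: the paper packages the key step as showing that the matrix $\mathbf{C}_m=\mathbb{E}\{\mathbf{\Lambda}_2^H\mathbf{X}^H\mathbf{g}_m\mathbf{g}_m^H\mathbf{X}\mathbf{\Lambda}_2\}$ is diagonal with entries $\mathrm{Tr}(\mathbf{g}_m\mathbf{g}_m^H\mathbf{R}_{xl})$, which is exactly your double-sum expansion with the off-diagonal ($l\neq l'$) terms killed by $\mathbb{E}\{\mathbf{x}(l)\mathbf{x}^H(k)\}=\delta_{kl}\mathbf{R}_{xl}$ and the phase factors cancelling on the diagonal. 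The concluding trace rearrangement and swap of the sums over $l$ and $m$ also match the paper.
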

\begin{IEEEproof}
The proof can be found in Appendix \ref{appen:EIPII1}.
\end{IEEEproof}

If we choose $\xi_m=\mathbb{N}_{M_{t,R}}^+$, {\it i.e.}, all matched filters are used and no matrix completion is considered, $\mathbf{S}_m$ equals $\mathbf{S}$ for all $m\in\mathbb{N}_{M_{r,R}}^+$. Then by Lemma \ref{lemma:EIPII1}, the interference at the output of the full matched filter bank equals
\begin{equation} \label{eqn:IPFMFB}
\text{IP}_{\text{FMFB}}\triangleq \sum_{l=1}^{L}a_l\text{Tr}\left( \mathbf{G}_2 \mathbf{R}_{xl} \mathbf{G}^H_2 \right),
\end{equation}
where $a_{l} \triangleq \mathbf{s}^H(l) \mathbf{s}(l)$. It is noted that $0< a_{l \xi_m} < a_{l}, \forall m\in\mathbb{N}_{M_{r,R}}^+$.

In the following we discuss four levels of cooperation between the communication system and the Scheme II radar.
\subsection{Noncooperative Spectrum Sharing}
In the first case, the communication transmitter does not utilize any knowledge of the MIMO radar system except for the interference channel $\mathbf{G}_2$. Just as in the noncooperative case in Section \ref{sec:schemeI}, the communication transmitter designs its spectrum to minimize the interference power exerted at the radar RX antennas, {\it i.e.}, $\text{TIP}$, using ($\mathbf{P}_0$).
\subsection{Partially Cooperative Spectrum Sharing}
In the second case, the communication transmitter exploits knowledge of the $a_l$'s, obtained by using shared radar waveforms\footnote{Recall that the communication capacity in (\ref{eqn:avgcapacity}) is defined based on the knowledge of $\mathbf{S}$. This means that the radar waveforms are shared with the communication transmitter.}.
The communication transmitter designs its spectrum to minimize the interference power at the output of full matched filter banks in all the radar receivers
\begin{equation}
\begin{aligned}
({\bf P}_3)\; \min_{\{\mathbf{R}_{xl}\}} \; \text{IP}_{\text{FMFB}}(\{\mathbf{R}_{xl}\})   
\;\text{s.t. } &\; \{\mathbf{R}_{xl}\} \in \mathbb{X}_0.
\end{aligned}
\end{equation}
The interference power $\text{IP}_{\text{FMFB}}$ has the same summation terms as in $\text{TIP}$ but reweighed by the $a_l$'s along different symbol durations.
\subsection{Fully Cooperative Spectrum Sharing}
In the fully cooperative case, the radar system shares the diagonal matrices $\mathbf{\Delta}_{l\xi}, l\in\mathbb{N}_L^+$ with the communication system. The spectrum sharing problem can be formulated as
\begin{equation}
\begin{aligned}
({\bf P}_4)\; \min_{\{\mathbf{R}_{xl}\}} \; \text{EIP}_{II}(\{\mathbf{R}_{xl}\})   
\;\text{s.t. } &\; \{\mathbf{R}_{xl}\} \in \mathbb{X}_0.
\end{aligned}
\end{equation}
The effective interference power $\text{EIP}_{II}$ also has similar structure to $\text{TIP}$ and $\text{IP}_{\text{FMFB}}$ in (\ref{eqn:interfarrive}) and (\ref{eqn:IPFMFB}), respectively, while it is reweighed by the diagonal matrices $\mathbf{\Delta}_{l\xi}, l\in\mathbb{N}_L^+$.
We can see that the random matched filter bank introduces the weights $\mathbf{\Delta}_{l\xi}$'s which affect the power allocation in both time and spatial domain. (${\bf P}_4$) can also be solved using the dual decomposition technique used in Algorithm \ref{alg:P1}.

The following theorem compares the effective interference power to Scheme II radar, achieved by ($\mathbf{P}_0$), ($\mathbf{P}_3$), ($\mathbf{P}_4$) in the above three cases.
\begin{theorem}\label{thm:MIMO-MCII}
For any $P_t$ and $C$, the effective interference power to Scheme II radar achieved by ($\mathbf{P}_4$) is not larger than those achieved by ($\mathbf{P}_0$) and ($\mathbf{P}_3$) when none or partial information is shared with the communication transmitter.
\end{theorem}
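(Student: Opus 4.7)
The plan is to mirror the proof of Theorem \ref{thm:MIMO-MCI}, exploiting the key structural fact that problems $(\mathbf{P}_0)$, $(\mathbf{P}_3)$, and $(\mathbf{P}_4)$ all share the \emph{same} feasible set $\mathbb{X}_0$ (determined by the power constraint (\ref{eqn:constrPt}), the average capacity constraint (\ref{eqn:constrSR}), and positive semi-definiteness). The three problems differ only in their objectives: $\text{TIP}$, $\text{IP}_{\text{FMFB}}$, and $\text{EIP}_{II}$, respectively. The quantity we wish to compare across the three designs is $\text{EIP}_{II}$ evaluated at each of the three optimizers.

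First, I would introduce notation: let $\{\mathbf{R}_{xl}^{*0}\}$, $\{\mathbf{R}_{xl}^{*3}\}$, and $\{\mathbf{R}_{xl}^{*4}\}$ denote optimal solutions of $(\mathbf{P}_0)$, $(\mathbf{P}_3)$, and $(\mathbf{P}_4)$, respectively. By construction, each of these triples lies in $\mathbb{X}_0$. Since $(\mathbf{P}_4)$ is precisely the problem of minimizing $\text{EIP}_{II}$ over $\mathbb{X}_0$, both $\{\mathbf{R}_{xl}^{*0}\}$ and $\{\mathbf{R}_{xl}^{*3}\}$ are feasible candidates for $(\mathbf{P}_4)$. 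By optimality of $\{\mathbf{R}_{xl}^{*4}\}$ we immediately obtain
\begin{equation*}
\text{EIP}_{II}(\{\mathbf{R}_{xl}^{*4}\}) \le \text{EIP}_{II}(\{\mathbf{R}_{xl}^{*0}\}),\quad
\text{EIP}_{II}(\{\mathbf{R}_{xl}^{*4}\}) \le \text{EIP}_{II}(\{\mathbf{R}_{xl}^{*3}\}),
\end{equation*}
which is exactly the claim of the theorem.

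There is essentially no technical obstacle here; the substance of the argument is the observation that the three designs differ only in objective, not in feasibility. The only point that deserves emphasis (and which I would flag explicitly in the write-up to avoid confusion with Theorem \ref{thm:MIMO-MCI}) is that one must compare the \emph{same} metric $\text{EIP}_{II}$ across the three designs, rather than comparing each design's own objective value. Invoking Lemma \ref{lemma:EIPII1} to express $\text{EIP}_{II}$ in the closed form $\sum_l \text{Tr}(\mathbf{\Delta}_{l\xi}\mathbf{G}_2\mathbf{R}_{xl}\mathbf{G}_2^H)$ confirms that this quantity is well-defined on all of $\mathbb{X}_0$, so evaluating it at $\{\mathbf{R}_{xl}^{*0}\}$ and $\{\mathbf{R}_{xl}^{*3}\}$ is meaningful. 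With this noted, the proof reduces to a one-line feasibility-plus-optimality argument of the type used for Theorem \ref{thm:MIMO-MCI}.
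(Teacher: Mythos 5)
Your proposal is correct and follows essentially the same route as the paper: both arguments note that $(\mathbf{P}_0)$, $(\mathbf{P}_3)$, and $(\mathbf{P}_4)$ share the feasible set $\mathbb{X}_0$, so the optimizers of $(\mathbf{P}_0)$ and $(\mathbf{P}_3)$ are feasible for $(\mathbf{P}_4)$, and optimality of $\{\mathbf{R}_{xl}^{*4}\}$ for the objective $\text{EIP}_{II}$ yields the two inequalities. Your explicit remark that the comparison must be made in the common metric $\text{EIP}_{II}$ rather than in each problem's own objective is a helpful clarification but does not change the substance of the argument.
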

\begin{IEEEproof}
Let $\{\mathbf{R}^{*0}_{xl}\}$, $\{\mathbf{R}^{*3}_{xl}\}$ and $\{\mathbf{R}^{*4}_{xl}\}$ denote the solution of ($\mathbf{P}_0$), ($\mathbf{P}_3$) and ($\mathbf{P}_4$), respectively.
We know that both $\{\mathbf{R}^{*0}_{xl}\}$ and $\{\mathbf{R}^{*3}_{xl}\}$ satisfy the constraints in (${\bf P}_4$), which means that $\{\mathbf{R}^{*0}_{xl}\}$ and $\{\mathbf{R}^{*3}_{xl}\}$ are two feasible points of ($\mathbf{P}_4$). Meanwhile, the optimal $\{\mathbf{R}^{*4}_{xl}\}$ achieves an objective value no larger than any feasible point, including $\{\mathbf{R}^{*0}_{xl}\}$ and $\{\mathbf{R}^{*3}_{xl}\}$, does. It holds that $\text{EIP}_{II}(\{\mathbf{R}^{*4}_{xl}\}) \le \text{EIP}_{II}(\{\mathbf{R}^{*0}_{xl}\})$ and $\text{EIP}_{II}(\{\mathbf{R}^{*4}_{xl}\}) \le \text{EIP}_{II}(\{\mathbf{R}^{*3}_{xl}\})$, which
 prove the claim.
\end{IEEEproof}

\subsection{Joint Communication and Radar System Design for Spectrum Sharing}\label{sec:jointdesignII}
In the above described spectrum sharing strategies, the Scheme II radar operates with a predetermined pseudo random sampling scheme.
In this section, we consider a joint design of the communication system transmit covariance matrices and the MIMO-MC radar sampling scheme, i.e., $\mathbf{\Omega}_{II}$.
The key of applying the joint design scheme is to express $\mathbf{\Delta}_{l\xi}$ in terms of $\mathbf{\Omega}_{II}$, which is given in the following lemma.
\begin{lemma} \label{lemma:EIPII2}
The effective interference power $\text{EIP}_{II}$ can be equivalently expressed as
\begin{equation}
\text{EIP}_{II} = \text{Tr}\left(\mathbf{\Omega}_{II}^T\mathbf{Q}(\mathbf{S}\circ\mathbf{S})^T \right),
\end{equation}
where the $l$-th column of $\mathbf{Q}$ contains the diagonal entries of $\mathbf{G}_{2}\mathbf{R}_{xl} \mathbf{G}^H_{2}$.
\end{lemma}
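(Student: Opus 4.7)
The plan is to start from the representation of $\text{EIP}_{II}$ established in Lemma \ref{lemma:EIPII1}, namely
$$\text{EIP}_{II} = \sum_{l=1}^L \text{Tr}\left( \mathbf{\Delta}_{l\xi}  \mathbf{G}_2 \mathbf{R}_{xl} \mathbf{G}^H_2 \right),$$
and rewrite the diagonal entries $a_{l\xi_m}$ of $\mathbf{\Delta}_{l\xi}$ as inner products involving the rows of $\mathbf{\Omega}_{II}$ and the columns of $\mathbf{S}\circ\mathbf{S}$. Since the $m$-th row of $\mathbf{\Omega}_{II}$ has ones exactly at the indices in $\xi_m$, we have $[\mathbf{\Omega}_{II}]_{m,k}=1$ iff $k\in\xi_m$, and therefore
$$a_{l\xi_m} = \mathbf{s}_m^H(l)\mathbf{s}_m(l) = \sum_{k\in\xi_m}|s_k(l)|^2 = \sum_{k=1}^{M_{t,R}}[\mathbf{\Omega}_{II}]_{m,k}\,[\mathbf{S}\circ\mathbf{S}]_{k,l}.$$

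Next, I would use the fact that $\text{Tr}(\mathbf{\Delta}_{l\xi}\mathbf{A})=\sum_m [\mathbf{\Delta}_{l\xi}]_{m,m}[\mathbf{A}]_{m,m}$ for any square matrix $\mathbf{A}$, applied to $\mathbf{A}=\mathbf{G}_2\mathbf{R}_{xl}\mathbf{G}_2^H$. Using the definition of $\mathbf{Q}$, whose $l$-th column is the diagonal of $\mathbf{G}_2\mathbf{R}_{xl}\mathbf{G}_2^H$, this gives $[\mathbf{G}_2\mathbf{R}_{xl}\mathbf{G}_2^H]_{m,m}=[\mathbf{Q}]_{m,l}$, so
$$\text{EIP}_{II} = \sum_{l=1}^L\sum_{m=1}^{M_{r,R}}\sum_{k=1}^{M_{t,R}} [\mathbf{\Omega}_{II}]_{m,k}\,[\mathbf{S}\circ\mathbf{S}]_{k,l}\,[\mathbf{Q}]_{m,l}.$$

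Finally, I would verify that the same triple sum is produced by expanding $\text{Tr}\bigl(\mathbf{\Omega}_{II}^T\mathbf{Q}(\mathbf{S}\circ\mathbf{S})^T\bigr)$. A direct calculation gives $[\mathbf{Q}(\mathbf{S}\circ\mathbf{S})^T]_{m,k}=\sum_l [\mathbf{Q}]_{m,l}[\mathbf{S}\circ\mathbf{S}]_{k,l}$, and therefore the trace equals $\sum_k\sum_m [\mathbf{\Omega}_{II}]_{m,k}\sum_l[\mathbf{Q}]_{m,l}[\mathbf{S}\circ\mathbf{S}]_{k,l}$, matching the expression above. A brief dimensional check ($\mathbf{\Omega}_{II}^T$ is $M_{t,R}\times M_{r,R}$, $\mathbf{Q}$ is $M_{r,R}\times L$, $(\mathbf{S}\circ\mathbf{S})^T$ is $L\times M_{t,R}$) confirms the product is well-defined and yields the claim.

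This proof is essentially an exercise in careful index bookkeeping, so I do not expect a serious obstacle. The only subtle point is keeping straight the role of $\mathbf{\Omega}_{II}$ as encoding the sets $\xi_m$ row by row, so that its transpose appears naturally when one sums first over the receive-antenna index $m$ and then over the matched-filter index $k$; writing out the triple sum in both directions makes the identification transparent.
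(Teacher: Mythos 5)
Your proof is correct and follows essentially the same route as the paper's: both start from Lemma \ref{lemma:EIPII1}, rewrite $a_{l\xi_m}$ as the inner product of the $m$-th row of $\mathbf{\Omega}_{II}$ with $\mathbf{s}(l)\circ\mathbf{s}(l)$, and then reassemble the result into the trace form (the paper via matrix stacking, you via an explicit triple sum, which is only a notational difference). The implicit conjugation in writing $|s_k(l)|^2=[\mathbf{S}\circ\mathbf{S}]_{k,l}$ is the same convention the paper uses, so no discrepancy there.
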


\begin{IEEEproof}
The proof can be found in Appendix \ref{appen:EIPII2}.
\end{IEEEproof}
The joint design scheme is formulated as follows
\begin{equation*}
\begin{aligned}
({\bf P}_5)\; \left\{\{\mathbf{R}_{xl}\},\mathbf{\Omega}_{II}\right\}=&\argmin_{\{\mathbf{R}_{xl}\},\mathbf{\Omega}} \; \sum_{l=1}^L \text{Tr}\left( \mathbf{\Delta}_{l\xi}\mathbf{G}_{2}\mathbf{R}_{xl}\mathbf{G}^H_{2}\right) \\
&\text{s.t. } \,  \{\mathbf{R}_{xl}\} \in \mathbb{X}_0,\mathbf{\Omega} \in \wp(\mathbf{\Omega}^0).
\end{aligned}
\end{equation*}
As in problem $({\bf P}_2)$, a suboptimal sampling matrix $\mathbf{\Omega}$ is searched over the set of matrices obtained by permutating rows and/or columns of $\mathbf{\Omega}^0$. The initial $M_{r,R}\times M_{t,R}$ dimensional matrix $\mathbf{\Omega}^0$ is generated with $\lfloor p_{II}M_{t,R}M_{r,R} \rfloor$ ones at uniformly random positions. This guarantees that $\mathbf{\Omega}^0$ and matrices obtained by permutating rows and/or columns of $\mathbf{\Omega}^0$ have large spectral gap. Multiple instances of $\mathbf{\Omega}^0$ can be used to find a better radar sampling scheme.
Similarly, the technique of alternating optimization is adopted to solve $({\bf P}_5)$. Let $(\{\mathbf{R}_{xl}^n\},\mathbf{\Omega}^n)$ be the variables at the $n$-th iteration. We alternatively solve the following two problems:
\begin{small}
\begin{subequations}
\begin{align}
\{\mathbf{R}_{xl}^n\} = \argmin_{\{\mathbf{R}_{xl}\}\in\mathbb{X}_0}& \sum_{l=1}^L \text{Tr}\left( \text{diag}\left(\mathbf{\Omega}^{n-1} \left(\mathbf{s}(l)\circ \mathbf{s}(l)\right)\right) \mathbf{G}_{2}\mathbf{R}_{xl}\mathbf{G}^H_{2}\right), \label{eqn:jdaoII1}\\
\mathbf{\Omega}^n = \argmin_{\mathbf{\Omega}}& \, \text{Tr}\left(\mathbf{\Omega}^T\mathbf{Q}^n(\mathbf{S}\circ\mathbf{S})^T \right), \text{s.t.}\; \mathbf{\Omega} \in \wp(\mathbf{\Omega}^{n-1}), \label{eqn:jdaoII2} 
\end{align}
\end{subequations}
\end{small}
where the $l$-th column of $\mathbf{Q}^n$ contains the diagonal entries of $\mathbf{G}_{2}\mathbf{R}_{xl}^n \mathbf{G}^H_{2}$.
The problem of (\ref{eqn:jdaoII1}) is convex and can be solved efficiently using Algorithm \ref{alg:P1}. By denoting $\tilde{\mathbf{Q}}^n\triangleq \mathbf{Q}^n(\mathbf{S}\circ\mathbf{S})^T$, subproblem (\ref{eqn:jdaoII2}) can be formulated into exactly the same form as (\ref{eqn:optomega1omega0}). Analogously, (\ref{eqn:jdaoII2}) is solved using a sequence of linear assignment problems \cite{Hungarian}, which alternately optimize w.r.t. row permutation and column permutation on $\mathbf{\Omega}^0$. The corresponding cost matrices $\mathbf{C}^c$ and $\mathbf{C}^r$ are with entries given by $[\mathbf{C}^c]_{ml}\triangleq (\mathbf{\Omega}^0_{\cdot m})^T \tilde{\mathbf{Q}}^n_{\cdot l}$ and $[\mathbf{C}^r]_{ml}\triangleq \mathbf{\Omega}^c_{m\cdot} (\tilde{\mathbf{Q}}^n_{l\cdot})^T$, respectively.
The complete joint-design based spectrum sharing algorithm proposed in this section is summarized in Algorithm \ref{alg:jd2}.
\begin{algorithm}
\caption{Joint design based spectrum sharing between Scheme II radar and a MIMO comm. system}
\label{alg:jd2}
\begin{algorithmic}[1]
\STATE \textbf{Input:} $\mathbf{H},\mathbf{G}_1,\mathbf{G}_2,\mathbf{S},P_t,C,\sigma_C^2,\delta_1,\delta_2$
\STATE \textbf{Initialization:} $\mathbf{\Omega}^0$ is a uniformly random sampling matrix
\REPEAT
\STATE $\{\mathbf{R}_{xl}^n\}\leftarrow$ Solve problem (\ref{eqn:jdaoII1}) using {\bf Algorithm} \ref{alg:P1} while fixing $\mathbf{\Omega}^{n-1}$
\STATE $\mathbf{\Omega}^{prev}\leftarrow \mathbf{\Omega}^{n-1}$
\LOOP
\STATE $\mathbf{\Omega}^c \leftarrow$ Find the best column permutation of $\mathbf{\Omega}^{prev}$ by solving the linear assignment problem with cost matrix $\mathbf{C}^c$
\STATE $\mathbf{\Omega}^r \leftarrow$ Find the best row permutation of $\mathbf{\Omega}^{c}$ by solving the linear assignment problem with cost matrix $\mathbf{C}^r$
\IF {$|\text{Tr}((\mathbf{\Omega}^r)^T\mathbf{Q}^n(\mathbf{S}\circ\mathbf{S})^T) - \text{Tr}((\mathbf{\Omega}^{prev})^T\mathbf{Q}^n(\mathbf{S}\circ\mathbf{S})^T)| < \delta_1$}
\STATE $\textbf{Break}$
\ENDIF
\STATE $\mathbf{\Omega}^{prev}\leftarrow \mathbf{\Omega}^{r}$
\ENDLOOP
\STATE $\mathbf{\Omega}^n \leftarrow \mathbf{\Omega}^r$
\STATE $n \leftarrow n+1$
\UNTIL{$|\text{EIP}_{II}^n-\text{EIP}_{II}^{n-1}| < \delta_2$ }
\STATE \textbf{Output:} $\{\mathbf{R}_{xl}\} = \{\mathbf{R}_{xl}^n\},\mathbf{\Omega}_{II}=\mathbf{\Omega}^n$
\end{algorithmic}
\end{algorithm}

\section{Spectrum Sharing between Mismatched Systems} \label{sec:discussions}
In Section \ref{sec:sigmodel}, the waveform symbol duration of the radar system is assumed to match that of the communication system. For a typical communication channel with $40\times 10^6Hz$ bandwidth, the maximum symbol rate is $20\times 10^6$ symbols/$s$. For our assumption to be valid, the radar waveform symbol duration need to be $\frac{1}{20} \mu s$, which results in a typical range resolution of $7.5$ meters.

In the following, we consider the mismatched cases. We will show that the proposed techniques presented in the previous sections can still be applied.
Let $f_{s}^R=1/T_R$ and $f_{s}^C$ denote the radar waveform symbol rate and the communication symbol rate, respectively.
Also, let the length of radar waveforms be denoted by $L_R$. The number of communication symbols transmitted in the duration of $L_R/f_s^R$ is $L_C \triangleq \lceil {L_R f_s^C f_s^R}\rceil$. The communication average capacity and transmit power can be expressed in terms of $\{\mathbf{R}_{xl}\}_{l=1}^{L_C}$ as in Section \ref{sec:schemeI}. In the following, we will only focus on the effective interference to the MIMO-MC radar receiver.

If $f_s^R<f_s^C$, the interference arrived at the radar receiver will be down-sampled. Let $\mathcal{I}_1\subset\mathbb{N}^+_{L_C}$ be the set of indices of communication symbols that are sampled by the radar in ascending order. It holds that $|\mathcal{I}_1|=L_R$. Following the derivation in previous sections, we have the following interference power expressions:
\begin{equation*}
\begin{aligned}
\text{EIP}_{I}&=\sum\nolimits_{l\in\mathcal{I}_1}\text{Tr}\left( \mathbf{\Delta}_{l'} \mathbf{G}_{2}\mathbf{R}_{xl}\mathbf{G}^H_{2} \right), \\
\text{EIP}_{II}&=\sum\nolimits_{l\in\mathcal{I}_1}\text{Tr}\left( \mathbf{\Delta}_{l'\xi} \mathbf{G}_{2}\mathbf{R}_{xl}\mathbf{G}^H_{2} \right),
\end{aligned}
\end{equation*}
where $l'\in\mathbb{N}^+_{L_R}$ is the index of $l$ in ordered set $\mathcal{I}_1$. We observe that the communication symbols indexed by $\mathbb{N}^+_{L_C}\setminus\mathcal{I}_1$, which are not sampled by the radar receiver, would introduce zero interference power to the radar system.

If $f_s^R> f_s^C$, the interference arrived at the radar receiver will be over-sampled.
One individual communication symbol will introduce interference to the radar system in $\lfloor f_s^R/f_s^C \rfloor$ consecutive symbol durations.
Let $\tilde{\mathcal{I}}_l$ be the set of radar sampling time instances during the period of the $l$-th communication symbol. Note that $\tilde{\mathcal{I}}_l$ is with cardinality $\lfloor f_s^R/f_s^C \rfloor$, and the collection of sets $\tilde{\mathcal{I}}_1,\dots,\tilde{\mathcal{I}}_{L_C}$ is a partition of $\mathbb{N}_{L_R}^+$. The effective interference power for both schemes of MIMO-MC radar is respectively
\begin{equation*}
\begin{aligned}
\text{EIP}_{I}&=\sum\nolimits_{l=1}^{L_C}\text{Tr}\left( \tilde{\mathbf{\Delta}}_l\mathbf{G}_{2}\mathbf{R}_{xl}\mathbf{G}^H_{2} \right),\\
\text{EIP}_{II}&=\sum\nolimits_{l=1}^{L_C}\text{Tr}\left( \tilde{\mathbf{\Delta}}_{l\xi}\mathbf{G}_{2}\mathbf{R}_{xl}\mathbf{G}^H_{2} \right),
\end{aligned}
\end{equation*}
where $\tilde{\mathbf{\Delta}}_l = \sum\nolimits_{l'\in\tilde{\mathcal{I}}_{l}}\mathbf{\Delta}_{l'}$ and $\tilde{\mathbf{\Delta}}_{l\xi} = \sum\nolimits_{l'\in\tilde{\mathcal{I}}_{l}}\mathbf{\Delta}_{l'\xi}$.
We observe that each individual communication transmit covariance matrix will be weighted by the sum of interference channels for $\lfloor f_s^R/f_s^C \rfloor$ radar symbol durations instead of one single interference channel.

We conclude that in the above mismatched cases, the EIP expressions have the same form as those in the matched case except the diagonal matrices $\mathbf{\Delta}_l$ and $\mathbf{\Delta}_{l\xi}$. To calculate the corresponding diagonal matrices, the communication system only needs to know the sampling time of the radar system. Therefore, the spectrum sharing problems in such cases can still be solved using the proposed algorithms of Sections \ref{sec:schemeI} and \ref{sec:schemeII}. Further investigation will be considered as our future work.

\section{Numerical Results} \label{sec:simulation}
For the simulations, we set the number of symbols to $L=32$ and the noise variance to $\sigma^2_C=0.01$.
The MIMO radar system consists of colocated TX and RX antennas forming half-wavelength uniform linear arrays, and transmitting Gaussian orthogonal waveforms \cite{Sun13}.
The channel $\mathbf{H}$ is taken to have independent entries, distributed as $\mathcal{CN}(0,1)$. The interference channels $\mathbf{G}_1$ and $\mathbf{G}_2$ are generated with independent entries, distributed as $\mathcal{CN}(0,\sigma_1^2)$ and $\mathcal{CN}(0,\sigma_2^2)$, respectively. We fix $\sigma_1^2=\sigma_2^2=0.1$ unless otherwise stated. The maximum communication transmit power is set to $P_t = L$ (the power is normalized w.r.t the power of radar waveforms). The propagation path from the radar TX antennas to the radar RX antennas via the far-field target introduces a much more severe loss of power, $\gamma^2$, which is set to $-30$dB in the simulations.
The transmit power of the radar antennas is fixed to $\rho^2=\rho_0\triangleq1000L/M_{t,R}$ unless otherwise stated, and noise in the received signal is added at SNR$=25$dB. The phase jitter variance is taken to be $\sigma_{\alpha}^2=10^{-3}$.
The same uniformly random sampling scheme $\mathbf{\Omega}^0$ is adopted by the radar in both the noncooperative and the cooperative spectrum sharing (SS) methods. The joint-design SS method uses the same sampling matrix as its initial sampling matrix.
The TFOCUS package \cite{TFOCUS} is used for low-rank matrix completion at the radar fusion center.
The communication covariance matrix is optimized according to the criteria of Sections \ref{sec:schemeI} and \ref{sec:schemeII}. The obtained $\mathbf{R}_{xl}$ is used to generate $\mathbf{x}(l)=\mathbf{R}_{xl}^{1/2}\text{randn}(M_{t,C},1)$. We use the EIP and MC relative recovery error as the performance metrics. The relative recovery error is defined as $\lVert \mathbf{DS}-\hat{\mathbf{DS}}\rVert_F /\lVert \mathbf{DS}\rVert_F$ for Scheme I and $\lVert \mathbf{D}-\hat{\mathbf{D}}\rVert_F/\lVert \mathbf{D}\rVert_F$ for Scheme II, where $\hat{\mathbf{DS}}$ and $\hat{\mathbf{D}}$ are the completed results of $\mathbf{DS}$ and $\mathbf{D}$, respectively.
For comparison, we also implement a ``selfish communication" scenario,
where the communication system minimizes the transmit power to achieve certain average capacity  without any concern about the interferences it exerts to the radar system.

\subsection{Spectrum Sharing between a Scheme I radar and a MIMO Communication System }
\begin{figure}[htb]
\vspace{-2mm}
  \centering
  \subfigure{
  \includegraphics[width=4.3cm]{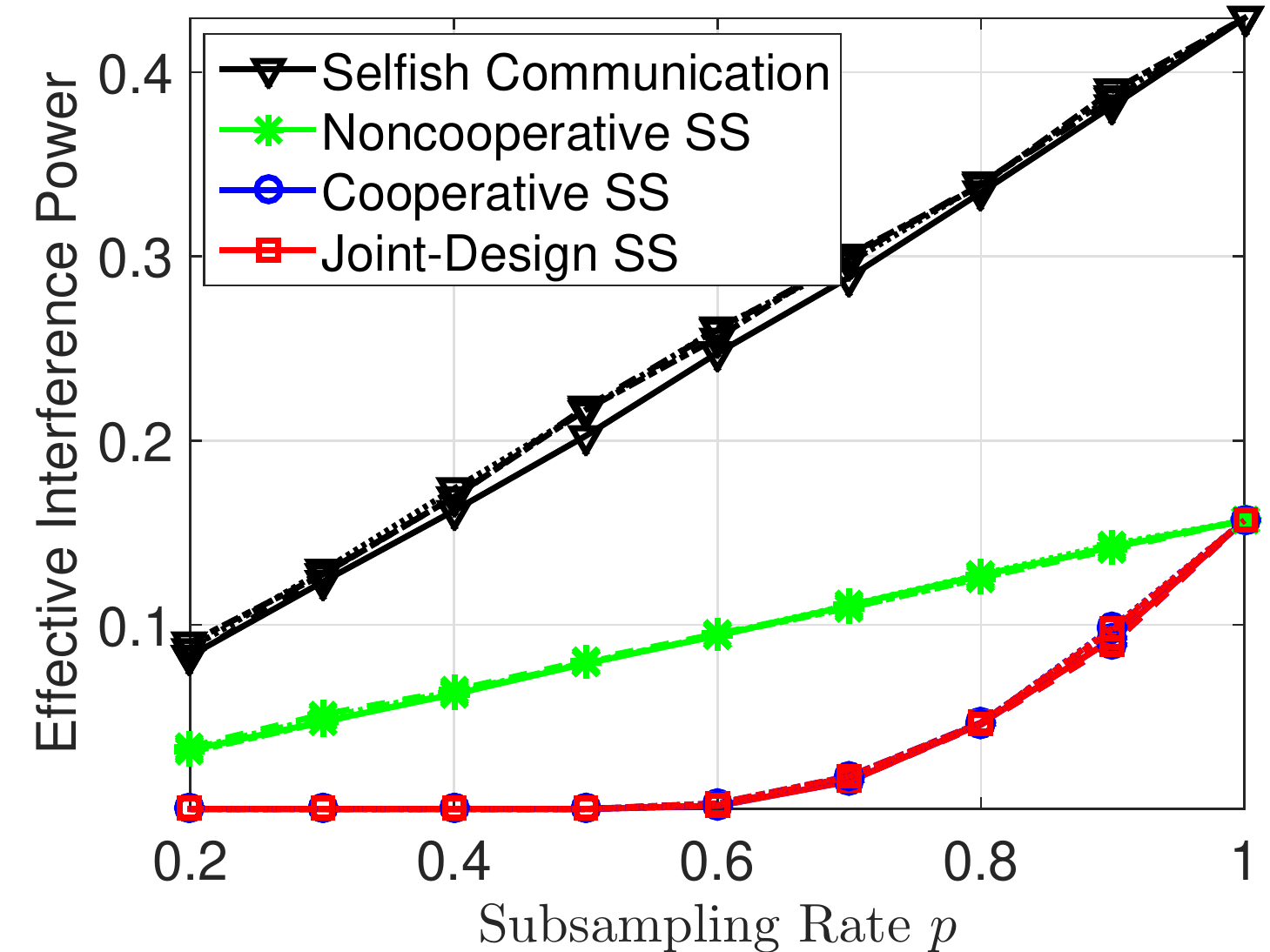}
  }
  \hspace{-5mm}
  \subfigure{
  \includegraphics[width=4.3cm]{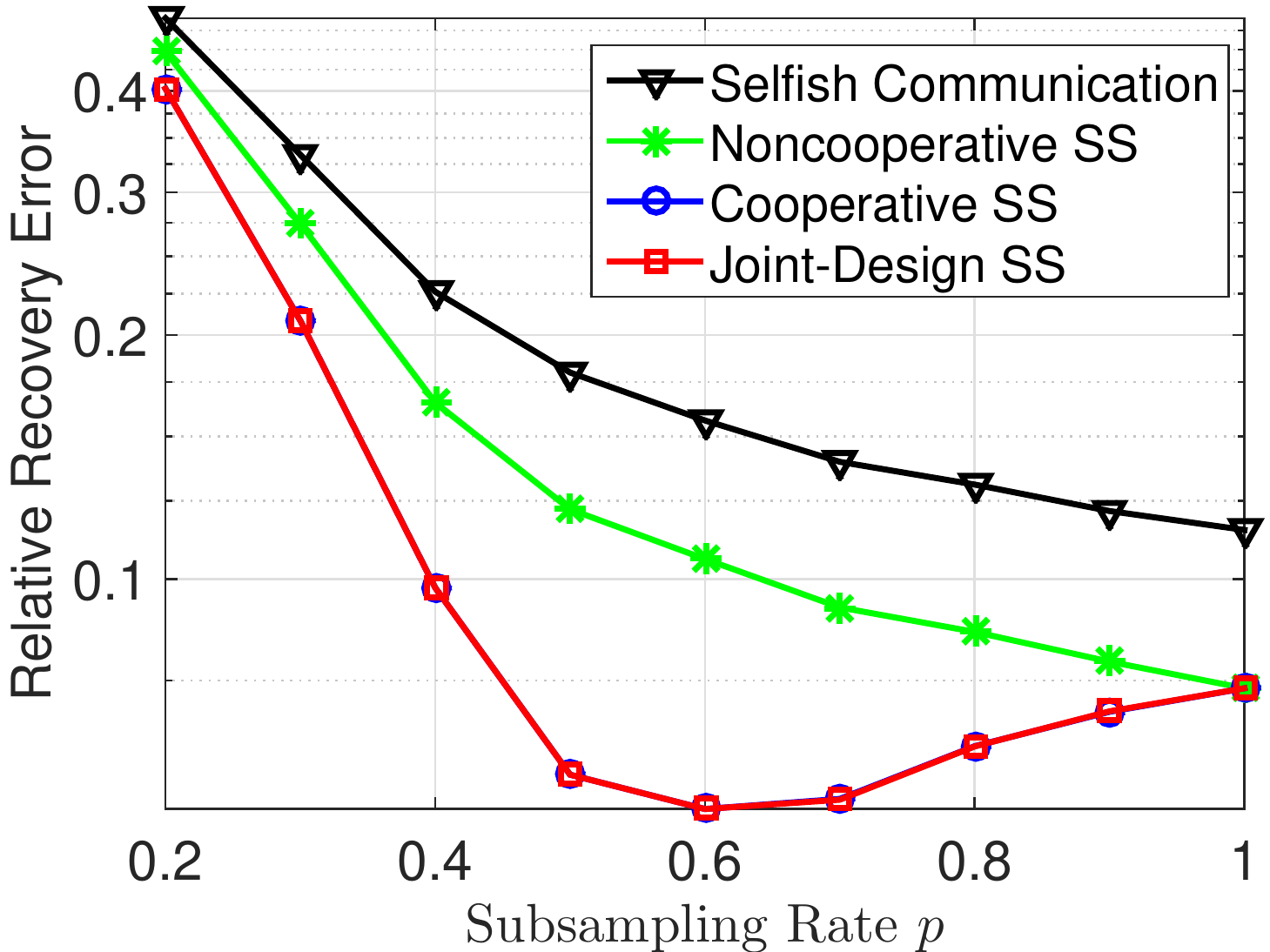}
  }
  \vspace{-5mm}
  \caption{Spectrum sharing with the Scheme I radar under different sub-sampling rates. $M_{t,R}=4, M_{r,R}=M_{t,C}=8,M_{r,C}=4$.} \label{fig:SchemeIpercent1}
\end{figure}

\begin{figure}[htb]
\vspace{-2mm}
  \centering
  \subfigure{
  \includegraphics[width=4.3cm]{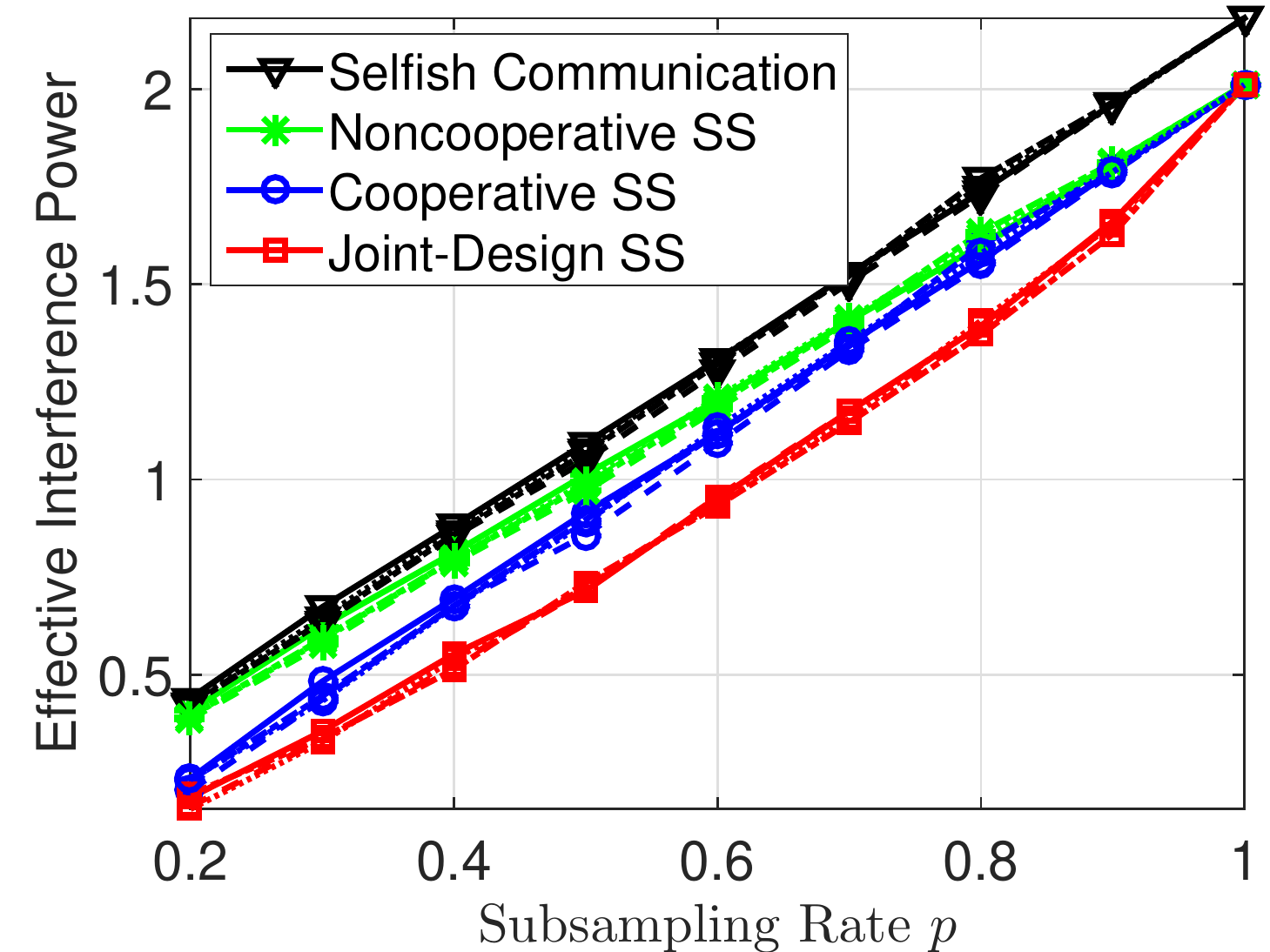}
  }
  \hspace{-5mm}
  \subfigure{
  \includegraphics[width=4.3cm]{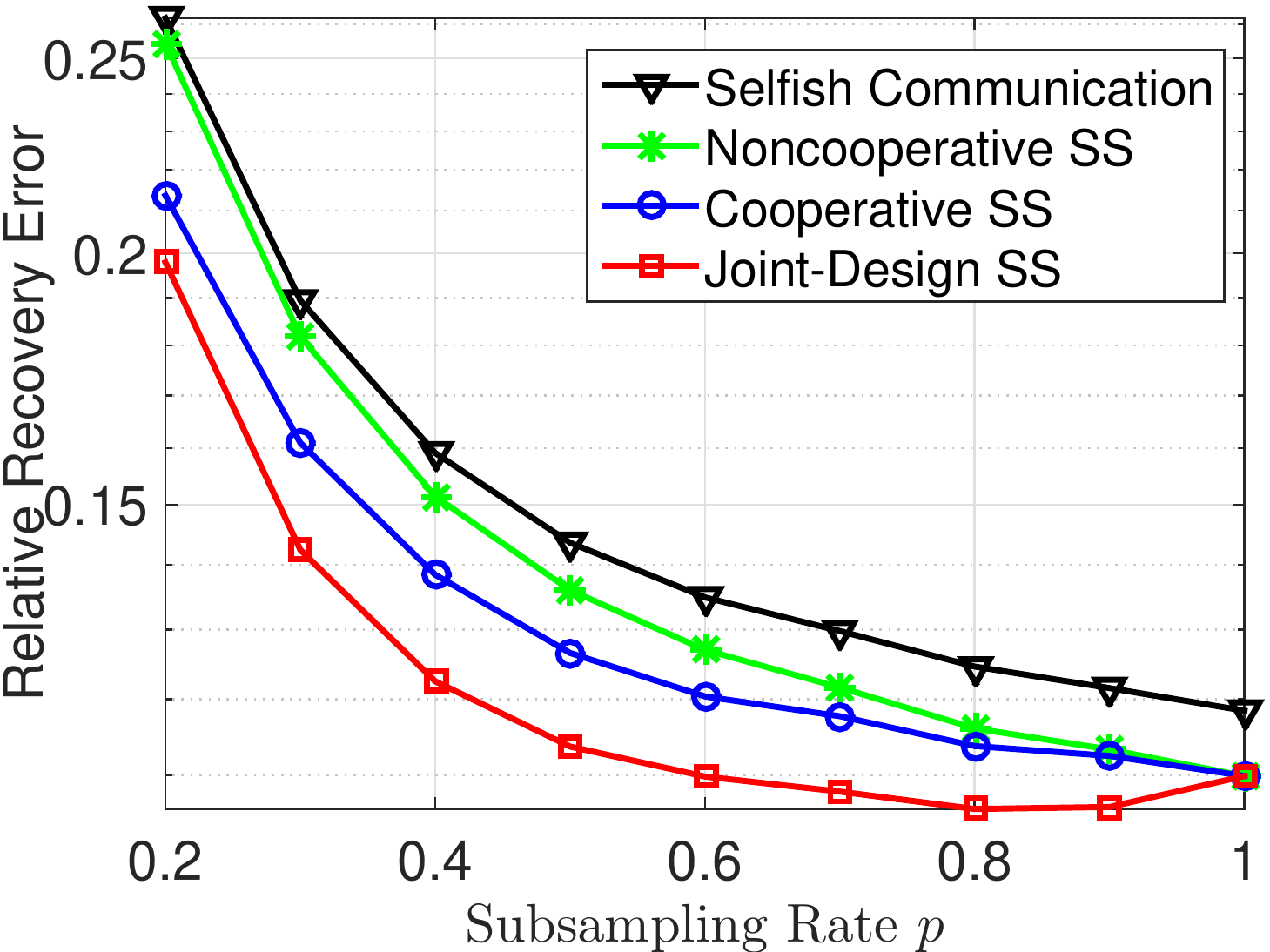}
  }
  \vspace{-5mm}
  \caption{Spectrum sharing with the Scheme I radar under different sub-sampling rates. $M_{t,R}=16,M_{r,R}=32, M_{t,C}=M_{r,C}=4$. } \label{fig:SchemeIpercent2}
\end{figure}

\subsubsection{Performance under different sub-sampling rates}
There is a far-field stationary target at angle $30^\circ$ w.r.t. the radar arrays, with target reflection coefficient equal to $0.2+0.1j$.
For the communication capacity constraint, we consider $C = 12$ bits/symbol. 
The sub-sampling rate of Scheme I radar varies from $0.2$ to $1$. The following two scenarios are considered.

In the first scenario, we use $M_{t,R}=4, M_{r,R}=M_{t,C}=8,M_{r,C}=4$. We plot the EIP results for $4$ different realizations of $\mathbf{\Omega}^0$ in Fig. \ref{fig:SchemeIpercent1}(a). For better visualization, Fig. \ref{fig:SchemeIpercent1}(b) shows the relative recovery errors averaged over all $4$ realization of $\mathbf{\Omega}^0$.
The cooperative spectrum sharing (SS) method (see ($\mathbf{P}_1$)) outperforms its noncooperative counterpart (see ($\mathbf{P}_0$)) in terms of both EIP and MC relative recovery error.  As discussed in Section \ref{sec:schemeI}, the EIP is significantly reduced by the cooperative SS method when $p < 0.6$, i.e., when $pM_{r,R}$ is much smaller than $M_{t,C}$.
The cooperative SS method performs almost the same as the joint-design method in this scenario. One possible reason is that the row dimension of $\mathbf{\Omega}$ is too small to generate sufficient difference in EIP among the permutations of $\mathbf{\Omega}$.

In the second scenario,
we choose $M_{t,R}=16,M_{r,R}=32, M_{t,C}=4,M_{r,C}=4$. In Fig. \ref{fig:SchemeIpercent2}(a), we plot the EIP corresponding to $4$ different realization of $\mathbf{\Omega}^0$. Again, Fig. \ref{fig:SchemeIpercent2}(b) shows the relative recovery errors averaged over all $4$ realization of $\mathbf{\Omega}^0$. The cooperative SS method outperforms the noncooperative SS one only marginally. This is due to the fact that both $\mathbf{G}_2$ and $\mathbf{G}_{2l}$ are full rank. The joint-design method for SS in Section \ref{sec:jointdesignI} optimizes $\mathbf{\Omega}$ starting from the same sampling matrix used by the other three methods.
Fig. \ref{fig:SchemeIpercent2} suggests that the joint-design SS method achieves smaller EIP and relative recovery errors than the other three methods.

We should note that when $p$ decreases, the null space of $\mathbf{G}_{2l}$ expands with high probability, and the EIP of the cooperative SS method is reduced. However, if $p$ is too small, the MC recovery at the fusion center fails. In the above scenarios, we would like $p\ge 0.4$ for a small relative recovery error in matrix completion. However, values of $p>0.6$ require more samples while achieving little, or even no improvement on the relative recovery error. Therefore, the optimal range of $p$ is $[0.4,0.6]$, where the proposed joint-design SS method reduces the EIP by at least $20$\% over the ``selfish communication method".
In conclusion, the sub-sampling procedure in Scheme I radar is beneficial in terms of reducing the effective interference power from the communication system as well as reducing the amount of data to be sent to the fusion center. In addition, simulations indicate that the communication average capacity constraint holds with equality in both scenarios, confirming observation $(1)$ of Section \ref{sec:cooperI}.

\begin{figure}[htb]
\vspace{-2mm}
  \centering
  \subfigure{
  \includegraphics[width=4.3cm]{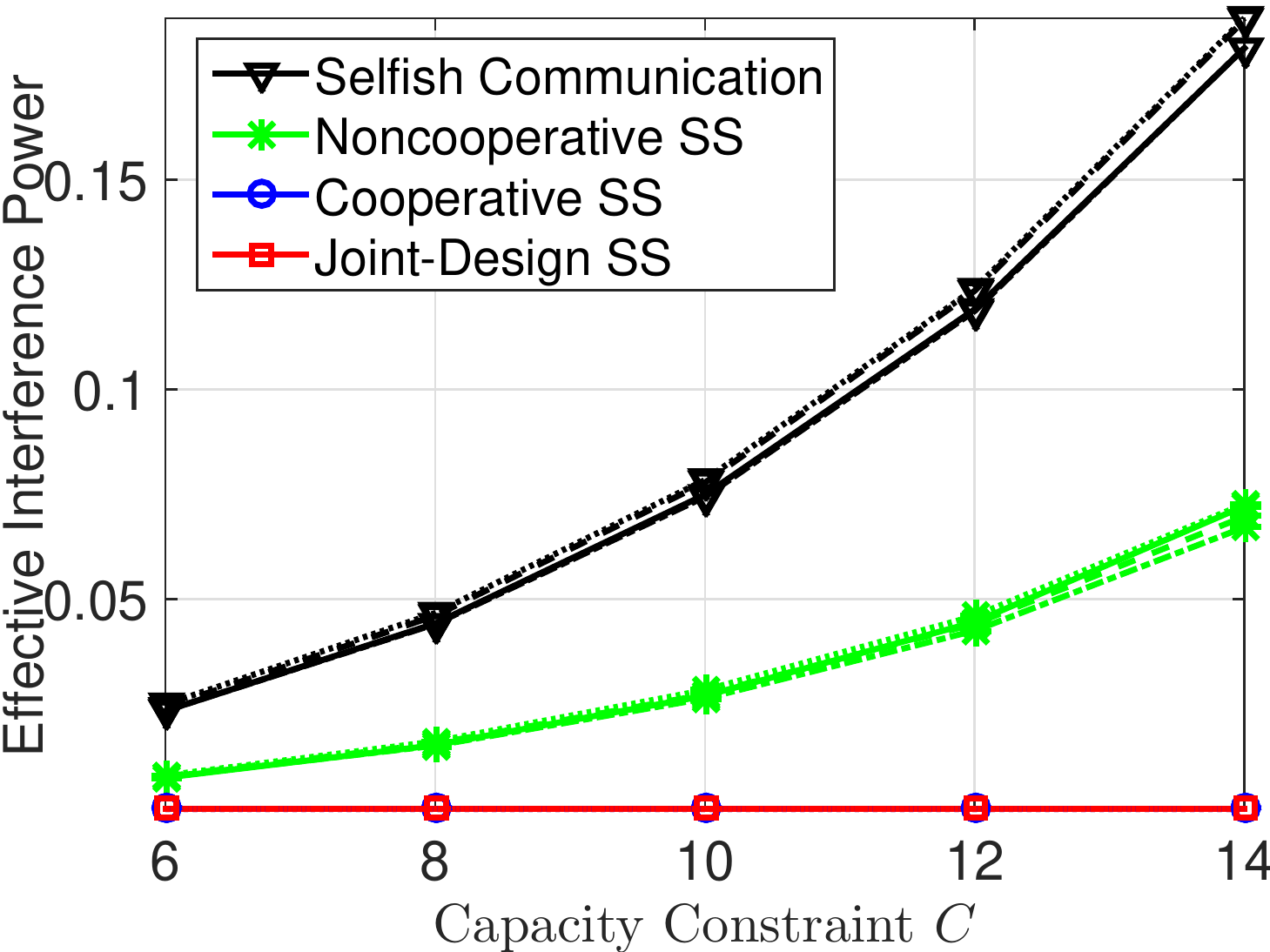}
  }
  \hspace{-5mm}
  \subfigure{
  \includegraphics[width=4.3cm]{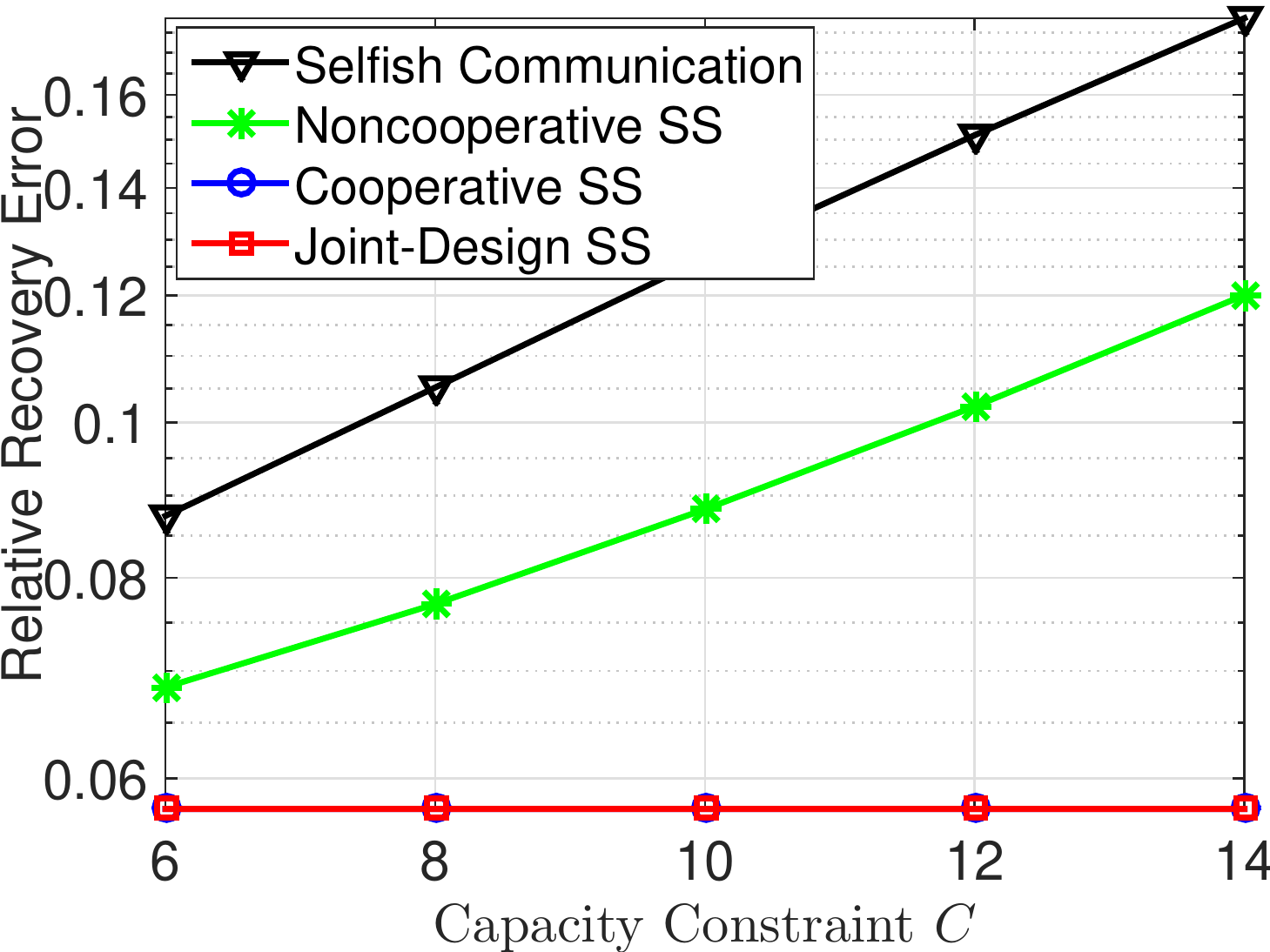}
  }
  \vspace{-5mm}
  \caption{Spectrum sharing with the Scheme I radar under different capacity constraints $C$. $M_{t,R}=4, M_{r,R}=M_{t,C}=8,M_{r,C}=4$.} \label{fig:SchemeIcapacity1}
\end{figure}

\begin{figure}[htb]
\vspace{-2mm}
  \centering
  \subfigure{
  \includegraphics[width=4.3cm]{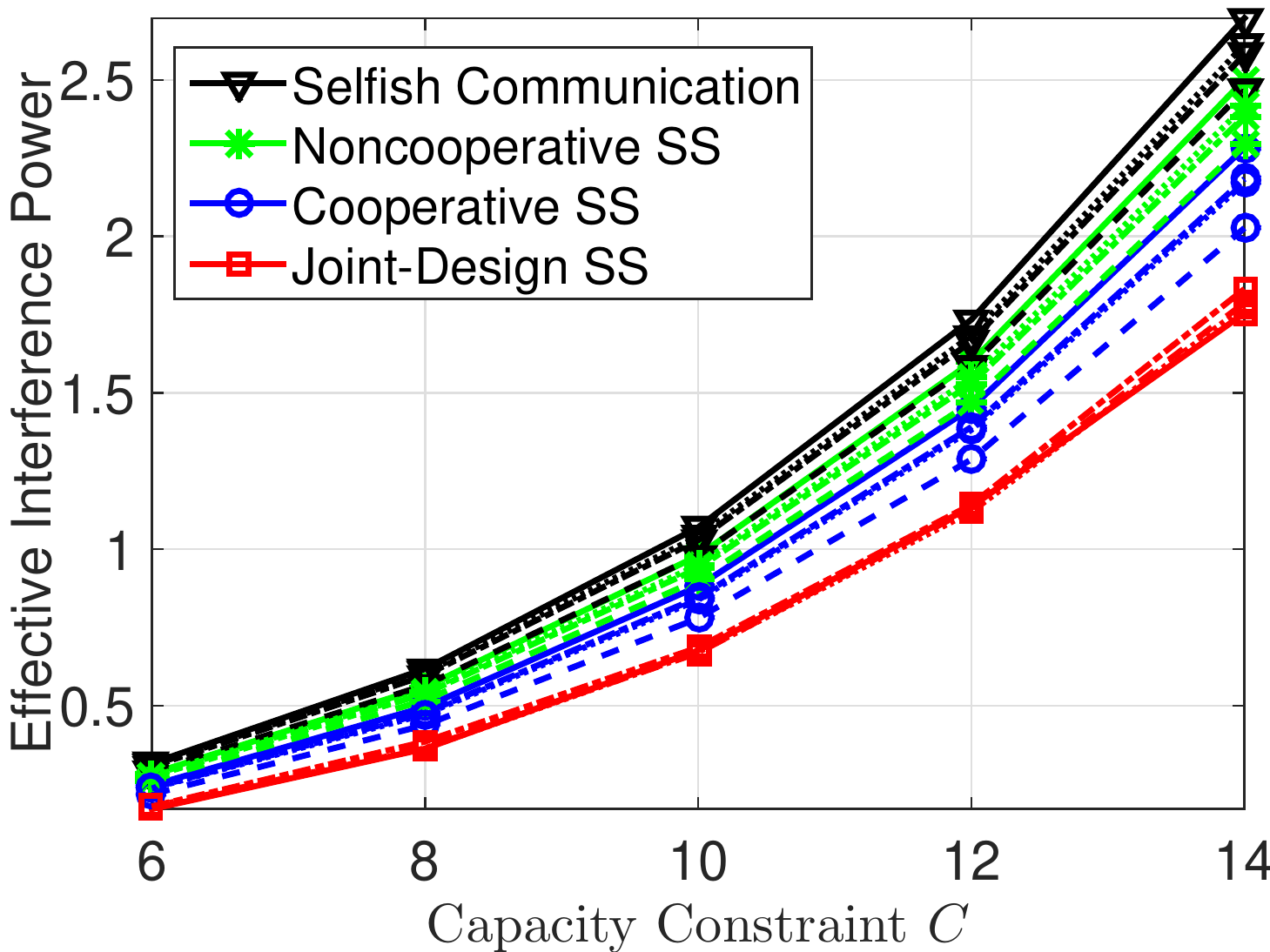}
  }
  \hspace{-5mm}
  \subfigure{
  \includegraphics[width=4.3cm]{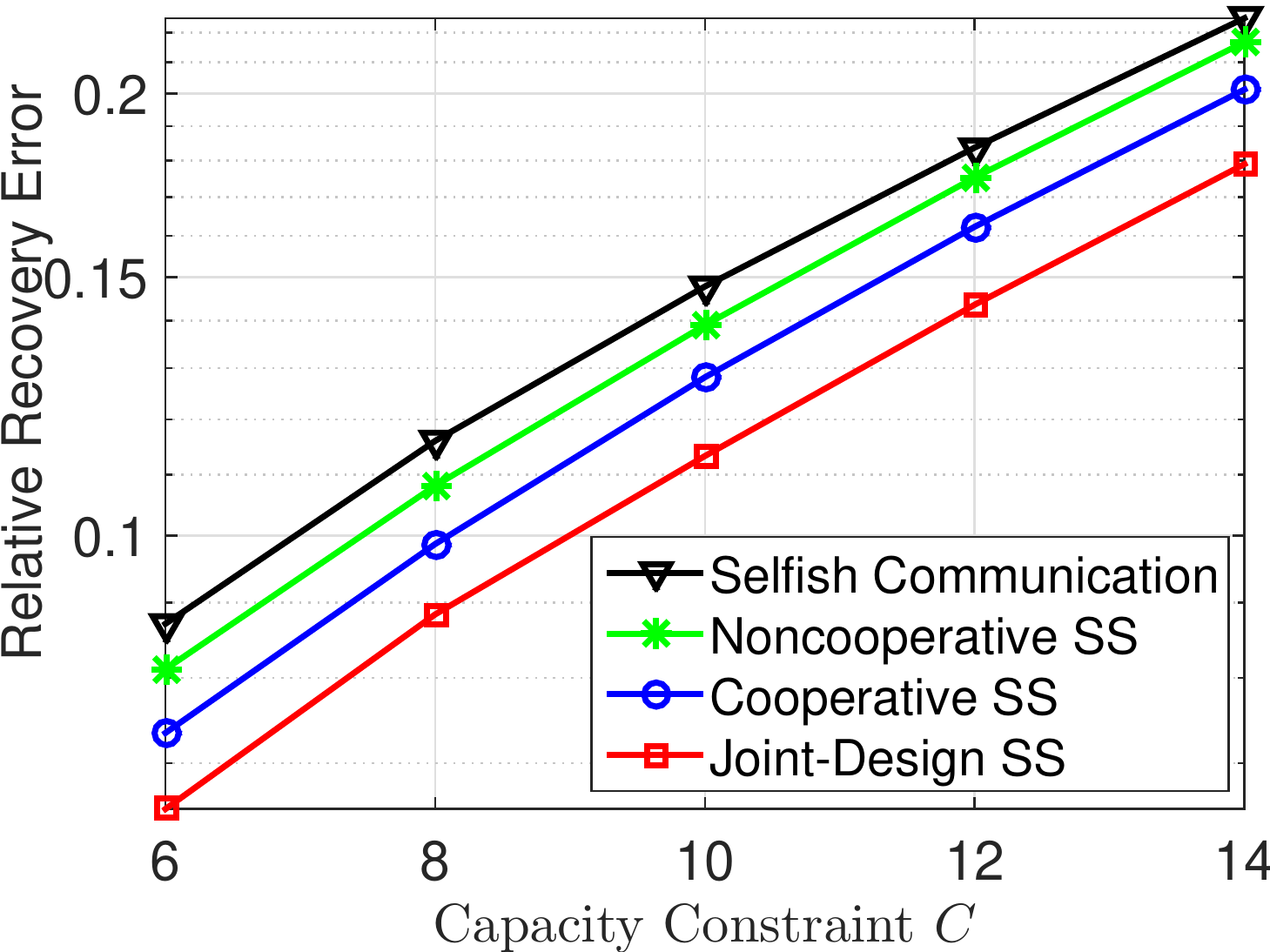}
  }
  \vspace{-5mm}
  \caption{Spectrum sharing with the Scheme I radar under different capacity constraints $C$. $M_{t,R}=16,M_{r,R}=32, M_{t,C}=M_{r,C}=4$. } \label{fig:SchemeIcapacity2}
\end{figure}
\subsubsection{Performance under different capacity constraints}
In this simulation, the constant $C$ in the communication capacity constraint of (\ref{eqn:constrSR}) varies from $6$ to $14$ bits/symbol, while the sub-sampling rate $p$ is fixed to $0.5$. Four different realizations of $\Omega^0$ are considered. Fig. \ref{fig:SchemeIcapacity1} shows the results for $M_{t,R}=4, M_{r,R}=M_{t,C}=8,M_{r,C}=4$. For the ``selfish communication" and noncooperative SS methods, the EIP and relative recovery errors increase as the communication capacity increases. In contrast, the cooperative and joint-design SS methods achieve significantly smaller EIP and relative recovery errors under all values of $C$. This indicates that the latter two SS methods successfully allocate the communication transmit power in directions that result in high communication rate, but small EIP to the Scheme I radar.

The results for $M_{t,R}=16,M_{r,R}=32, M_{t,C}=M_{r,C}=4$ are shown in Fig. \ref{fig:SchemeIcapacity2}. Since $M_{r,R}$ is much larger than $M_{t,C}$, the cooperative SS method outperforms the noncooperative counterpart only marginally. Meanwhile, the joint-design SS method can effectively further reduce the EIP and relative recovery errors.

\begin{figure}
  \centering
  \includegraphics[width=6cm]{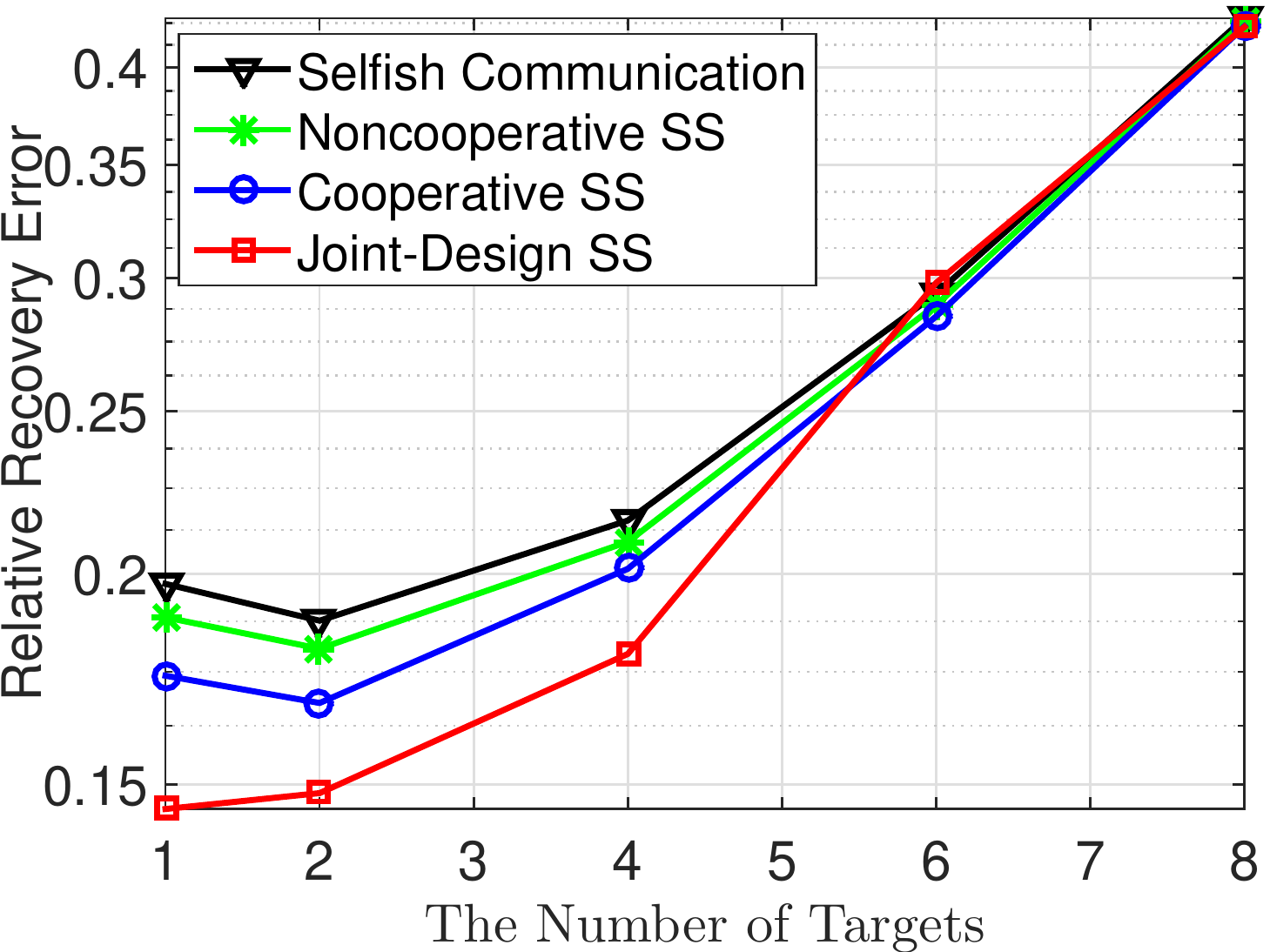}\\
  \caption{Spectrum sharing with the Scheme I radar when multiple targets present. $M_{t,R}=16,M_{r,R}=32, M_{t,C}=M_{r,C}=4$, $p=0.5$ and $C=12$ bits/symbol.}\label{fig:SchemeIntarget}
\end{figure}

\subsubsection{Performance under different number of targets}
In this simulation, we fix $p=0.5$ and $C=12$ and evaluate the performance when multiple targets are present. The target reflection coefficients are designed such that the target returns have fixed power, independent of the number of targets. We observe that the EIPs of different methods remain constant for different number of targets. This is because the design of the communication waveforms is not affected by the target number. Fig. \ref{fig:SchemeIntarget} shows the results of the relative recovery error, which increases as the number of targets increases. All methods have large recovery error for large number of targets, because the retained samples are not sufficient for reliable matrix completion under any level of noise. The proposed joint-design SS method can work effectively for the Scheme I radar when a moderate number of targets are present.

\begin{figure}[htb]
\vspace{-2mm}
  \centering
  \subfigure{
  \includegraphics[width=4.3cm]{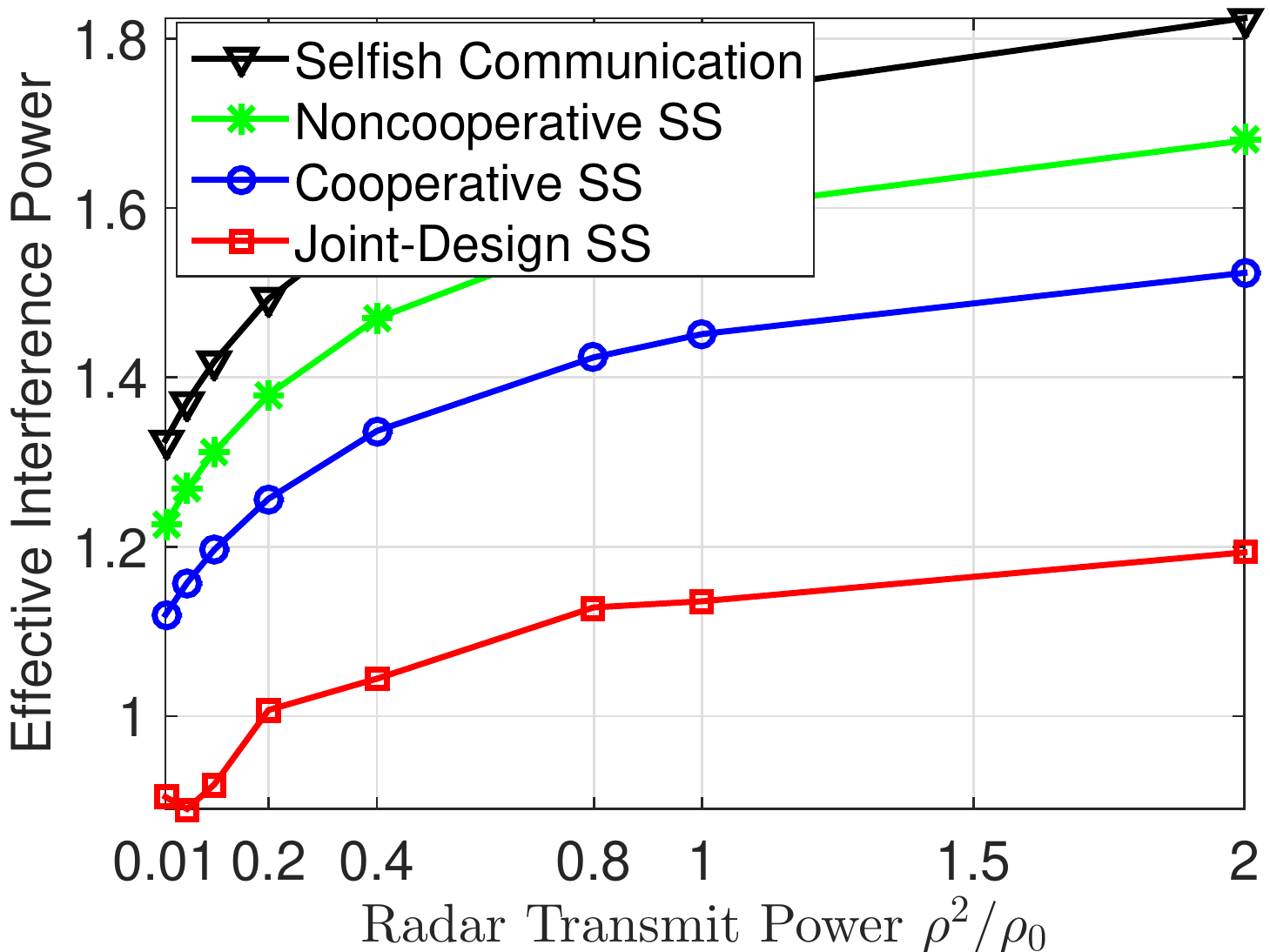}
  }
  \hspace{-5mm}
  \subfigure{
  \includegraphics[width=4.3cm]{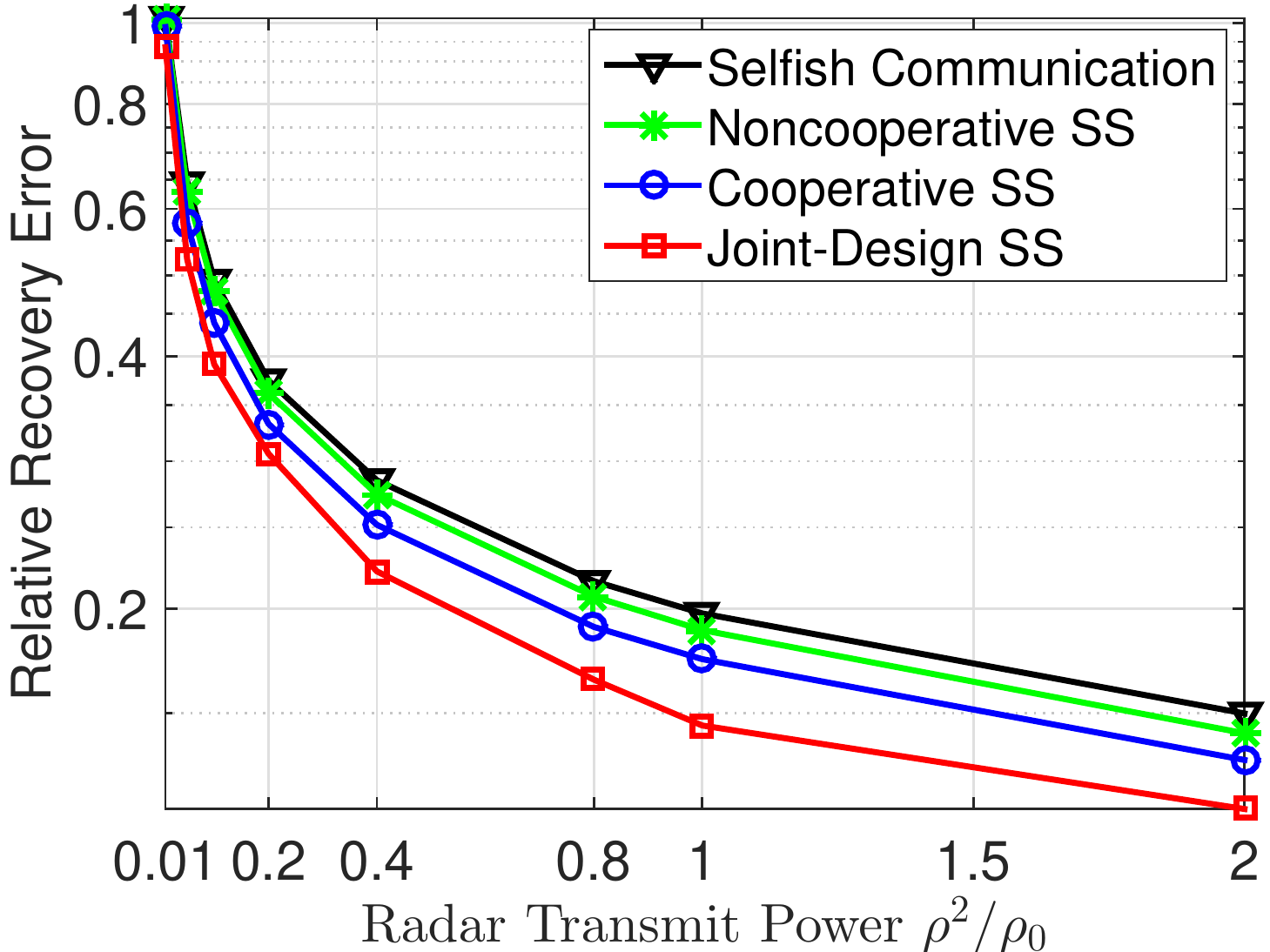}
  }
  \vspace{-5mm}
  \caption{Spectrum sharing with the Scheme I radar under different levels of radar TX power. $M_{t,R}=16, M_{r,R}=32,M_{t,C}=M_{r,C}=4$.} \label{fig:SchemeITXpower}
\end{figure}
\subsubsection{Performance under different levels of radar TX power}
In this simulation, we evaluate the effect of radar TX power $\rho_2$, while fixing $p=0.5$, $C=12$ and the target number to be $1$. Fig. \ref{fig:SchemeITXpower} shows the results of EIP and relative recovery errors for $M_{t,R}=16, M_{r,R}=32,M_{t,C}=M_{r,C}=4$. Again, we see that the joint-design SS method performs the best, followed by the cooperative and then the noncooperative one. When the radar TX power increases, the EIP increases but with a much slower rate. Therefore, increasing radar TX power improves the relative recovery errors.

\begin{figure}[htb]
\vspace{-2mm}
  \centering
  \subfigure{
  \includegraphics[width=4.3cm]{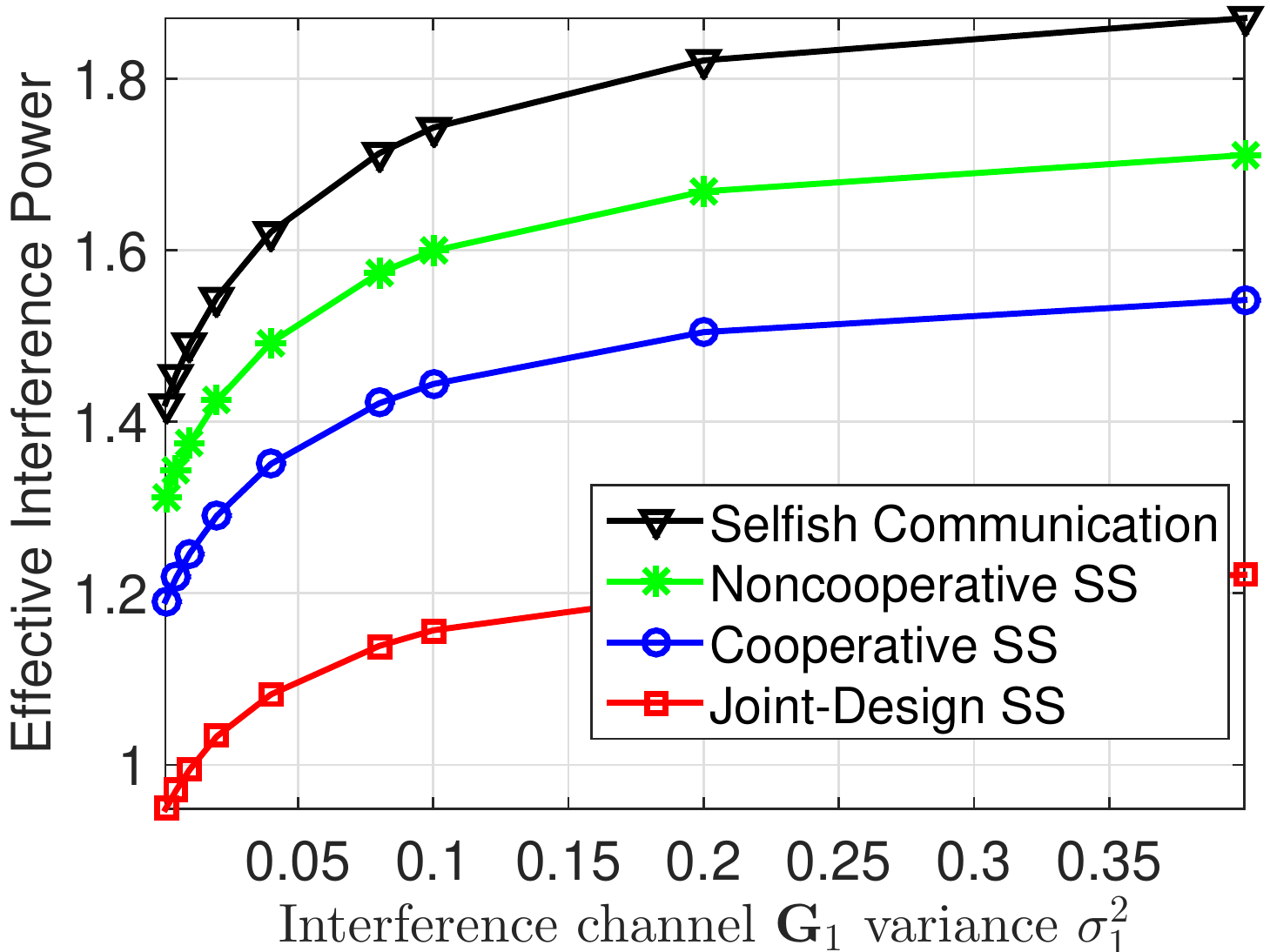}
  }
  \hspace{-5mm}
  \subfigure{
  \includegraphics[width=4.3cm]{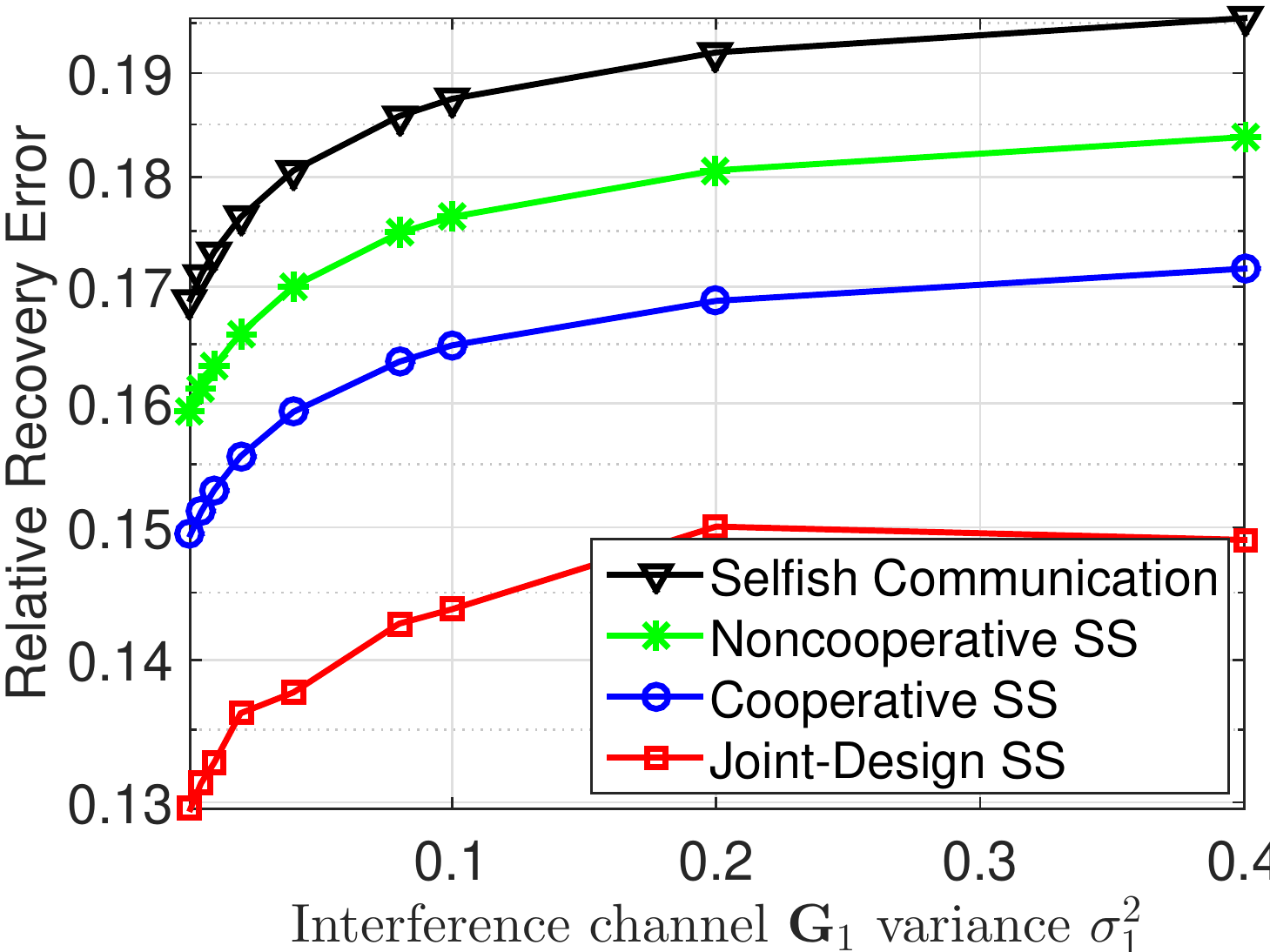}
  }
  \vspace{-5mm}
  \caption{Spectrum sharing with the Scheme I radar under different channel variance $\sigma_1^2$ for the interference channel $\mathbf{G}_1$. $M_{t,R}=16, M_{r,R}=32,M_{t,C}=M_{r,C}=4$.} \label{fig:SchemeIG1var}
\end{figure}
\subsubsection{Performance under different interference channel strength}
In this simulation, we evaluate the effect the interference channel $\mathbf{G}_1$ with different $\sigma_1^2$, while fixing $p=0.5$, $C=12$ and the target number to be $1$. As the communication RX gets closer to the radar TX antennas, $\sigma_1^2$ gets larger. Fig. \ref{fig:SchemeIG1var} shows the results of EIP and relative recovery errors for $M_{t,R}=16, M_{r,R}=32,M_{t,C}=M_{r,C}=4$. For all the SS methods, when the interference channel $\mathbf{G}_1$ gets stronger, the communication TX increases its transmit power in order to satisfy the capacity constraint. Therefore, the EIP and the relative recovery errors increases with the variance $\sigma_1^2$. We also observe that the joint-design SS method performs the best, followed by the cooperative and then the noncooperative one.

\subsection{Performance of the Scheme II radar and a MIMO Communication System Spectrum Sharing}
\begin{figure}[htb]
\vspace{-2mm}
  \centering
  \subfigure{
  \includegraphics[width=4.3cm]{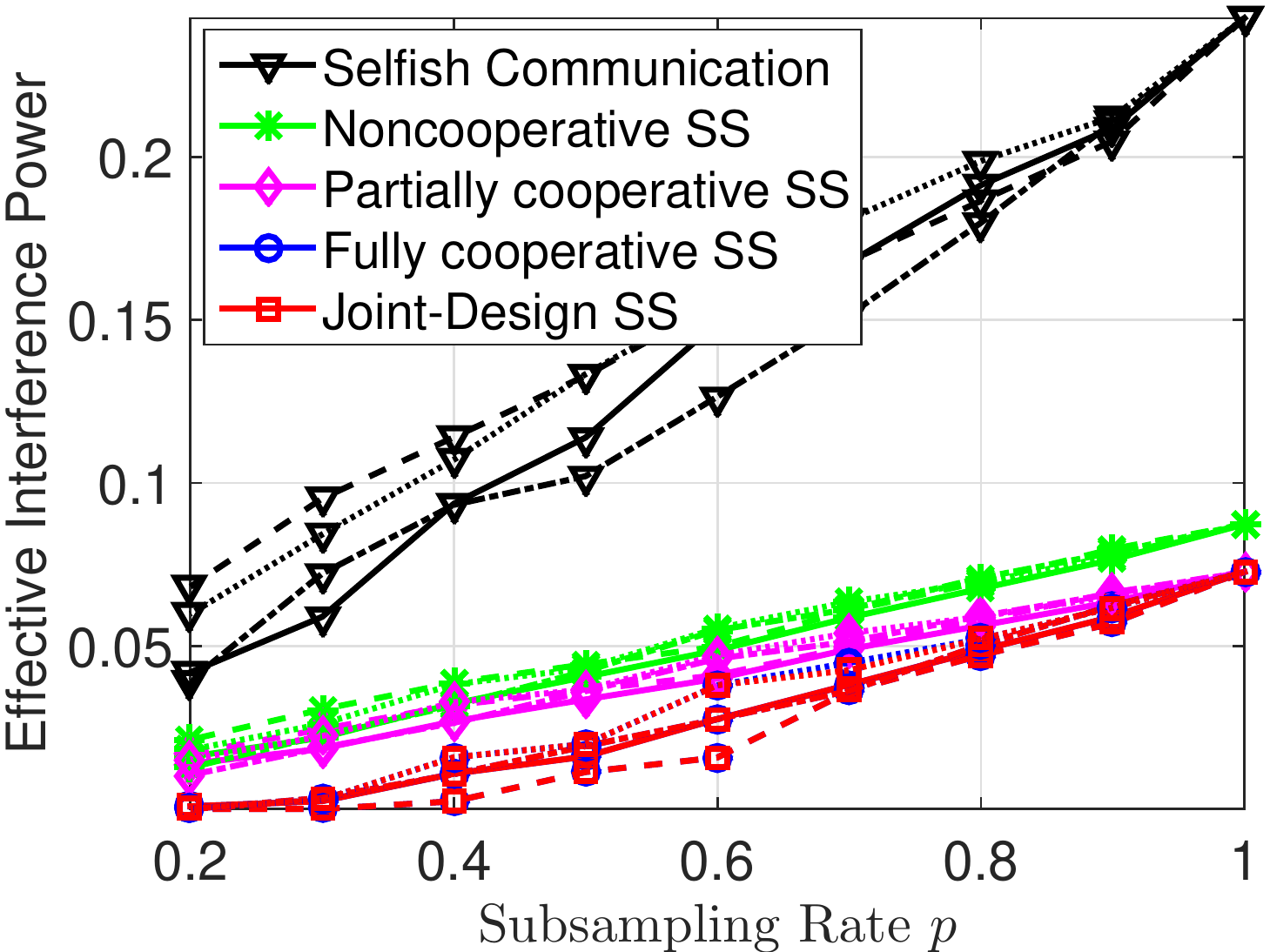}
  }
  \hspace{-5mm}
  \subfigure{
  \includegraphics[width=4.3cm]{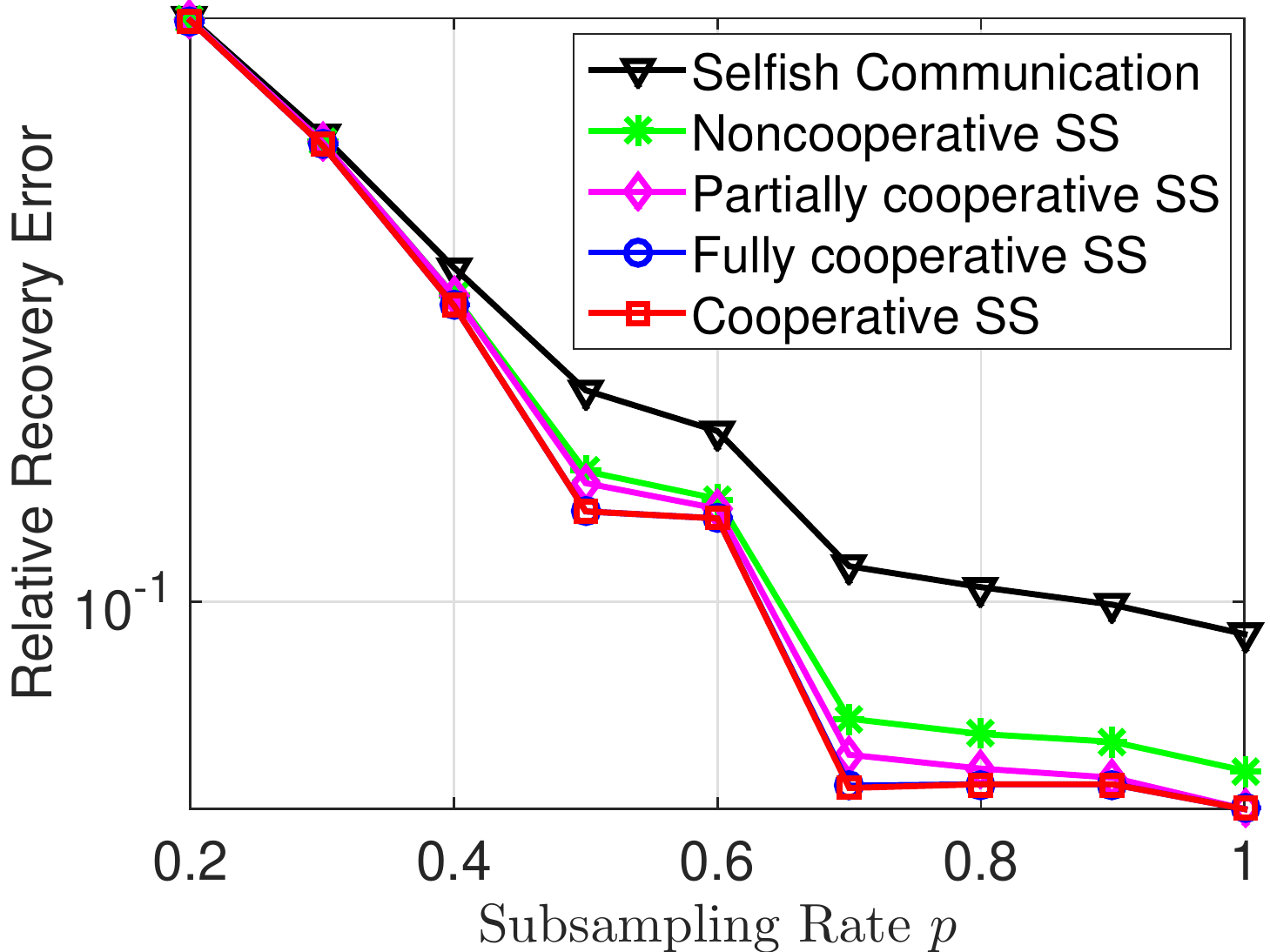}
  }
  \vspace{-5mm}
  \caption{Spectrum sharing with the Scheme II radar under different sub-sampling rates. $M_{t,R}=4, M_{r,R}=M_{t,C}=8,M_{r,C}=4$.} \label{fig:SchemeIIpercent1}
\end{figure}

\begin{figure}[htb]
\vspace{-2mm}
  \centering
  \subfigure{
  \includegraphics[width=4.3cm]{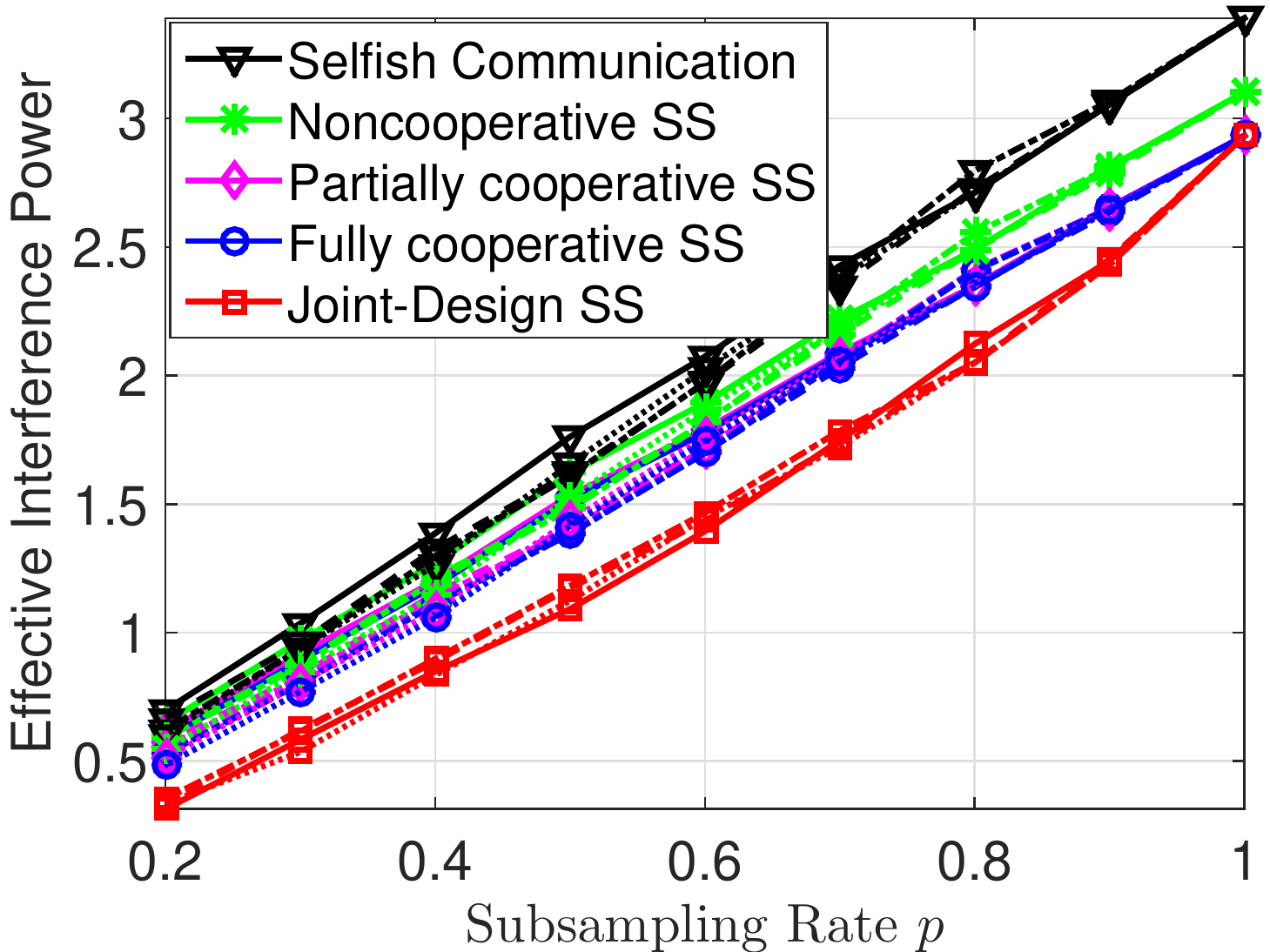}
  }
  \hspace{-5mm}
  \subfigure{
  \includegraphics[width=4.3cm]{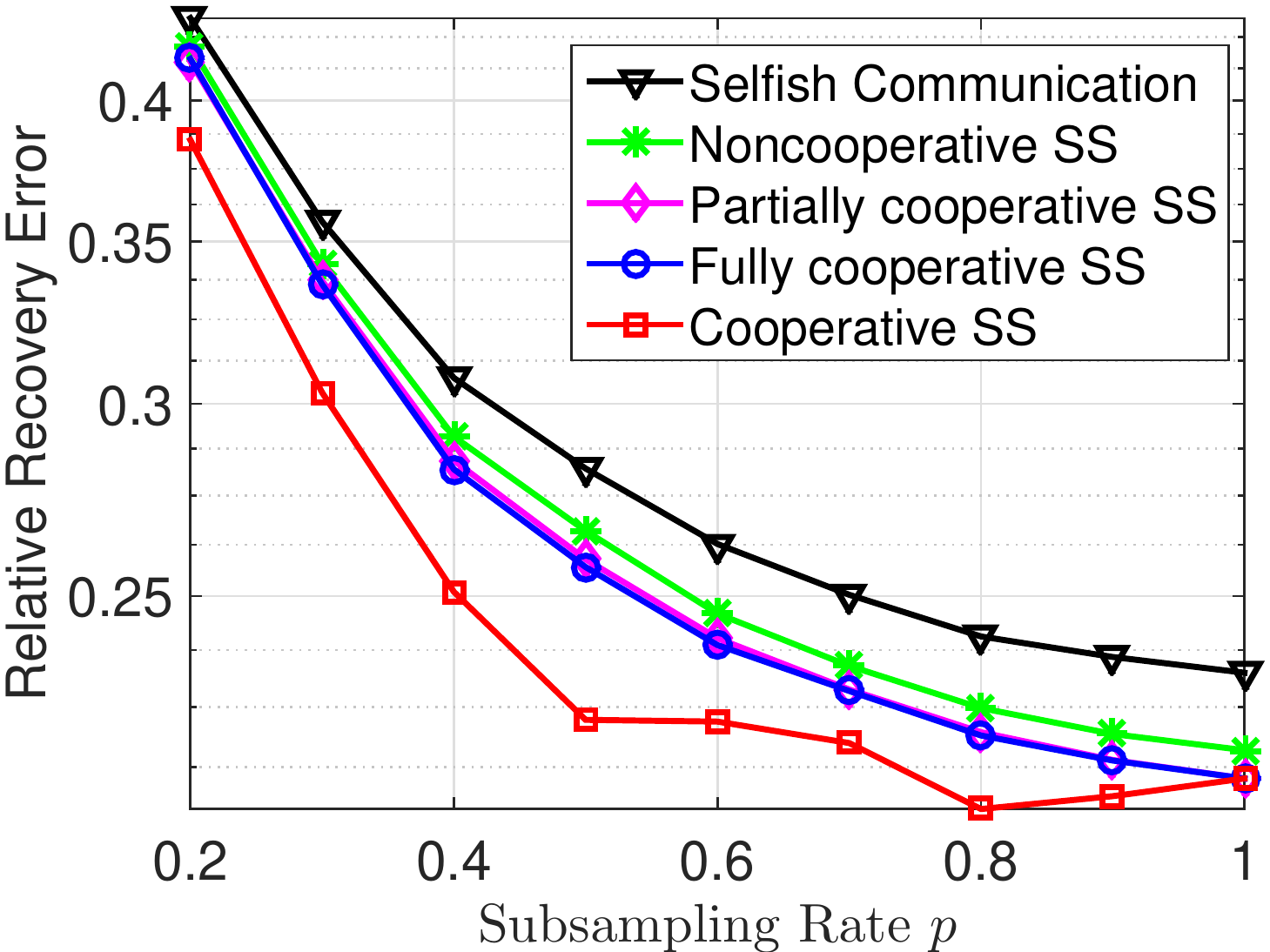}
  }
  \vspace{-5mm}
  \caption{Spectrum sharing with the Scheme II radar under different sub-sampling rates. $M_{t,R}=16,M_{r,R}=32, M_{t,C}=M_{r,C}=4$.} \label{fig:SchemeIIpercent2}
\end{figure}
\subsubsection{Performance under different sub-sampling rates}
Here we consider one far-field stationary target at angle $30^\circ$ w.r.t. the radar arrays, with target reflection coefficient equal to $0.1e^{j\pi/4}$.
For the communication capacity and power constraints, we consider $C = 12$ bits/symbol and $P_t = L$. Again, the sub-sampling rate of Scheme II radar varies from $0.2$ to $1$.
The following two scenarios are considered.

In the first scenario, we consider $M_{t,R}=4, M_{r,R}=M_{t,C}=8,M_{r,C}=4$. The EIP to the Scheme II radar and the relative recovery errors of the matrix completion are shown in Fig. \ref{fig:SchemeIIpercent1}. The EIPs are shown for four realizations of $\mathbf{\Omega}^0$, while the relative recovery errors are the average over the realizations of $\mathbf{\Omega}^0$. We observe that all four proposed SS methods achieve significant interference reduction compared to the ``selfish communication method". Higher level cooperation between the MIMO-MC radar and communication systems achieves greater EIP reduction. The fully cooperative and joint-design SS methods outperform their partially cooperative, and noncooperative counterparts, which validates the statement in Theorem \ref{thm:MIMO-MCII}. However, for small $p$'s, the improvement achieved by the fully cooperative SS method is not as significant as that when the Scheme I radar is considered (see Fig. \ref{fig:SchemeIpercent1}). This is reasonable because $\mathbf{\Delta}_{l\xi}$ in the expression of $\text{EIP}_{II}$ is always full rank even for small values of $p$. Decreasing $p$ will not reduce the rank of effective reference channel $\sqrt{\mathbf{\Delta}_{l\xi}}\mathbf{G}_2$. Therefore, the communication system cannot find a direction that would introduce zero EIP to the radar.

In the second scenario, we consider $M_{t,R}=16, M_{r,R}=32, M_{t,C}=M_{r,C}=4$. Fig. \ref{fig:SchemeIIpercent2} shows the EIP and the relative recovery errors of the matrix completion. Again, the EIPs are shown for four realizations of $\mathbf{\Omega}^0$, while the relative recovery errors are the average over the realizations of $\mathbf{\Omega}^0$. The joint-design SS method achieves much smaller EIP and relative recovery errors than the other four methods. This validates the effectiveness of the proposed joint-design SS method for the Scheme II radar. We conclude that the MC approach benefits the Scheme II radar by  reducing not only the data to be forwarded to the fusion center but also the effective interference from the communication system when spectrum sharing is considered.

\begin{figure}[htb]
\vspace{-2mm}
  \centering
  \subfigure{
  \includegraphics[width=4.3cm]{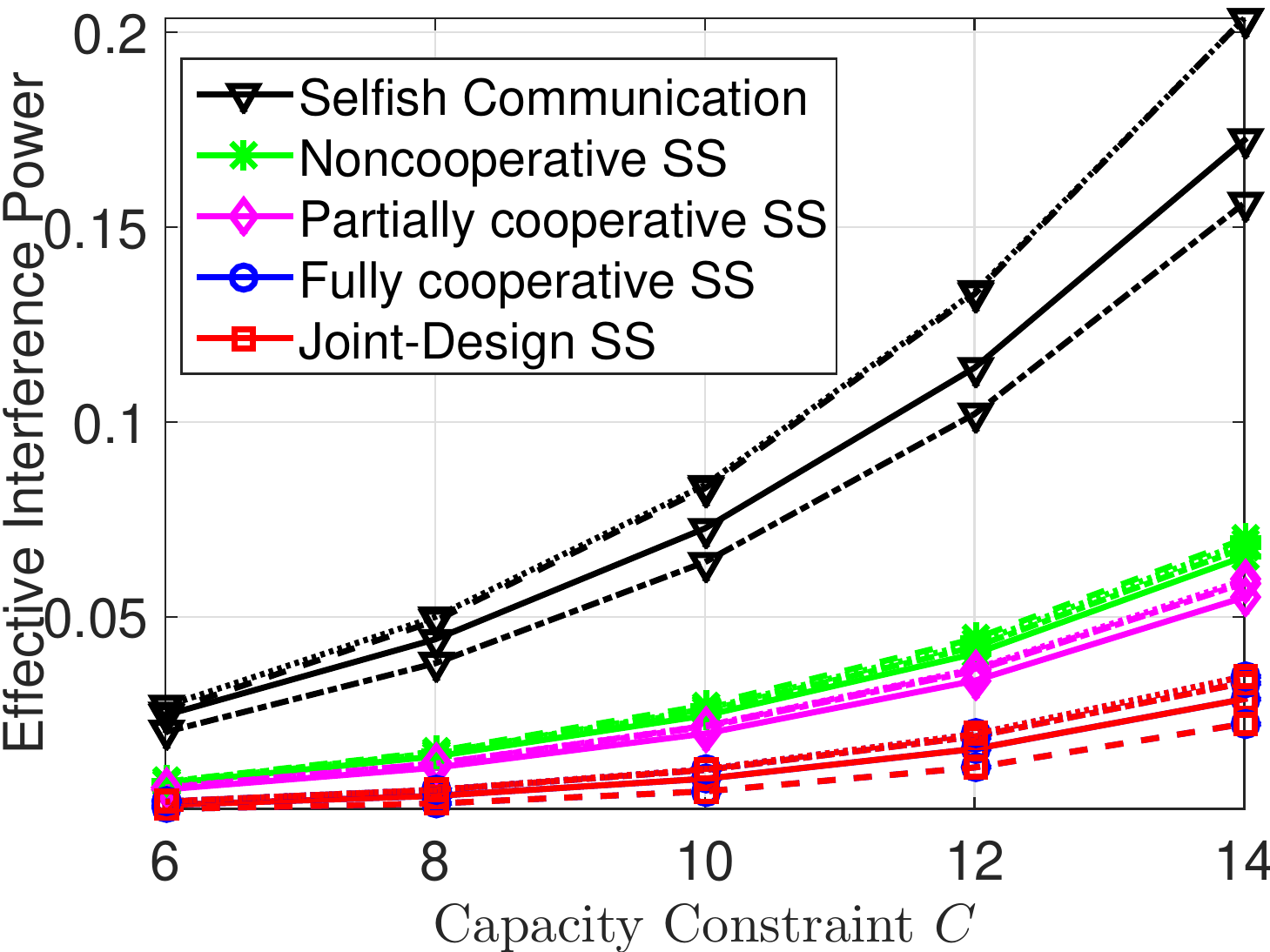}
  }
  \hspace{-5mm}
  \subfigure{
  \includegraphics[width=4.3cm]{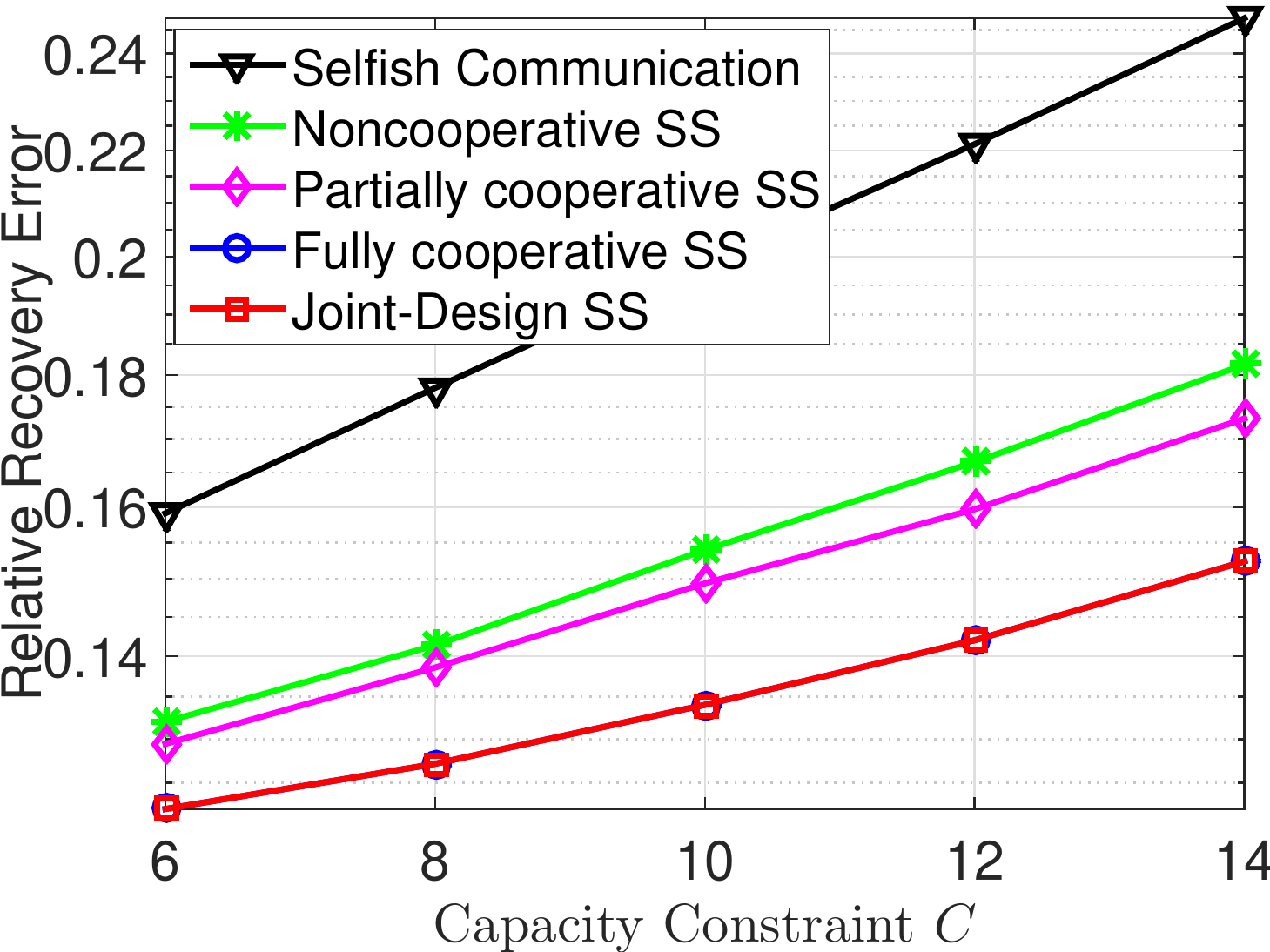}
  }
  \vspace{-5mm}
  \caption{Spectrum sharing with the Scheme II radar under different capacity constraints $C$. $M_{t,R}=4, M_{r,R}=M_{t,C}=8,M_{r,C}=4$.} \label{fig:SchemeIIcapacity1}
\end{figure}

\begin{figure}[htb]
\vspace{-2mm}
  \centering
  \subfigure{
  \includegraphics[width=4.3cm]{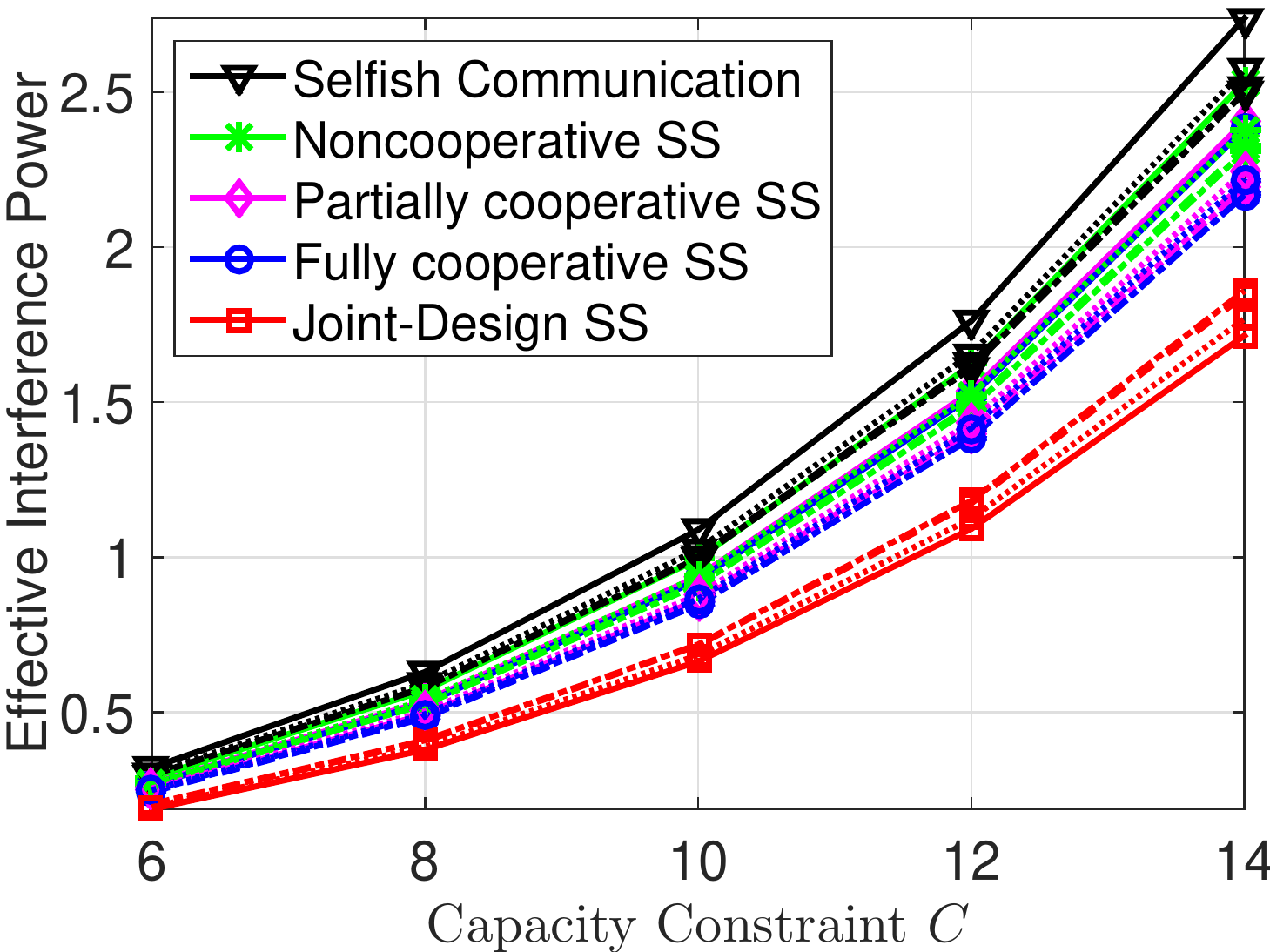}
  }
  \hspace{-5mm}
  \subfigure{
  \includegraphics[width=4.3cm]{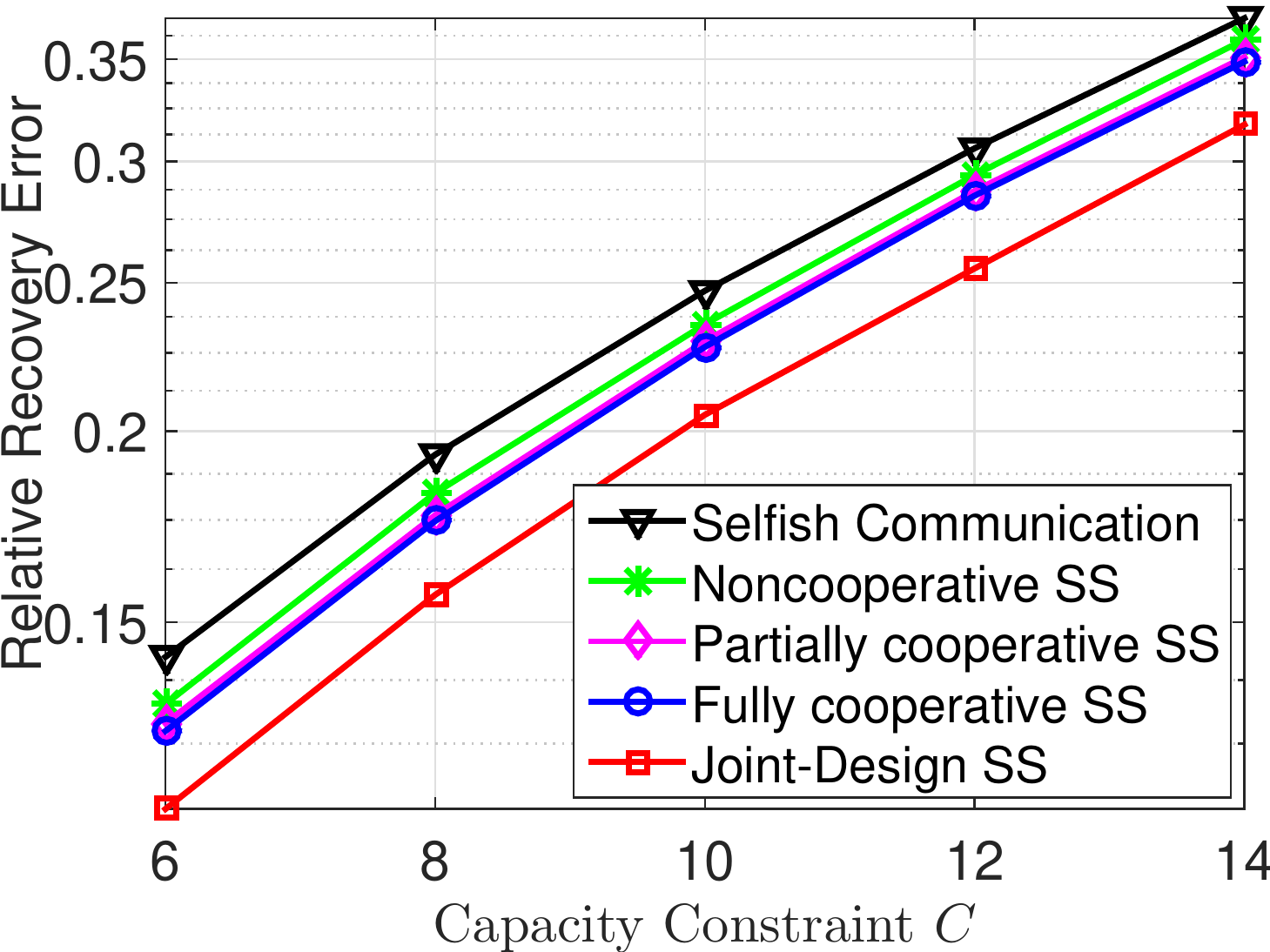}
  }
  \vspace{-5mm}
  \caption{Spectrum sharing with the Scheme II radar under different capacity constraints $C$. $M_{t,R}=16,M_{r,R}=32, M_{t,C}=M_{r,C}=4$. } \label{fig:SchemeIIcapacity2}
\end{figure}
\subsubsection{Performance under different capacity constraints}
In this simulation, the Scheme II radar has fixed sub-sampling rate $p=0.5$, while the communication capacity constant $C$ in (\ref{eqn:constrSR}) varies from $6$ to $14$ bits/symbol. Similarly, two scenarios are considered. The results for the scenario of $M_{t,R}=4, M_{r,R}=M_{t,C}=8,M_{r,C}=4$ are plotted in Fig. \ref{fig:SchemeIIcapacity1}, and the scenario of $M_{t,R}=16,M_{r,R}=32, M_{t,C}=M_{r,C}=4$ in Fig. \ref{fig:SchemeIIcapacity2}. The ``selfish communication" method is inferior to all proposed SS methods. One can also observe that the joint-design SS method always achieves considerably smaller EIP and relative recovery errors than the partially and noncooperative SS methods.

\begin{figure}
  \centering
  \includegraphics[width=6cm]{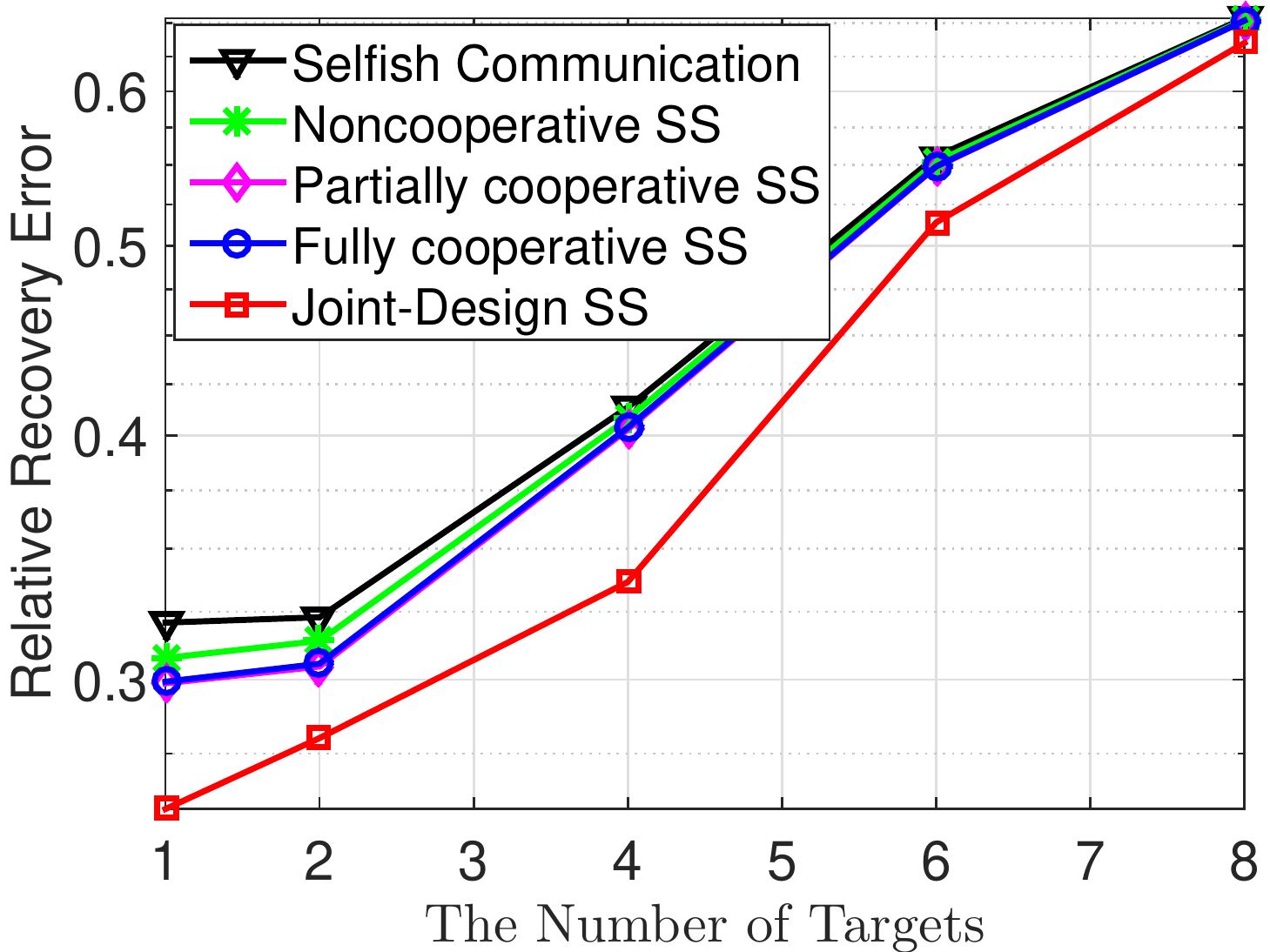}\\
  \caption{Spectrum sharing with the Scheme II radar when multiple targets present. $M_{t,R}=16,M_{r,R}=32, M_{t,C}=M_{r,C}=4$, $p=0.5$ and $C=12$ bits/symbol.}\label{fig:SchemeIIntarget}
\end{figure}

\subsubsection{Performance under different number of targets}
In this simulation, we fix $p=0.5$ and $C=12$ and evaluate the performance when multiple targets are present. The target reflection coefficients are designed such that the target returns have fixed power, independent the number of targets. Again, we observe that the EIPs of different methods remain constant for different number of targets. The results of the relative recovery error are shown in Fig. \ref{fig:SchemeIIntarget}.
The proposed joint-design SS method can work effectively for the Scheme II radar when a moderate number of targets are present and sufficient samples are used for matrix completion.

\begin{figure}[htb]
\vspace{-2mm}
  \centering
  \subfigure{
  \includegraphics[width=4.3cm]{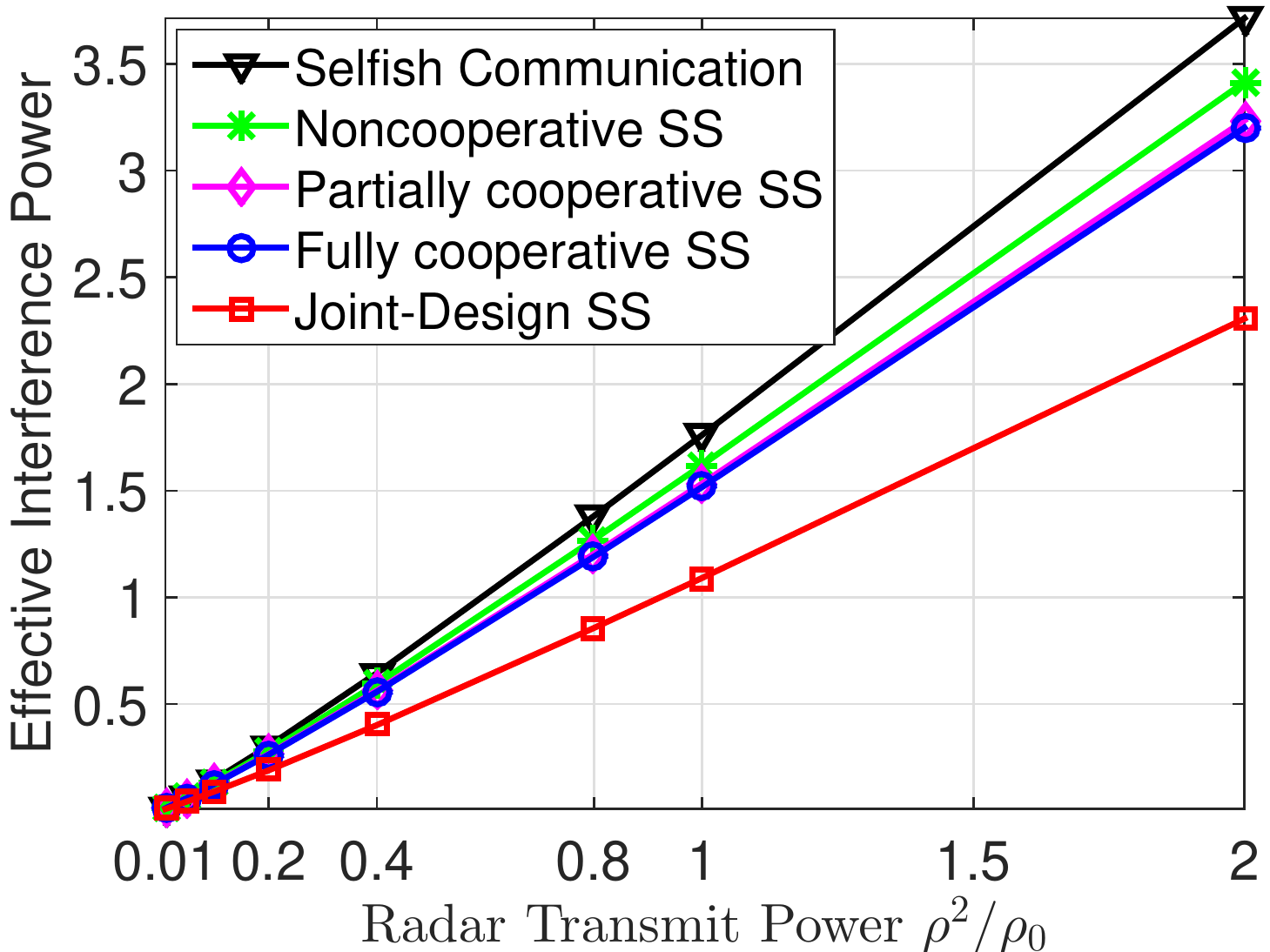}
  }
  \hspace{-5mm}
  \subfigure{
  \includegraphics[width=4.3cm]{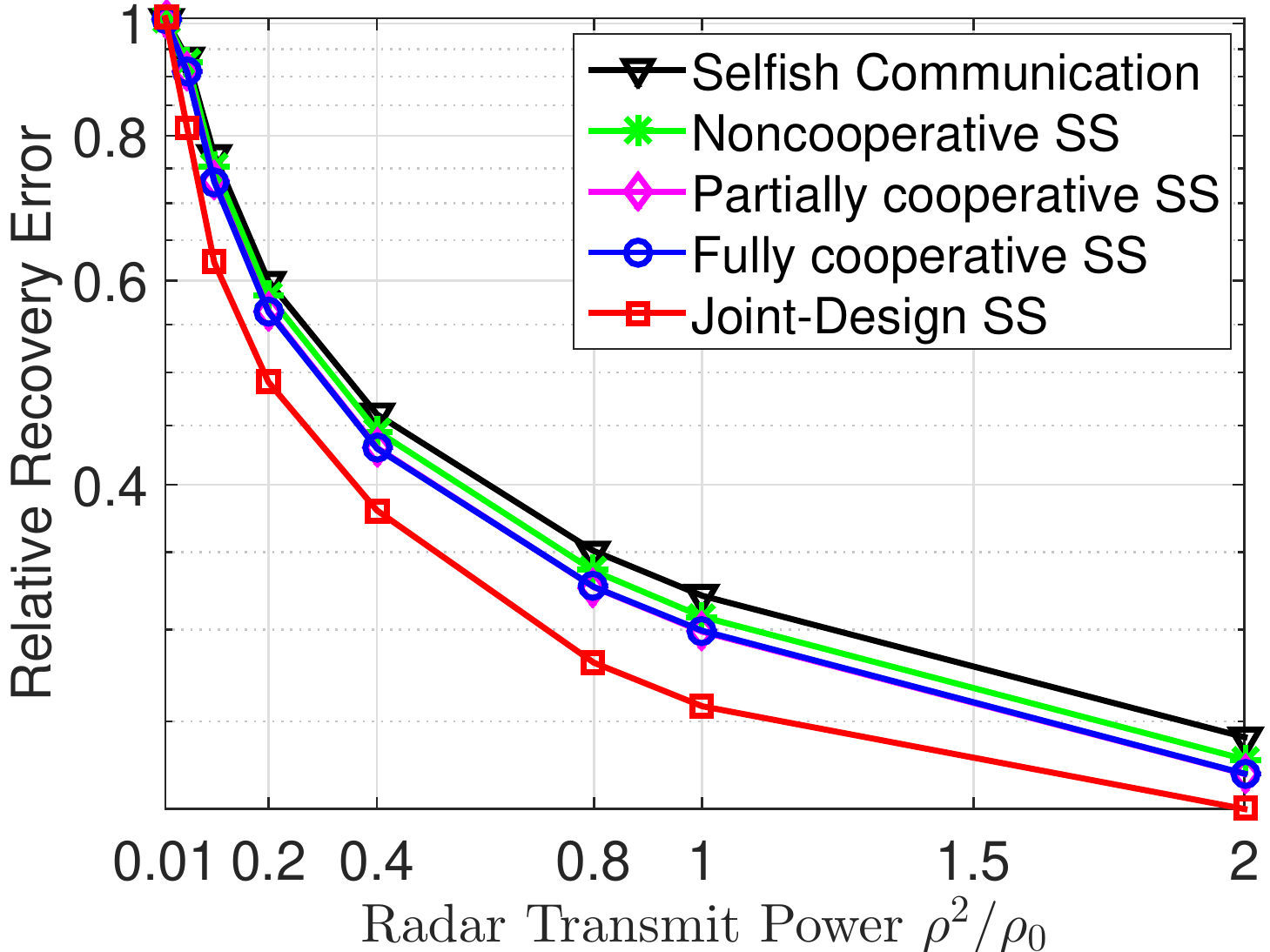}
  }
  \vspace{-5mm}
  \caption{Spectrum sharing with the Scheme II radar under different levels of radar TX power. $M_{t,R}=16, M_{r,R}=32,M_{t,C}=M_{r,C}=4$.} \label{fig:SchemeIITXpower}
\end{figure}
\subsubsection{Performance under different levels of radar TX power}
In this simulation, we evaluate the effect of radar TX power $\rho_2$, while fixing $p=0.5$, $C=12$ and the target number to be $1$. Fig. \ref{fig:SchemeIITXpower} shows the results of EIP and relative recovery errors for $M_{t,R}=16, M_{r,R}=32,M_{t,C}=M_{r,C}=4$. We can see that the joint-design SS method greatly outperforms the other three methods. When the radar TX power increases, the performance gap between the joint-design SS method and the other three methods becomes larger.

\begin{figure}[htb]
\vspace{-2mm}
  \centering
  \subfigure{
  \includegraphics[width=4.3cm]{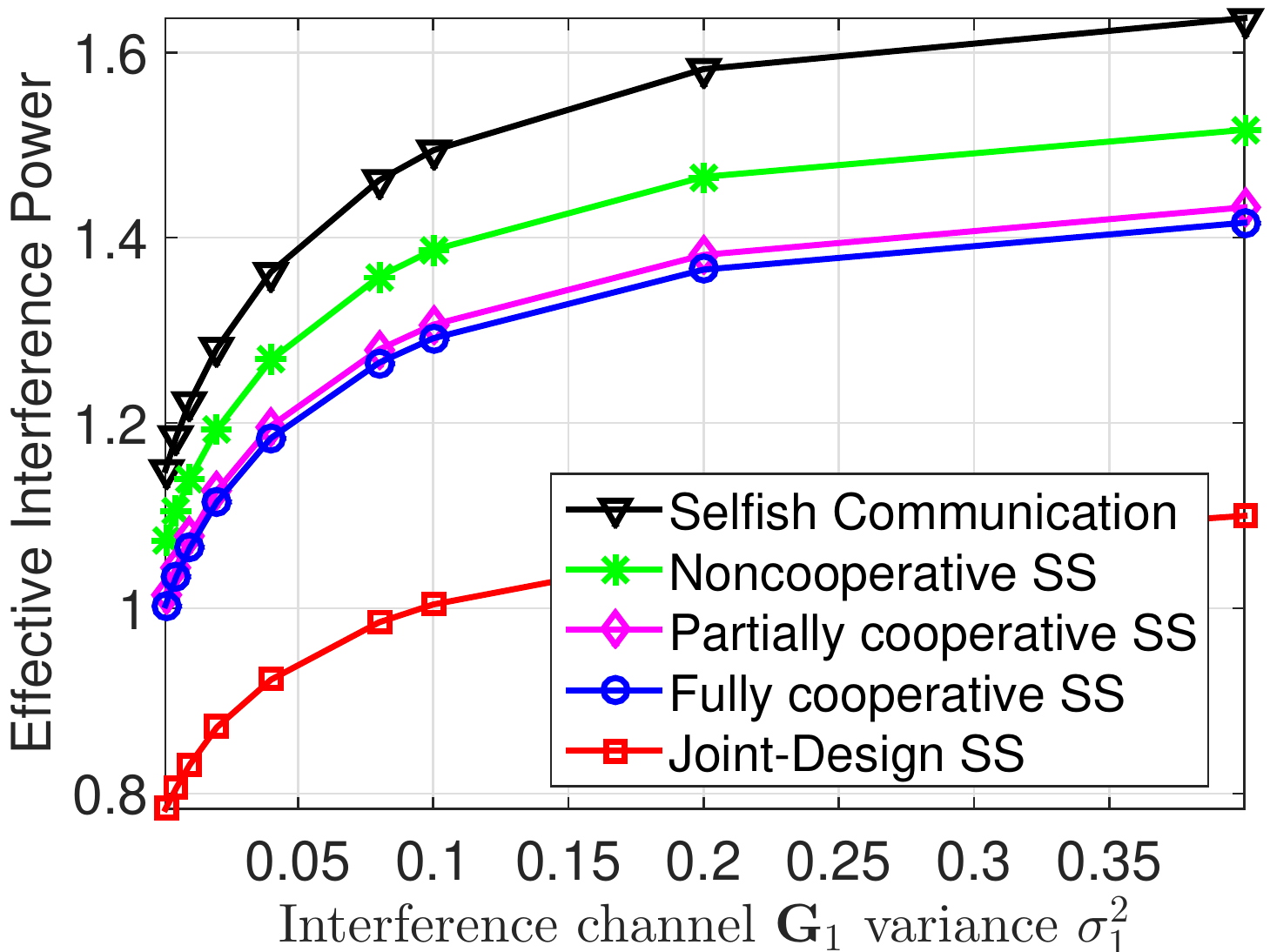}
  }
  \hspace{-5mm}
  \subfigure{
  \includegraphics[width=4.3cm]{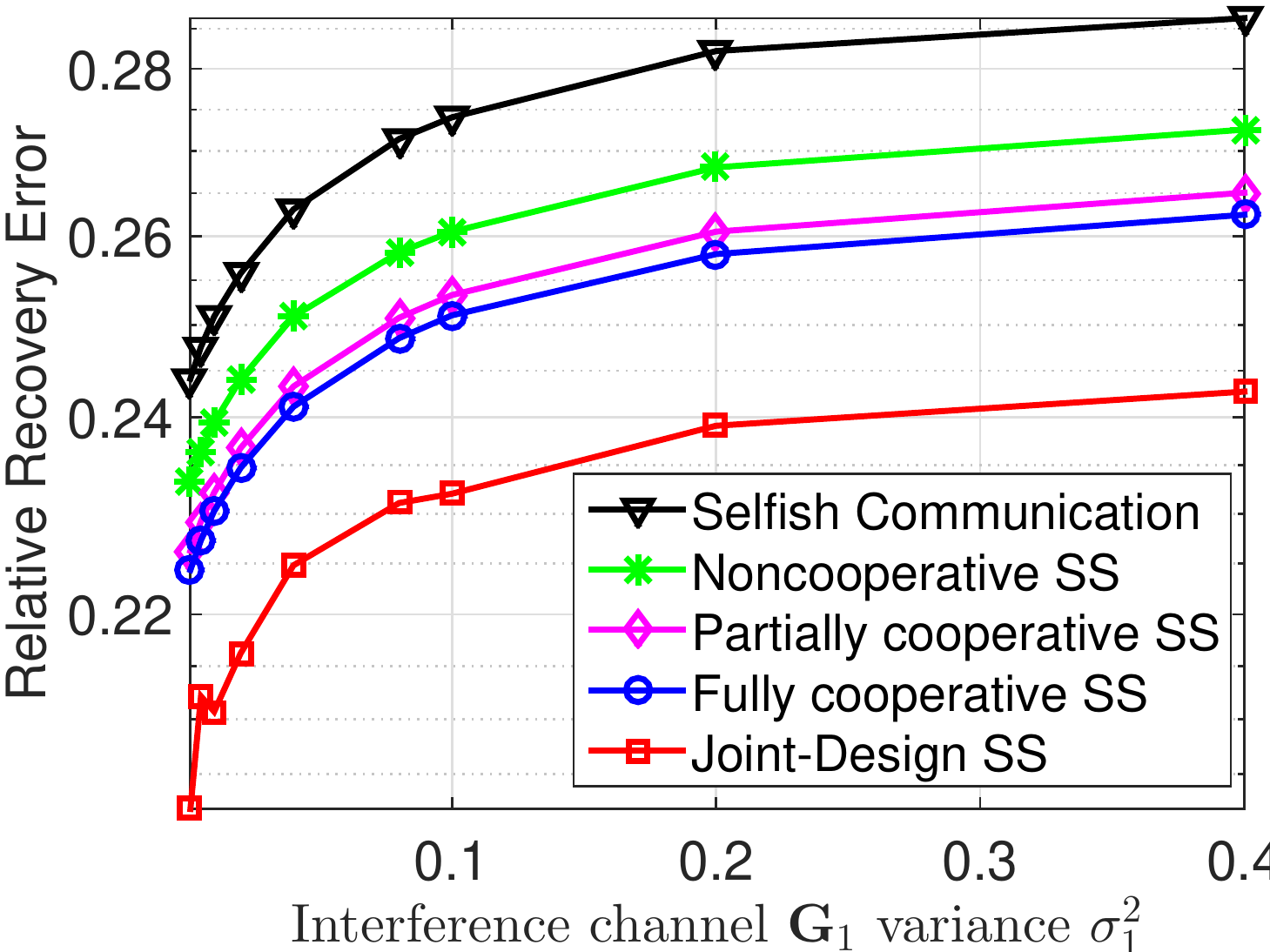}
  }
  \vspace{-5mm}
  \caption{Spectrum sharing with the Scheme II radar under different channel variance $\sigma_1^2$ for the interference channel $\mathbf{G}_1$. $M_{t,R}=16, M_{r,R}=32,M_{t,C}=M_{r,C}=4$.} \label{fig:SchemeIIG1var}
\end{figure}
\subsubsection{Performance under different interference channel strength}
In this simulation, we evaluate the effect the interference channel $\mathbf{G}_1$ with different $\sigma_1^2$, while fixing $p=0.5$, $C=12$ and the target number to be $1$. Fig. \ref{fig:SchemeIIG1var} shows the results of EIP and relative recovery errors for $M_{t,R}=16, M_{r,R}=32,M_{t,C}=M_{r,C}=4$. For all the SS methods, when the interference channel $\mathbf{G}_1$ gets stronger, the communication TX increases its transmit power in order to satisfy the capacity constraint. Therefore, the EIP and the relative recovery errors increases with the variance $\sigma_1^2$. We also observe that the joint-design SS method greatly outperforms the other three methods.

\section{Conclusions} \label{sec:conclusion}
This paper has considered  spectrum sharing (SS) between a MIMO communication system and a MIMO-MC radar system using two different schemes, Scheme I and Scheme II. In order to reduce the effective interference power (EIP) at radar RX antennas, we have first proposed two communication transmit covariance matrix design strategies, namely, a noncooperative and a cooperative SS method, for both Schemes I and II.
Our theoretical results guarantee that the cooperative approach can effectively reduce the EIP to a larger extent as compared to the noncooperative approach.
Second, we have proposed a joint design of the communication transmit covariance matrix and the radar sampling scheme to further reduce the EIP. The EIP reduction and the matrix completion recovery errors have been evaluated under various system parameters.
We have shown that both Scheme I and II radars enjoy reduced interference by the communication system when the proposed SS methods are considered.
In particular for Scheme I, the sparse sampling at the radar RX antennas can reduce the rank of the interference channel. Our simulations have confirmed that significant EIP reduction is achieved by the cooperative approach; this is because in that approach, the communication power is allocated to directions in the null space of the effective interference channel.
When the number of radar RX antennas is much larger than that of the communication TX antennas, the cooperative approach outperforms the noncooperative one only marginally. Our simulations have suggested that for both Schemes I and II, the joint-design SS method can achieve much smaller EIP and relative recovery errors than other methods when the number of radar TX and RX antennas is moderately large.

\appendices
\section{Proof of Lemma \ref{lemma:EIPII1}} \label{appen:EIPII1}
Let us first look at the interference power at the $m$-th radar receive antenna, $m\in\mathbb{N}_{M_{r,R}}^+$,
\begin{equation}
\begin{aligned}
\beta_m&\triangleq \mathbb{E}\left\{ \mathbf{g}_m^H\mathbf{X}\mathbf{\Lambda}_2\mathbf{S}_m^H \mathbf{S}_m\mathbf{\Lambda}_2^H\mathbf{X}^H \mathbf{g}_m \right\} \\
&= \mathbb{E}\left\{\text{Tr} \left(\mathbf{g}_m^H\mathbf{X}\mathbf{\Lambda}_2\mathbf{S}_m^H \mathbf{S}_m\mathbf{\Lambda}_2^H\mathbf{X}^H \mathbf{g}_m  \right) \right\} \\
&= \mathbb{E}\Bigl\{\text{Tr} \Bigl(\underbrace{\mathbf{S}_m^H \mathbf{S}_m}_{\triangleq\mathbf{A}_m} \mathbf{\Lambda}_2^H\mathbf{X}^H \underbrace{\mathbf{g}_m \mathbf{g}_m^H}_{\triangleq\mathbf{B}_m}\mathbf{X} \mathbf{\Lambda}_2  \Bigr) \Bigr\} \\
&= \text{Tr} \left(\mathbf{A}_m \mathbb{E}\left\{\mathbf{\Lambda}_2^H\mathbf{X}^H \mathbf{B}_m \mathbf{X} \mathbf{\Lambda}_2\right\} \right) \\
&  \triangleq \text{Tr} \left(\mathbf{A}_m \mathbf{C}_m \right),
\end{aligned}
\end{equation}
The entry on the $k$-th row and $l$-th column of $\mathbf{C}_m$ equals
$$
\begin{aligned}
\mathbf{C}_m^{kl}&=\mathbb{E}\left\{e^{-j\alpha_{2k}}\mathbf{x}^H(k) \mathbf{B}_m \mathbf{x}(l)e^{j\alpha_{2l}}\right\} \\
&=\mathbb{E}\left\{e^{j(\alpha_{2l}-\alpha_{2k})}\text{Tr}\left(\mathbf{x}^H(k) \mathbf{B}_m \mathbf{x}(l)\right) \right\} \\
&=\mathbb{E}\left\{e^{j(\alpha_{2l}-\alpha_{2k})}\text{Tr}\left( \mathbf{B}_m \mathbf{x}(l)\mathbf{x}^H(k) \right) \right\} \\
&=e^{j(\alpha_{2l}-\alpha_{2k})}\text{Tr}\left( \mathbf{B}_m \mathbb{E}\left\{\mathbf{x}(l)\mathbf{x}^H(k) \right\} \right) \\
&=\delta_{kl}\text{Tr}\left( \mathbf{B}_m \mathbf{R}_{xl}\right),
\end{aligned}
$$
where $\delta_{kl}$ is the Kronecker delta function with value $1$ if $k=l$, and value $0$ otherwise. Therefore, matrix $\mathbf{C}_m$ is diagonal with $\text{Tr}\left(\mathbf{B}_m\mathbf{R}_{xl}\right)$ as its $l$-th entry. The interference power at the $m$-th radar receive antenna can be expressed as
\begin{equation} \label{eqn:intpowerIIm}
\begin{aligned}
\beta_m &= \text{Tr} \left(\mathbf{A}_m \text{diag}\left(\text{Tr}\left( \mathbf{B}_m \mathbf{R}_{x1}\right),\dots,\text{Tr}\left( \mathbf{B}_m \mathbf{R}_{xL}\right) \right)\right) \\
&=\sum\nolimits_{l=1}^L a_{l \xi_m} \text{Tr}\left( \mathbf{B}_m \mathbf{R}_{xl}\right)\\
&=\sum\nolimits_{l=1}^L a_{l \xi_m} \text{Tr}\left( \mathbf{g}^H_m \mathbf{R}_{xl} \mathbf{g}_m  \right) \\
&=\sum\nolimits_{l=1}^L a_{l \xi_m} \mathbf{g}^H_m \mathbf{R}_{xl} \mathbf{g}_m   \\
\end{aligned}
\end{equation}
where $a_{l \xi_m}$ denotes the $l$-th diagonal entry of $\mathbf{A}_m$ as defined in (\ref{eqn:EIPII1}).  Substituting $\beta_m$ in (\ref{eqn:intpowerIIm}) into (\ref{eqn:interfII}), we obtain the expression of the effective interference power to the Scheme II radar as follows
$$
\begin{aligned}
\text{EIP}_{II}&=\sum_{m=1}^{M_{r,R}}\beta_m = \sum_{m=1}^{M_{r,R}} \sum_{l=1}^L a_{l \xi_m } \mathbf{g}^H_m \mathbf{R}_{xl} \mathbf{g}_m  \\
&= \sum_{l=1}^L \sum_{m=1}^{M_{r,R}} a_{l \xi_m }  \mathbf{g}^H_m \mathbf{R}_{xl} \mathbf{g}_m  \\
&= \sum_{l=1}^L \text{Tr}\left( \mathbf{\Delta}_{l\xi}  \mathbf{G}_2 \mathbf{R}_{xl} \mathbf{G}^H_2 \right),
\end{aligned}
$$
which completes the proof.

\section{Proof of Lemma \ref{lemma:EIPII2}} \label{appen:EIPII2}
The $m$-th diagonal element of $\mathbf{\Delta}_{l\xi}$ can be expressed as
\begin{equation}
\begin{aligned}
a_{l\xi_m} &= \mathbf{s}_m^H(l)\mathbf{s}_m(l) = [\Delta_{\xi_m}\mathbf{s}(l)]^H\Delta_{\xi_m}\mathbf{s}(l)\\
& = \mathbf{s}^H(l)\Delta_{\xi_m}\mathbf{s}(l) = \text{Tr}\left(\mathbf{s}^H(l)\Delta_{\xi_m}\mathbf{s}(l)\right) \\
& = \text{Tr}\left(\Delta_{\xi_m}\mathbf{s}(l)\mathbf{s}^H(l)\right) = \mathbf{\Omega}_{m\cdot} \left(\mathbf{s}(l)\circ \mathbf{s}(l)\right),
\end{aligned}
\end{equation}
where $\Delta_{\xi_m} = \text{diag}(\mathbf{\Omega}_{m\cdot})$ and $\mathbf{\Omega}_{m\cdot}$ denotes the $m$-th row of $\mathbf{\Omega}_{II}$. Thus we have
$
\mathbf{\Delta}_{l\xi}= \text{diag}\left(\mathbf{\Omega}_{II} \left(\mathbf{s}(l)\circ \mathbf{s}(l)\right)\right)
$.
Substituting this into (\ref{eqn:EIPII1}) gives
\begin{equation}
\begin{aligned}
&\text{EIP}_{II} = \sum_{l=1}^L \text{Tr}\left( \text{diag}\left(\mathbf{\Omega}_{II} \left(\mathbf{s}(l)\circ \mathbf{s}(l)\right)\right) \mathbf{G}_{2}\mathbf{R}_{xl}\mathbf{G}^H_{2}\right) \\
& = \text{Tr}\left\{ \left[\mathbf{\Omega}_{II}(\mathbf{s}(1)\circ \mathbf{s}(1)),\dots,\mathbf{\Omega}_{II}(\mathbf{s}(L)\circ \mathbf{s}(L))\right]^T \mathbf{Q} \right\} \\
& = \text{Tr}\left\{ [\mathbf{\Omega}_{II}(\mathbf{S}\circ\mathbf{S})]^T \mathbf{Q}\right\} = \text{Tr}\left( \mathbf{\Omega}_{II}^T\mathbf{Q}(\mathbf{S}\circ\mathbf{S})^T\right).
\end{aligned}
\end{equation}
Lemma \ref{lemma:EIPII2} is proved.


\begin{thebibliography}{1}

\bibitem{LiBo15}
B.~Li and A.~P. Petropulu,
\newblock ``Spectrum sharing between matrix completion based {MIMO} radars and
  a {MIMO} communication system,''
\newblock in {\em IEEE International Conference on Acoustics, Speech and Signal
  Processing}, April 2015.

\bibitem{Web13}
``Radar spectrum regulatory overview,''
\newblock {\em [online] 2013}, http://
  www.darpa.mil/WorkArea/DownloadAsset.aspx?id=2147486331, (Accessed: July
  2014).

\bibitem{Sanders12}
F.~H. Sanders, R.~L. Sole, J.~E. Carroll, G.~S. Secrest, and T.~L. Allmon,
\newblock ``Analysis and resolution of rf interference to radars operating in
  the band 2700--2900 {MHz} from broadband communication transmitters,''
\newblock {\em US Dept. of Commerce, Tech. Rep. NTIA Technical Report
  TR-13-490}, 2012.

\bibitem{Lackpour11}
A.~Lackpour, M.~Luddy, and J.~Winters,
\newblock ``Overview of interference mitigation techniques between wimax
  networks and ground based radar,''
\newblock in {\em 20th Annual Wireless and Optical Communications Conference},
  April 2011, pp. 1--5.

\bibitem{Sodagari12}
S.~Sodagari, A.~Khawar, T.~C. Clancy, and R.~McGwier,
\newblock ``A projection based approach for radar and telecommunication systems
  coexistence,''
\newblock in {\em IEEE Global Telecommunication Conference}, Dec 2012, pp.
  5010--5014.

\bibitem{Babaei13}
A.~Babaei, W.~H. Tranter, and T.~Bose,
\newblock ``A practical precoding approach for radar/communications spectrum
  sharing,''
\newblock in {\em 8th International Conference on Cognitive Radio Oriented
  Wireless Networks}, July 2013, pp. 13--18.

\bibitem{Deng13}
H.~Deng and B.~Himed,
\newblock ``Interference mitigation processing for spectrum-sharing between
  radar and wireless communications systems,''
\newblock {\em IEEE Transactions on Aerospace and Electronic Systems}, vol. 49,
  no. 3, pp. 1911--1919, July 2013.

\bibitem{Amuru13}
S.~Amuru, R.~M. Buehrer, R.~Tandon, and S.~Sodagari,
\newblock ``{MIMO} radar waveform design to support spectrum sharing,''
\newblock in {\em IEEE Military Communication Conference}, Nov 2013, pp.
  1535--1540.

\bibitem{Khawar14}
A.~Khawar, A.~Abdel-Hadi, and T.~C. Clancy,
\newblock ``Spectrum sharing between s-band radar and lte cellular system: A
  spatial approach,''
\newblock in {\em IEEE International Symposium on Dynamic Spectrum Access
  Networks,}, April 2014, pp. 7--14.

\bibitem{Sun13}
S.~Sun, A.~P. Petropulu, and W.~U. Bajwa,
\newblock ``Target estimation in colocated {MIMO} radar via matrix
  completion,''
\newblock in {\em IEEE International Conference on Acoustics, Speech and Signal
  Processing}, May 2013, pp. 4144--4148.

\bibitem{Denis14}
D.~S. Kalogerias and A.~P. Petropulu,
\newblock ``Matrix completion in colocated {MIMO} radar: Recoverability, bounds
  and theoretical guarantees,''
\newblock {\em IEEE Transactions on Signal Processing}, vol. 62, no. 2, pp.
  309--321, Jan 2014.

\bibitem{Sun14}
S.~Sun, W.~U. Bajwa, and A.~P. Petropulu,
\newblock ``{MIMO-MC} radar: A {MIMO} radar approach based on matrix
  completion,''
\newblock {\em accpted IEEE Transactions on Aerospace and Electronic Systems},
  2014.

\bibitem{Yu10}
Y.~Yu, A.~P. Petropulu, and H.~V. Poor,
\newblock ``{MIMO} radar using compressive sampling,''
\newblock {\em IEEE Journal of Selected Topics in Signal Processing}, vol. 4,
  no. 1, pp. 146--163, Feb 2010.

\bibitem{Srinadh14}
S.~Bhojanapalli and P.~Jain,
\newblock ``Universal matrix completion,''
\newblock in {\em Proceedings of The 31st International Conference on Machine
  Learning}, 2014, pp. 1881--1889.

\bibitem{Krim96}
H.~Krim and M.~Viberg,
\newblock ``Two decades of array signal processing research: the parametric
  approach,''
\newblock {\em IEEE Signal Processing Magazine}, vol. 13, no. 4, pp. 67--94,
  1996.

\bibitem{Candes10}
E.~J. Candes and Y.~Plan,
\newblock ``Matrix completion with noise,''
\newblock {\em Proceedings of the IEEE}, vol. 98, no. 6, pp. 925--936, June
  2010.

\bibitem{Zhang08}
R.~Zhang and Y.~Liang,
\newblock ``Exploiting multi-antennas for opportunistic spectrum sharing in
  cognitive radio networks,''
\newblock {\em IEEE Journal of Selected Topics in Signal Processing}, vol. 2,
  no. 1, pp. 88--102, Feb 2008.

\bibitem{Zhang10}
R.~Zhang, Y.~Liang, and S.~Cui,
\newblock ``Dynamic resource allocation in cognitive radio networks,''
\newblock {\em IEEE Signal Processing Magazine}, vol. 27, no. 3, pp. 102--114,
  May 2010.

\bibitem{Gardner05}
F.~M. Gardner,
\newblock {\em Phaselock techniques},
\newblock John Wiley \& Sons, 2005.

\bibitem{Poore01}
R.~Poore,
\newblock ``Phase noise and jitter,''
\newblock {\em Agilent EEs of EDA}, 2001.

\bibitem{Mudumbai07}
R.~Mudumbai, G.~Barriac, and U.~Madhow,
\newblock ``On the feasibility of distributed beamforming in wireless
  networks,''
\newblock {\em IEEE Transactions on Wireless Communications}, vol. 6, no. 5,
  pp. 1754--1763, 2007.

\bibitem{Razavi96}
B.~Razavi,
\newblock ``A study of phase noise in {CMOS} oscillators,''
\newblock {\em IEEE Journal of Solid-State Circuits}, vol. 31, no. 3, pp.
  331--343, 1996.

\bibitem{cpc09}
M.~Filo, A.~Hossain, A.~R. Biswas, and R.~Piesiewicz,
\newblock ``Cognitive pilot channel: Enabler for radio systems coexistence,''
\newblock in {\em 2nd International Workshop on Cognitive Radio and Advanced
  Spectrum Management}, May 2009, pp. 17--23.

\bibitem{Taubock12}
G.~Taubock,
\newblock ``Complex-valued random vectors and channels: entropy, divergence,
  and capacity,''
\newblock {\em IEEE Transactions on Information Theory}, vol. 58, no. 5, pp.
  2729--2744, 2012.

\bibitem{Goldsmith03}
A.~Goldsmith, S.~A. Jafar, N.~Jindal, and S.~Vishwanath,
\newblock ``Capacity limits of {MIMO} channels,''
\newblock {\em IEEE Journal on Selected Areas in Communications}, vol. 21, no.
  5, pp. 684--702, 2003.

\bibitem{Tse05}
D.~Tse and P.~Viswanath,
\newblock {\em Fundamentals of wireless communication},
\newblock Cambridge university press, 2005.

\bibitem{Kim11}
S.~J. Kim and G.~B. Giannakis,
\newblock ``Optimal resource allocation for {MIMO} ad hoc cognitive radio
  networks,''
\newblock {\em IEEE Transactions on Information Theory}, vol. 57, no. 5, pp.
  3117--3131, May 2011.

\bibitem{Hungarian}
H.~W. Kuhn,
\newblock ``The {Hungarian} method for the assignment problem,''
\newblock {\em Naval research logistics quarterly}, vol. 2, no. 1-2, pp.
  83--97, 1955.

\bibitem{TFOCUS}
S.~R. Becker, E.~J. Cand{\`e}s, and M.~C. Grant,
\newblock ``Templates for convex cone problems with applications to sparse
  signal recovery,''
\newblock {\em Mathematical Programming Computation}, vol. 3, no. 3, pp.
  165--218, 2011.

\end{thebibliography}
\end{document}